\newtheorem{theorem}{Theorem}%[section]
\newtheorem{lemma}[theorem]{Lemma}
\newtheorem{corollary}[theorem]{Corollary}
\newtheorem{proposition}[theorem]{Proposition}
\newtheorem{example}{Example}
\newtheorem{definition}{Definition}%[section]
\newtheorem{assumption}{Assumption}
\newcommand{\argmin}{\operatornamewithlimits{argmin}}
\newcommand{\com}[1]{\textbf{\color{red} (COMMENT: #1)}} %comment of the text
\newcommand{\response}[1]{\textbf{\color{magenta} (RESPONSE: #1)}} %response to comment
\newcommand{\com}[1]{}
\newcommand{\response}[1]{}   
\begin{document}
\title{Competition of Wireless Providers for Atomic Users}
\author{\authorblockN{Vojislav Gaji\'{c}\IEEEauthorrefmark{1}, Jianwei Huang\IEEEauthorrefmark{2}, and Bixio Rimoldi\IEEEauthorrefmark{1}}\\\authorblockA{\begin{tabular}{cc} \IEEEauthorrefmark{1}Mobile Communications Lab  &             \IEEEauthorrefmark{2}Network Communications and Economics Lab  \\ School of Computer and Communication Sciences & Department  of Information Engineering\\ Ecole Polytechnique F\'ed\'erale de Lausanne & The Chinese University of Hong Kong \\ Lausanne, Switzerland & Shatin, Hong Kong  \\  \{vojislav.gajic, bixio.rimoldi\}@epfl.ch &  jwhuang@ie.cuhk.edu.hk\end{tabular}}}

\maketitle

\begin{abstract}

We study a problem where wireless service providers compete for heterogenous wireless users. The users differ in their utility functions as well as in the perceived quality of service of individual providers. We model the interaction of an arbitrary number of providers and users as a two-stage multi-leader-follower game. We prove existence and uniqueness of the subgame perfect Nash equilibrium for a generic channel model and a wide class of users' utility functions. We show that the competition of resource providers leads to a globally optimal outcome under mild technical conditions. Most users will purchase the resource from only one provider at the unique subgame perfect equilibrium. The number of users who connect to multiple providers at the equilibrium is always smaller than the number of providers. We also present a decentralized algorithm that globally converges to the unique system equilibrium with only local information under mild conditions on the update rates.
\end{abstract}
\begin{IEEEkeywords}
Game theory, provider competition, wireless network, pricing
\end{IEEEkeywords}

\section{Introduction}

Due to the deregulation of the telecommunication industry, future wireless users are likely to freely choose a provider (or providers) offering the best tradeoff of parameters. {This is already happening with some public Wi-Fi connections, where users can connect to wireless access points of their choice, with usage-based payments and no contracts.  Despite the common presence of a free public wi-fi network, some users may still choose more expensive providers who offer better quality of service. Another example of the deregulation trend is the analog television (UHF) spectrum which was recently open for unregulated use in the US \cite{FCC:2008uq}. %Using WhiteFi technology, an access point is able to dynamically obtain wireless resource over the UHF spectrum and connect to wireless users \cite{Bahl:2009fk}.

In this work, we consider a situation where wireless service providers compete to sell a limited amount of wireless resources (e.g., frequency bands, time slots, transmission power) to users who are free to choose their provider(s). We investigate how providers set prices for the resource, and how users choose the amount of resource they purchase and from which providers. The focus of our study is to characterize the outcome of this interaction. We consider the general case where different users have different utility functions and experience different channel conditions to different service providers. 

We model the user-provider interaction as a multi-leader-follower game (see \cite{Leyffer:2005pi,Wang:2008jt}). The providers announce the wireless resource prices in the first stage, and the users announce their demand for the resource in the second stage. A user may purchase resource from a provider with a poor channel if the price of the resource is low, or conversely from an expensive one if the channel is strong. The providers select their prices to maximize their revenues, keeping in mind the impact of their prices on the demand of the users. 

The contributions of our work are as follows:

\begin{itemize}
\item \emph{General Heterogeneous Wireless Network Model}: We study a general {network} model that captures the heterogeneity of wireless users and service providers. The users have different utility functions, the providers have different resource constraints, the channel gains between users and providers are independent and arbitrarily distributed, and the numbers of users and providers can be arbitrary. 

\item \emph{Unique Socially Optimal Allocation}: We first study the problem of maximizing social welfare under a fairly general utility function model. {We show that when channel parameters are randomly drawn from continuous distributions, there exists a \emph{unique} optimal solution to the problem with probability 1, despite the non-strict convexity of the optimization problem.} 

\item \emph{Existence, Uniqueness, and Zero Efficiency Loss of Equilibrium}: We further prove existence and uniqueness of the subgame perfect Nash equilibrium in the two-stage game, under an easily verifiable sufficient condition on the users' utility functions. Moreover, we show that the unique equilibrium maximizes the social welfare, despite the selfish nature of the providers and users. 

\item \emph{Primal-Dual Algorithm converging to Equilibrium}: We provide a decentralized algorithm that results in an equilibrium of the provider competition game. The participants only need local information during the execution of this algorithm. Providers only need to know the demand of the users, while users only need to consider the prices given by the providers.
\end{itemize}

We begin by describing the provider competition model in Section \ref{sec:model}. In Section \ref{sec:socOpt} we discuss the socially optimal resource allocation and in Section \ref{sec:oligopoly} we analyze the provider competition game. We define the primal-dual update algorithm and prove its convergence in Section \ref{sec:primal-dual}. We provide numerical results and dicussion in Section \ref{sec:discussion}. We discuss the related work in Section \ref{sec:related} and conclude in Section \ref{sec:conclusion}. 

\section{Problem Formulation}

\label{sec:model}
We consider a set $\mathcal{J}=\{1,\ldots,J\}$ of service providers and a set $\mathcal{I}=\{1,\ldots,I\}$ of users. Provider $j\in\mathcal{J}$ maximizes its revenue by selling up to $Q_j$ amount of resource to the users. A user $i \in \mathcal{I}$ maximizes its payoff by purchasing resources from one or more providers. The communication can be both downlink or uplink, as long as users do not interfere with each other by using orthogonal resources. We model the interaction as a multi-leader-follower game (see \cite{Leyffer:2005pi,Wang:2008jt}), where providers are the leaders and users are the followers. We called this game the provider competition game. We assume that the users play the game during the coherence time of their channels. {This can be reasonable for a quasi-static network environment (e.g., users are laptops or smart phones in offices or airports).} This means that the channel gains remain roughly constant and can be made known to all parties. For example, each provider collects its channel condition information to each user, and then broadcasts this information to all users. This assumption will be relaxed in Section \ref{sec:primal-dual}, where we consider a decentralized algorithm that results in the same outcome as the multi-leader-follower game.

{\subsection{Provider competition game}}
The provider competition game consists of two stages. In the first stage, providers announce prices $\boldsymbol{p}=[p_{1}\;\cdots \;p_{J}]$, where $p_{j}$ is the unit resource price charged by provider $j$.  In the second stage, each user $i \in \mathcal{I}$ chooses a demand vector $\boldsymbol{q_{i}}=[q_{i1}\;\cdots\;q_{iJ}]$, where $q_{ij}$ is the demand to provider $j$. We denote by $\boldsymbol{q}=[\boldsymbol{q}_{1} \cdots \boldsymbol{q}_{I}]$ the demand vector of all users. 

In the second stage where prices $\boldsymbol{p}$ are given, the goal of user $i$ is to choose $\boldsymbol{q}_i$ to maximize its payoff, which is utility minus payment: 
\begin{align}
v_{i}(\boldsymbol{q_{i}},\boldsymbol{p})= u_{i}\left( \sum_{j=1}^{J}{q_{ij}}{c_{ij}}\right)-\sum_{j=1}^{J}p_{j}q_{ij},
\label{eqn:utility}
\end{align}
where $c_{ij}$ is the \emph{channel quality offset} for the channel between user $i$ and the base station of provider $j$ (see Example \ref{exa:TDMA} and Assumption \ref{as:channelGainsOffsetFunction}), and $u_i$ is an increasing and concave utility function. In the first stage, a provider $j$ chooses price $p_{j}$ to maximize its revenue $p_j \sum_{i=1}^{I} q_{ij}$ subject to the resource constraint $\sum_{i=1}^{I} q_{ij} \leq Q_j$, by taking into account the demand of the users in the second stage. We consider linear pricing with no price discrimination across the users. 
 
Under this model, a user is allowed to purchase from several providers at the same time. For this to be feasible, a user's device might need to have several wireless interfaces. Mathematically, the solution of this model gives an upper bound on best performance of any situation where users are constrained to purchase from one provider alone. Interestingly, our results show that for most users, i.e. no less than $I-J$, the optimal strategy is to choose exactly one provider.

Next we give a concrete example of how our model is mapped into a physical wireless system. 

\begin{example}
\label{exa:TDMA}
Consider wireless providers operating on orthogonal frequency bands $W_{j},\; j\in \mathcal{J}$. Let $q_{ij}$ be be the fraction of time that user $i$ is allowed to transmit exclusively on the frequency band of provider $j$, with the constraint $\sum_{i\in\mathcal{I}_{j}}q_{ij}=1$, $j \in \mathcal{J}$. Furthermore, assume that each user has a peak power constraint $P_{i}$. We can then define $c_{ij}=W_{j} \log (1+\frac{P_{i} |h_{ij}|^{2}}{\sigma_{ij}^{2}W_{j}})$, where $h_{ij}$ is the channel gain and $\sigma^{2}_{ij}$ is the Gaussian noise variance for the channel between user $i$ and network $j$. In this case, a user's payoff is the difference between its utility function (in terms of total rate) and payments, $v_{i}=u_{i} (\sum_{j=1}^{J}q_{ij} c_{ij})-\sum_{j=1}^{J}p_{j}q_{ij}$. 
\end{example}

Although the $c_{ij}$ channel quality offset factor represents channel capacity in Example \ref{exa:TDMA}, it can be any increasing function of the channel strength  depending on the specific application scenario. 

Finally, we remark that the problem shares certain similarity with the multi-path routing problem in a generalized network flow setting, where each source corresponds to a user and each link corresponds to a provider. The key difference is that in our model the user-provider connections have different weights, which is not the case for the multipath routing problem. 

{\subsection{Model assumptions}}
We make the following assumptions throughout this paper:
\begin{assumption}
\label{as:concaveUtility}
 \emph{(Utility functions):} For every user $i\in \mathcal{I}$, $u_{i}(x)$ is differentiable, increasing, and strictly concave in $x$. This is a standard way to model elastic data applications in network literature (see, e.g., \cite{Kelly:1998}).
\end{assumption}
\begin{assumption}
\label{as:channelGainsOffsetFunction}
{\emph{(Channel quality offsets and channel gains):} Channel quality offsets $c_{ij}$ are drawn independently from continuous, possibly different utility distributions. In particular $Pr(c_{ij}=c_{kl})=0$ for any $i,k\in \mathcal{I},\;j,l\in \mathcal{J}$. The channel quality offset accounts for the effect that buying the same amount of resource from different providers typically has different effects on a user's quality of service. As Example \ref{exa:TDMA} shows, channel quality offset $c_{ij}$ may be a function of the channel gain $h_{ij}$ between user $i$ and provider $j$. In this case the assumption is fulfilled if channel gains are drawn from independent continuous probability distribution (e.g., Ralyleigh, Rician, distance-based path-loss model). 
}

\end{assumption}
\begin{assumption}
\emph{(Atomic and price-taking users):} The demand for an atomic user is not infinitely small and can have an impact on providers' prices. Precise characterization of this impact is one of the focuses of this paper. On the other hand, users are price-takers by the assumption of the two-stage game, and {do} not strategically influence prices. 
\end{assumption}

To analyze the properties of the provider competition game, in Section \ref{sec:socOpt} we study a related socially optimal resource allocation problem and show the uniqueness of its solution in terms of users' demands. Then, in Section \ref{sec:oligopoly}, {we come back to the provider competition game.} We show that {the unique socially optimal solution} solution corresponds to the unique equilibrium of the provider competition game, in which case the selfish and strategic behavior of providers and users leads to zero efficiency loss. 

\section{Social Optimum}
\label{sec:socOpt} 

\subsection{Social welfare maximization}
In this section we consider a social welfare problem, which aims at maximizing the sum of payoffs of all participants,  (users and providers). The social welfare problem is equivalent to maximizing the sum of users' utility functions since the payments between users and providers cancel out. We show the uniqueness of its solution in terms of users' demands. For clarity of exposition, we define the following notation. 
\begin{definition}{\emph{(Effective resource)}}
Let $\boldsymbol{x}=[x_{1} \; \cdots \; x_{I}]$ be the vector of \emph{effective resources}, where $x_{i}(\boldsymbol{q}_i)=\sum_{j=1}^{J}q_{ij}c_{ij}$ is a function of user $i$'s demand $\boldsymbol{q}_{i}=[q_{i1}\ldots q_{iJ}]$.
\end{definition}

The social welfare optimization problem (SWO) is:
\begin{align}
\textbf{SWO}: \max\; u(\boldsymbol{x})=&\sum_{i=1}^{I}u_{i}\left(x_{i}\right) \label{eqn:max} \\
\text{subject to }& \sum_{j=1}^{J}q_{ij} c_{ij}=x_{i} \; i \in \mathcal{I} \label{eqn:relative}\\
& \sum_{i=1}^{I}q_{ij}= Q_{j},\; j \in \mathcal{J} \label{eqn:clearing}\\
\text{over } & q_{ij},x_{i}\geq0 \; \forall i \in \mathcal{I},j \in \mathcal{J}. \label{eqn:positive}
\end{align}

We expressed the SWO in terms of two different variables: effective resource vector $\boldsymbol{x}$ and demand vector $\boldsymbol{q}$, even though the problem can be expressed entirely in terms of $\boldsymbol{q}$. In particular, a vector $\boldsymbol{q}$ uniquely determines a vector $\boldsymbol{x}$ through equations (\ref{eqn:relative}), i.e. we can write $\boldsymbol{x}$ as $\boldsymbol{x}(\boldsymbol{q})$. With some abuse of notation we will write $u(\boldsymbol{q})$ when we mean $u(\boldsymbol{x}(\boldsymbol{q}))$. 

\begin{lemma}
The social welfare optimization problem SWO has a unique optimal solution $\boldsymbol{x}^{*}$.
\label{lem:uniqueRelative}
\end{lemma}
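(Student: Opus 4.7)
The plan is to exploit the fact that the lemma only claims uniqueness of the effective resource vector $\boldsymbol{x}^\ast$, not of the underlying demand vector $\boldsymbol{q}^\ast$. Although SWO as written in $(\boldsymbol{q},\boldsymbol{x})$ is only concave (not strictly concave) because $x_i$ is linear in $\boldsymbol{q}_i$, when we view the problem purely in the $\boldsymbol{x}$-coordinates the objective $u(\boldsymbol{x})=\sum_i u_i(x_i)$ is a separable sum of strictly concave functions (by Assumption \ref{as:concaveUtility}) and is therefore strictly concave. So the proof reduces to checking that the feasible set, projected onto $\boldsymbol{x}$-space, is a nonempty, convex, compact set, and then invoking the standard fact that a strictly concave function admits at most one maximizer on a convex set.

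Concretely, I would first define the feasible set
\[
F=\Bigl\{(\boldsymbol{q},\boldsymbol{x})\;:\;\sum_{j}q_{ij}c_{ij}=x_i,\;\sum_{i}q_{ij}=Q_j,\;q_{ij},x_i\ge 0\Bigr\}
\]
and its projection $F_x=\{\boldsymbol{x}:\exists\,\boldsymbol{q}\text{ with }(\boldsymbol{q},\boldsymbol{x})\in F\}$. Since $F$ is the intersection of linear equalities and nonnegativity constraints, $F$ is convex, and $F_x$ is a linear image of $F$, hence also convex. Nonemptiness follows from the explicit feasible point $q_{ij}=Q_j/I$ (with $x_i=\sum_j c_{ij}Q_j/I$), and boundedness follows from $0\le x_i\le\sum_j c_{ij}Q_j$ together with the resource-clearing equalities; thus $F_x$ is compact.

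Next I would observe that, since each $u_i$ is continuous and strictly concave on $[0,\infty)$ and the sum is separable, $u(\boldsymbol{x})$ is continuous and strictly concave on $F_x$. By continuity and compactness the maximum on $F_x$ is attained, and by strict concavity on a convex set the maximizer is unique: if $\boldsymbol{x}^\ast$ and $\boldsymbol{x}^{\ast\ast}$ were both optimal with $\boldsymbol{x}^\ast\ne\boldsymbol{x}^{\ast\ast}$, then $\tfrac12(\boldsymbol{x}^\ast+\boldsymbol{x}^{\ast\ast})\in F_x$ would give a strictly larger objective value, a contradiction. Finally, one must note that optimizing over $F_x$ is the same as optimizing SWO, because every $\boldsymbol{x}\in F_x$ has at least one $\boldsymbol{q}$ making $(\boldsymbol{q},\boldsymbol{x})$ SWO-feasible, and the objective depends only on $\boldsymbol{x}$.

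I do not anticipate a substantive obstacle: the only conceptual care needed is to separate the two layers of uniqueness. The lemma sidesteps the subtle part (uniqueness of $\boldsymbol{q}^\ast$, which will presumably need Assumption \ref{as:channelGainsOffsetFunction} about continuous channel distributions later) and asks only for uniqueness of $\boldsymbol{x}^\ast$, which is purely a strict-concavity argument that does not require anything about the $c_{ij}$ beyond their existence. No genericity in the channel realizations is used at this stage.
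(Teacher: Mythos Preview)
Your proposal is correct and follows essentially the same approach as the paper: strict concavity of $u(\boldsymbol{x})=\sum_i u_i(x_i)$ (from Assumption~\ref{as:concaveUtility}) together with convexity of the feasible region gives uniqueness of $\boldsymbol{x}^\ast$. The paper's own proof is a two-line sketch of exactly this argument; your version simply spells out the projection onto $\boldsymbol{x}$-space and verifies nonemptiness and compactness explicitly, which the paper leaves implicit.
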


\begin{IEEEproof}
Since $u_{i}(x_{i})$ is strictly concave in $x_{i}$, then $u(\boldsymbol{x})=\sum_{i=1}^{I}u_{i}(x_{i})$ is strictly concave in $\boldsymbol{x}$. The feasible region defined by constraints (\ref{eqn:relative})-(\ref{eqn:positive}) is convex. Hence, $u(\boldsymbol{x})$ has a unique optimal solution $\boldsymbol{x}^{*}$ subject to constraints (\ref{eqn:relative})-(\ref{eqn:positive}). 
\end{IEEEproof}

{\subsection{Uniqueness of the socially optimal demand vector $\boldsymbol{q}^{*}$}}

Even though $u_{i}(\cdot)$'s are strictly concave in $x_{i}$, they are not strictly concave in the demand vector $\boldsymbol{q_{i}}$. Hence, SWO is non-strictly concave in $\boldsymbol{q}$. It is well-known that a non-strictly concave maximization problem might have several different global optimal optimizers (several different demand vectors $\boldsymbol{q}$ in our case) (see e.g. \cite{Lin:2006bc},\cite{Voice:2006it}). In particular, one can choose $c_{ij}$'s, $Q_{j}$'s, and $u_{i}(\cdot)$'s in such a way that a demand maximizing vector $\boldsymbol{q}^{*}$ of SWO is not unique. However, we can show that such cases arise with zero probability whenever channel offsets factors $c_{ij}$'s are independent random variables drawn from continuous distributions (see Assumption \ref{as:channelGainsOffsetFunction}). 

In the remainder of this section, we show that SWO has a unique maximizing demand vector with probability 1. We begin by proving Lemma \ref{lem:diffSupport}, which is an intermediate result stating that any two maximizing demand vectors of SWO must have different non-zero components. We then observe that any convex combination of two maximizing demand vectors is also a maximizing demand vector. Finally, we show that all convex combinations of maximizing demand vectors have the same non-zero components, which is a contradiction with Lemma \ref{lem:diffSupport}. {This} proves the main result of this section (Theorem \ref{thm:uniqueQ}).

To make our argument precise, we first define the support set of a demand vector $\boldsymbol{q}_{i}$ as follows. 

\begin{definition} (Support set): The support set $\hat{\mathcal{J}}_{i}(\boldsymbol{q}_{i})$ of a demand vector $\boldsymbol{q}_{i}$ contains the indices of its non-zero entries:
\begin{align*}
\hat{\mathcal{J}}_{i}(\boldsymbol{q}_{i})=\{j \in \mathcal{J}: q_{ij}>0\}.
\end{align*}
Given a demand vector $\boldsymbol{q}$, the ordered \emph{collection} of support sets $\hat{\mathcal{J}}_{1},\hat{\mathcal{J}}_{2},\ldots,\hat{\mathcal{J}}_{I}$ is denoted by $\{\hat{\mathcal{J}}_{i}\}_{i=1}^{I}$.
\label{def:support}
\end{definition}
The support set contains providers that user $i$ has strictly positive demand from.
\begin{lemma}
\label{lem:diffSupport}
Let $\boldsymbol{q}^{*}$ be an optimal solution of SWO (a maximizing demand vector) and $\{\hat{\mathcal{J}}_{i}\}_{i=1}^{I}$ be the corresponding collection of support sets. Then, $\boldsymbol{q}^{*}$ is almost surely\footnote{This holds on the probability space defined by the distributions of $c_{ij}$'s.} the unique maximizing demand vector corresponding to $\{\hat{\mathcal{J}}_{i}\}_{i=1}^{I}$.
\end{lemma}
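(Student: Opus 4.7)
The plan is to suppose a second maximizing demand vector $\boldsymbol{q}^{**}\neq\boldsymbol{q}^{*}$ with the same support collection $\{\hat{\mathcal{J}}_{i}\}_{i=1}^{I}$ and derive a contradiction that holds with probability $1$. Setting $\boldsymbol{\Delta}:=\boldsymbol{q}^{**}-\boldsymbol{q}^{*}$, the common support forces $\Delta_{ij}=0$ for $j\notin\hat{\mathcal{J}}_{i}$; the uniqueness of $\boldsymbol{x}^{*}$ from Lemma \ref{lem:uniqueRelative} yields $\sum_{j\in\hat{\mathcal{J}}_{i}} c_{ij}\Delta_{ij}=0$ for every $i\in\mathcal{I}$; and the market-clearing constraint (\ref{eqn:clearing}) yields $\sum_{i:\,j\in\hat{\mathcal{J}}_{i}} \Delta_{ij}=0$ for every $j\in\mathcal{J}$. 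I want to show that, almost surely over the $c_{ij}$'s, this homogeneous linear system admits only the trivial solution.

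The key structural observation is that almost surely the bipartite graph $G$ associated with $\{\hat{\mathcal{J}}_{i}\}$ (edge $(i,j)$ present iff $j\in\hat{\mathcal{J}}_{i}$) contains no cycle. Consider any candidate (necessarily even-length) cycle $i_{1}\!-\!j_{1}\!-\!i_{2}\!-\!j_{2}\!-\!\cdots\!-\!i_{k}\!-\!j_{k}\!-\!i_{1}$ of $G$. Since every edge of the cycle carries strictly positive demand at $\boldsymbol{q}^{*}$, the KKT stationarity conditions for SWO read $u_{i_{l}}'(x_{i_{l}}^{*})\,c_{i_{l}j_{l}}=\lambda_{j_{l}}=u_{i_{l+1}}'(x_{i_{l+1}}^{*})\,c_{i_{l+1}j_{l}}$ on the two edges incident to each $j_{l}$ (indices mod $k$). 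Multiplying these equalities around the cycle cancels the marginal utilities $u_{i}'(x_{i}^{*})$ and the shadow prices $\lambda_{j}$, yielding the polynomial identity
\begin{align*}
\prod_{l=1}^{k} c_{i_{l}\, j_{l}} \;=\; \prod_{l=1}^{k} c_{i_{l+1}\, j_{l}}.
\end{align*}
The simple-cycle property $i_{l}\neq i_{l+1}$ makes these two monomials use disjoint sets of the independent continuously distributed variables $c_{ij}$, so by Assumption \ref{as:channelGainsOffsetFunction} this identity holds with probability zero. Since $G$ has only finitely many candidate cycles, a union bound (first over the finitely many possible support collections, then over cycles inside each) gives that $G$ is almost surely a forest.

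To finish, given that $G$ is a forest, I would pin $\boldsymbol{\Delta}=0$ by a leaf-peeling argument on the homogeneous linear system above. A leaf on the user side is a user $i$ with $\hat{\mathcal{J}}_{i}=\{j\}$, whose effective-resource equation reduces to $c_{ij}\Delta_{ij}=0$ and hence $\Delta_{ij}=0$ (since $c_{ij}>0$ almost surely); a leaf on the provider side is a provider $j$ whose clearing equation involves a single user $i$ and thus gives $\Delta_{ij}=0$ directly. Deleting the pinned edge leaves a strictly smaller forest to which the same argument applies; iterating terminates once no edges remain, forcing $\boldsymbol{\Delta}\equiv 0$ and contradicting $\boldsymbol{q}^{**}\neq\boldsymbol{q}^{*}$.

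The step I expect to be the main obstacle is the probabilistic argument. One has to commit to a finite combinatorial enumeration before invoking that a nonzero polynomial in independent continuously distributed variables vanishes with probability $0$, and to verify cleanly that the cycle identity above really is a nontrivial polynomial (which is exactly what the disjoint-variables observation guarantees). Once acyclicity is established, the deterministic peeling step is essentially bookkeeping.
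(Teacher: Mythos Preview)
Your proposal is correct and follows essentially the same approach as the paper: the paper likewise derives the cycle identity $\prod_{l} c_{i_{l}j_{l}}/c_{i_{l+1}j_{l}}=1$ from the KKT stationarity conditions to show the bipartite support graph is almost surely acyclic, and then runs a leaf-peeling (``BGR decoding'') on the resulting forest to pin down each $q_{ij}^{*}$. The only cosmetic difference is that the paper frames the peeling constructively (recover $\boldsymbol{q}^{*}$ uniquely from $\boldsymbol{x}^{*}$ and the check-sums $P_i,S_j$) rather than as showing $\boldsymbol{\Delta}=0$, and restricts the bipartite graph to undecided users; your union bound ``over support collections'' is also unnecessary, since a single union over the finitely many potential cycles in $K_{I,J}$ already suffices.
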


\begin{IEEEproof}
For a maximizing demand vector $\boldsymbol{q}^{*}$, equations (\ref{eqn:relative})-(\ref{eqn:positive}) hold, with $\sum_{j=1}^{J}q^{*}_{ij}=x_{i}^{*}$. To prove the lemma, we will uniquely construct $\boldsymbol{q}^{*}$ from $\boldsymbol{x}^{*}$ and $\{\hat{\mathcal{J}}_{i}\}_{i=1}^{I}$. 

We can divide the users into two categories. The \emph{decided} users purchase from only one provider ($|\hat{\mathcal{J}}_{i}|=1$), and the \emph{undecided} users from several ($|\hat{\mathcal{J}}_{i}|>1$). It is also possible that some users have zero demand to all providers, but without loss of generality, we treat such users as decided. Recall that for all users we have $x^{*}_{i}=\sum_{j=1}^{J}q_{ij}^{*}c_{ij}$. For a \emph{decided} user $i$ who purchases only from provider $\bar{\jmath}$, this reduces to $x^{*}_{i}=q_{i\bar{\jmath}}^{*}c_{i \bar{\jmath}}$, and the corresponding unique demand vector is $\boldsymbol{q}_{i}^{*}=[0\cdots0 \frac{x_{i}^{*}}{c_{i\bar{\jmath}}}0\cdots 0]$.

For undecided users, finding {the unique} $\boldsymbol{q}_{i}^{*}$ is less straightforward as there is more than one $\boldsymbol{q}_{i}$ such that $\sum_{j \in \hat{\mathcal{J}_{i}}}^{}q_{ij}c_{ij}=x_{i}^{*}$. To show that the demand of undecided users is unique, we construct the bipartite graph representation (BGR) $\mathcal{G}$ of the undecided users' support sets as follows. We represent undecided users by circles, and providers of undecided users as squares. We place an edge $(i,j)$ between a provider node $j$ and a user node $i$ if $j \in  \hat{\mathcal{J}}_{i}$. 

We give an example of a BGR in Fig. \ref{fig:BGRgood}, where $\hat{\mathcal{J}}_{1}=\{a,b,c\}$, $\hat{\mathcal{J}}_{2}=\{b,d\}$, $\hat{\mathcal{J}}_{3}=\{d,e,f\}$ and $\hat{\mathcal{J}}_{4}=\{b,g\}$. 

\tikzstyle{user}=[circle,draw=black,thin, minimum size=2pt, transform shape] 
\tikzstyle{provider}=[rectangle,draw=black,fill=white, minimum size=5mm, transform shape]

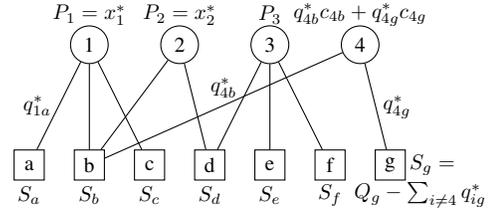
\begin{figure}[ht] 
   \centering
	\begin{tikzpicture}[>=stealth, scale=0.8]
		\node[transform shape] (user1) at (0,2) [user] {1};
		\node[transform shape] at (0,2.5){$P_{1}=x_{1}^{*}$};
		\node[transform shape] (user2) at (1.5,2) [user] {2};
		\node[transform shape] at (1.5,2.5){$P_{2}=x_{2}^{*}$};
		\node[transform shape] (user3) at (3,2) [user] {3};
		\node[transform shape] at (3,2.5){$P_{3}$};
		\node[transform shape] (user4) at (4.5,2) [user] {4};
		\node[transform shape] at (4.5,2.5){$q_{4b}^{*}c_{4b}+q_{4g}^{*}c_{4g}$};		
		\node[transform shape] (netA) at (-1,0) [provider] {a};
		\node[transform shape] at (-1,-0.5){$S_{a}$};		
		\node[transform shape] (netB) at (0,0) [provider] {b};
		\node[transform shape] at (0,-0.5){$S_{b}$};		
		\node[transform shape] (netC) at (1,0) [provider] {c};
		\node[transform shape] at (1,-0.5){$S_{c}$};		
		\node[transform shape] (netD) at (2,0) [provider] {d};
		\node[transform shape] at (2,-0.5){$S_{d}$};		
		\node[transform shape] (netE) at (3,0) [provider] {e};		
		\node[transform shape] at (3,-0.5){$S_{e}$};		
		\node[transform shape] (netF) at (4,0) [provider] {f};
		\node[transform shape] at (4,-0.5){$S_{f}$};		
		\node[transform shape] (netG) at (5,0) [provider] {g};
		\node[transform shape] at (5.7,0){$S_{g}=$};
		\node[transform shape] at (5.5,-0.5){$Q_{g}-\sum_{i \neq 4}^{}q_{ig}^{*}$};		
		\draw[transform shape] [-] (user1) -- node[left] {$q_{1a}^{*}$}(netA);
		\draw [-] (user1) -- (netB);
		\draw [-] (user1) -- (netC);
		\draw [-] (user2) -- (netB);
		\draw [-] (user2) -- (netD);
		\draw [-] (user3) -- (netD);
		\draw [-] (user3) -- (netE);
		\draw [-] (user3) -- (netF);
		\draw[transform shape] [-] (user4) -- node[above] {$q^{*}_{4b}$}(netB);
		\draw[transform shape] [-] (user4) -- node[right] {$q^{*}_{4g}$}(netG);
	\end{tikzpicture}
	\caption{Bipartite graph representation}
   	\label{fig:BGRgood}
\end{figure} 

{The BGR has the following properties\footnote{Fig. \ref{fig:BGRgood} shows a connected graph, but this need not be the case.} (see Fig. \ref{fig:BGRgood}):}

\begin{enumerate}
\item The sum of effective resource on all the edges connected to user $i$ is the optimal effective resource $x_{i}^{*}=\sum_{j \in  \hat{\mathcal{J}}_{i}}^{¥}q^{*}_{ij}c_{ij}=P_{i}$. Borrowing from coding theory and with some abuse of terminology, we call $P_{i}$ the \emph{check-sum} of user node $i$. \label{item:checkUser} 
\item The sum of all edges connected to provider node $j$ equals to the difference between the supply $Q_{j}$ and the demand from decided users who connect to provider $j$: $\sum_{i:(i,j)\in \mathcal{G}}^{¥}q^{*}_{ij}=Q_{j}-\sum_{i:(i,j)\notin \mathcal{G}}^{¥}q_{ij}^{*}=S_{j}$. {We call $S_{j}$ the \emph{check-sum} of provider node $j$.}
\label{item:checkNetwork}
\item With probability 1, the BGR does not contain any loops. {This is proved in Appendix \ref{app:diffSupport}.}
 \label{item:noLoops}
\end{enumerate}

{As it is the case in Fig. \ref{fig:BGRgood}}, the number of undecided users is smaller than the number of providers. This is a direct consequence of Property \ref{item:noLoops}), and will be proved later.

We can use the BGR to uniquely determine the demands of undecided users. Here we use Fig. \ref{fig:BGRgood} as an illustrative example, the formal description is given in Appendix \ref{app:algo}. We call this procedure the BGR decoding algorithm. Consider leaf node (a node with only one edge) $g$ and edge $q_{4g}^{*}$. The BGR implies that user $4$ is the only undecided customer of provider $g$. Since the demands of all decided users have been determined, then we know that $q_{4g}^{*}=S_{g}=Q_{g}-\sum_{i \neq 4}^{}q_{ig}^{*}$. We can then remove edge $q_{4g}^{*}$ and node $g$ from the BGR, and update the check-sum value of node $4$ to $P_{4}=x_{4}^{*}-q_{4g}^{*}c_{4g}$. Now consider node $4$ and edge $q_{4b}^{*}$. Since edge $q_{4b}^\ast$ is now the only edge connecting with user node 4, we have  $q_{4b}^{*}c_{4b}=P_{4}$ and hence $q_{4b}^{*}={P_{4}}/{c_{4b}}$. Next we can consider node $a$, $e$, or $f$, and so on. Property \ref{item:noLoops}) is crucial in this procedure since it guarantees that we can always find a leaf node in the reduced graph. 

In each step of the algorithm, we determine the unique value of $q_{ij}^{*}$ associated with the edge of one leaf. This  value is independent of the order in which we pick the leaf nodes, as seen from Appendix \ref{app:algo}. So, we can construct unique demand vector $\boldsymbol{q}_{i}^{*}$ for each undecided user $i$. Together with the unique demand vectors of the decided users, we have found the unique maximizing demand vector $\boldsymbol{x}^{*}$ of SWO with support sets $\{\hat{\mathcal{J}}_{i}\}_{i=1}^{I}$.
\end{IEEEproof}

\begin{theorem}
The social welfare optimization problem SWO has a unique maximizing solution $\boldsymbol{q}^\ast$ with probability 1. 
\label{thm:uniqueQ}
\end{theorem}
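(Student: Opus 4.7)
The plan is to argue by contradiction. Suppose SWO admits two distinct maximizing demand vectors $\boldsymbol{q}^{(1)}$ and $\boldsymbol{q}^{(2)}$. The key structural observation I would use is that the objective $u(\boldsymbol{q})=\sum_{i}u_{i}(\sum_{j}q_{ij}c_{ij})$ is concave in $\boldsymbol{q}$ and the constraints (\ref{eqn:relative})--(\ref{eqn:positive}) are linear, so the set of maximizers is convex. Consequently every $\boldsymbol{q}^{\alpha}=\alpha\boldsymbol{q}^{(1)}+(1-\alpha)\boldsymbol{q}^{(2)}$, $\alpha\in[0,1]$, is also a maximizer (we may also invoke Lemma \ref{lem:uniqueRelative}, which tells us that both produce the same effective resource $\boldsymbol{x}^{\ast}$, so constraint (\ref{eqn:relative}) is preserved along the segment).

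Next I would look at the support sets along this segment. Since all entries of $\boldsymbol{q}^{(1)}$ and $\boldsymbol{q}^{(2)}$ are nonnegative, for any $\alpha\in(0,1)$ the $(i,j)$-entry of $\boldsymbol{q}^{\alpha}$ is strictly positive if and only if either $q_{ij}^{(1)}>0$ or $q_{ij}^{(2)}>0$. Hence for every $\alpha\in(0,1)$ the collection of support sets of $\boldsymbol{q}^{\alpha}$ equals the fixed collection $\{\hat{\mathcal{J}}_{i}(\boldsymbol{q}^{(1)})\cup\hat{\mathcal{J}}_{i}(\boldsymbol{q}^{(2)})\}_{i=1}^{I}$, independent of $\alpha$. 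Because $\boldsymbol{q}^{(1)}\neq\boldsymbol{q}^{(2)}$, two different values of $\alpha$ produce two different maximizing demand vectors sharing this common collection of support sets.

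I would then invoke Lemma \ref{lem:diffSupport}: for any fixed collection of support sets, the maximizing demand vector is almost surely unique. Therefore the existence of two distinct maximizers with the same support collection is a null event, giving the desired contradiction. The one subtle point to address is that Lemma \ref{lem:diffSupport} provides almost sure uniqueness \emph{for each given} support collection, whereas a priori we do not know which collection will arise. This is handled by noting that $\hat{\mathcal{J}}_{i}\subseteq\mathcal{J}$ and $|\mathcal{J}|=J$, so there are only finitely many (at most $2^{IJ}$) possible collections of support sets; a union bound over this finite family upgrades the per-collection almost-sure uniqueness to almost-sure uniqueness of $\boldsymbol{q}^{\ast}$.

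The main obstacle I anticipate is not any single calculation but rather cleanly separating the two probabilistic layers: Lemma \ref{lem:diffSupport} is the genuinely nontrivial input (its acyclicity-of-the-BGR ingredient is where the continuity of the $c_{ij}$ distributions is exploited), whereas the step from it to Theorem \ref{thm:uniqueQ} is essentially algebraic, resting on convexity of the maximizer set, invariance of supports under strictly positive convex combinations, and a finite union bound. Once these pieces are laid out in that order, the contradiction is immediate.
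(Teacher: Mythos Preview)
Your proposal is correct and follows essentially the same route as the paper: assume two distinct maximizers, pass to their convex combinations, observe that all interior combinations share the common union support, and invoke Lemma~\ref{lem:diffSupport} to obtain a contradiction. Your explicit union bound over the finitely many support collections is a welcome clarification of a point the paper leaves implicit.
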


\begin{IEEEproof} 
The detailed proof is in Appendix \ref{app:proofUniqueDemand}, here we provide an outline. {Assume there exist two maximizing demand vectors of SWO which, by Lemma \ref{lem:diffSupport}, have different supports sets.} The support set of {a non-trivial} convex combination of {any two} non-negative vectors is the union of support sets of these two vectors. {Hence, all convex combinations of two maximizing demand vectors of SWO, which are also maximizing demand vectors, have the same support. This is a contradiction to Lemma \ref{lem:diffSupport}.}
\end{IEEEproof}
 
Given an optimal demand vector $\boldsymbol{q}^{*}$ of the SWO problem, there exists a unique corresponding Lagrangue multiplier vector $\boldsymbol{p}^\ast$, associated with the resource constraints of $J$ providers \cite{Bertsekas:1999yq}. Next, we show that $(\boldsymbol{q}^\ast,\boldsymbol{p}^\ast)$ is the unique equilibrium of the provider competition game defined in Section \ref{sec:model}.  
 
\section{Analysis of the two-stage game}

\label{sec:oligopoly} 
In this section we show that there exists a unique equilibrium (defined more precisely shortly) of the {multi-leader-follower} provider competition game. In particular, this equilibrium corresponds to the unique social optimal solution of SWO and the associated Lagrange multipliers. The idea is to interpret the Lagrange multipliers as the prices announced by the providers. Moreover, we show that there are at most $J-1$ undecided users at this equilibrium. 

First, we define the equilibrium concept \cite{Fudenberg:1991}:

\begin{definition} (Subgame perfect equilibrium (SPE)) A price demand tuple $\left(\boldsymbol{p}^{*}, \boldsymbol{q}^{*}(\boldsymbol{p}^{*})\right)$ is a subgame perfect equilibrium for the provider competition game if no player has an incentive to deviate unilaterally {at any stage of the game}. In particular, each user $i\in\mathcal{I}$ maximizes its payoff given prices $\boldsymbol{p}^\ast$. Each provider $j\in\mathcal{J}$ maximizes its revenue given other providers' prices $p^{*}_{-j}=(p^{*}_{1},\cdots,p^{*}_{j-1}, p^{*}_{j+1},\cdots p^{*}_{J})$ and the users' demand $\boldsymbol{q}^{*}(\boldsymbol{p}^{*})$.
\label{def:SPE}
\end{definition}
 
We will compute the equilibrium concept using backward induction. {In Stage II, we will compute the best response of the users $\boldsymbol{q}^{*}(\boldsymbol{p})$ as a function of any given price vector $\boldsymbol{p}$. Then in Stage I, we will compute the equilibrium prices $\boldsymbol{p}^{*}$. For equilibrium prices $\boldsymbol{p}^{*}$, the best response of the users $\boldsymbol{q}^{*}(\boldsymbol{p}^{*})$ is uniquely determined via BGR decoding.}

\subsection{Equilibrium strategy of the users in Stage II}
Consider users facing prices $\boldsymbol{p}$ in the second stage. Each user solves a user payoff maximization (UPM) problem:
\begin{align}
\label{eqn:defUPM}
\textbf{UPM}: \max_{\boldsymbol{q}_{i}\geq \boldsymbol{0}} v_{i}= \max_{\boldsymbol{q}_{i}\geq \boldsymbol{0}} u_{i}\left( \sum_{j=1}^{J}{q_{ij}}{c_{ij}}\right)-\sum_{j=1}^{J}p_{j}q_{ij}
\end{align}
\begin{lemma}
\label{lem:uniqueEffectiveUPM}  
For each user $i\in \mathcal{I}$, there exists a unique nonnegative value $x_{i}^{*}$, such that $\sum_{j=1}^{¥}c_{ij}q_{ij}=x_{i}^{*}$ for every maximizer $\boldsymbol{q}_i$  of the UPM problem. Furthermore, for any $j$ such that $q_{ij}>0$, $\frac{p_{j}}{c_{ij}}=\min_{k\in \mathcal{J}}\frac{p_{k}}{c_{ik}}$. 
\end{lemma}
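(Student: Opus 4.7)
The plan is to reduce the $J$-dimensional UPM to a one-dimensional strictly concave optimization by exploiting the fact that the utility term depends on $\boldsymbol{q}_i$ only through the scalar effective resource $x_i(\boldsymbol{q}_i)=\sum_{j}c_{ij}q_{ij}$. Define the per-unit minimum effective price
\begin{align*}
m_i \;=\; \min_{k\in\mathcal{J}} \frac{p_k}{c_{ik}}.
\end{align*}
The key observation is that for any feasible $\boldsymbol{q}_i\geq\boldsymbol{0}$,
\begin{align*}
\sum_{j=1}^{J} p_j q_{ij} \;=\; \sum_{j=1}^{J} \frac{p_j}{c_{ij}}\,c_{ij}q_{ij} \;\geq\; m_i \sum_{j=1}^{J} c_{ij}q_{ij} \;=\; m_i\,x_i(\boldsymbol{q}_i),
\end{align*}
with equality if and only if every $j$ such that $q_{ij}>0$ attains $p_j/c_{ij}=m_i$. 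Therefore
\begin{align*}
v_i(\boldsymbol{q}_i,\boldsymbol{p}) \;\leq\; f_i(x_i(\boldsymbol{q}_i)) \;:=\; u_i(x_i(\boldsymbol{q}_i))-m_i\,x_i(\boldsymbol{q}_i).
\end{align*}

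Next I would study the upper bound $f_i(x)$ for $x\geq 0$. Since $u_i$ is strictly concave by Assumption~\ref{as:concaveUtility} and $-m_i x$ is linear, $f_i$ is strictly concave and hence admits a unique maximizer $x_i^{*}\geq 0$ on $[0,\infty)$ (allowing $x_i^*=0$ if $u_i'(0)\leq m_i$). Crucially, this bound is attained: pick any $j^{*}\in\arg\min_k p_k/c_{ik}$ and set $\tilde{\boldsymbol{q}}_i$ with $\tilde{q}_{ij^{*}}=x_i^{*}/c_{ij^{*}}$ and all other components zero; then $x_i(\tilde{\boldsymbol{q}}_i)=x_i^{*}$ and $v_i(\tilde{\boldsymbol{q}}_i,\boldsymbol{p})=f_i(x_i^{*})$. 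So the optimal value of UPM equals $f_i(x_i^{*})$.

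Finally I would deduce both conclusions of the lemma from this characterization. Let $\boldsymbol{q}_i^{*}$ be any maximizer of UPM. The chain
\begin{align*}
f_i(x_i^{*}) \;=\; v_i(\boldsymbol{q}_i^{*},\boldsymbol{p}) \;\leq\; f_i(x_i(\boldsymbol{q}_i^{*})) \;\leq\; f_i(x_i^{*})
\end{align*}
forces equalities throughout. The right equality together with strict concavity of $f_i$ yields $x_i(\boldsymbol{q}_i^{*})=x_i^{*}$, which proves the first claim. The left equality forces $\sum_j p_j q^{*}_{ij}=m_i x_i^{*}$, i.e. equality in the bound $p_j/c_{ij}\geq m_i$ on the support of $\boldsymbol{q}_i^{*}$; hence $q^{*}_{ij}>0$ implies $p_j/c_{ij}=\min_{k}p_k/c_{ik}$, which is the second claim.

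The argument involves no difficult obstacle; the only subtlety is treating the boundary case $x_i^{*}=0$, which is covered by allowing $\boldsymbol{q}_i^{*}=\boldsymbol{0}$ (vacuously satisfying the support condition) and observing that strict concavity of $u_i$ still makes the maximizer of $f_i$ on $[0,\infty)$ unique.
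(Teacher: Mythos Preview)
Your proof is correct, and it takes a genuinely different route from the paper's argument. The paper proceeds via the KKT optimality conditions for UPM: from stationarity $u_i'(x_i)c_{ij}\le p_j$ it obtains $u_i'(x_i)\le\alpha:=\min_k p_k/c_{ik}$, and then does a case split on whether $u_i'(0)<\alpha$ or $u_i'(0)\ge\alpha$, invoking strict monotonicity of $u_i'$ to pin down a unique $x_i^{*}$ in each case; the support condition comes out of complementary slackness. Your argument instead bypasses KKT entirely: you bound the payment from below by $m_i x_i$, reduce UPM to the one-dimensional strictly concave problem $\max_{x\ge 0}\{u_i(x)-m_i x\}$, exhibit a feasible $\boldsymbol{q}_i$ achieving the bound, and then read off both conclusions from the resulting sandwich $f_i(x_i^{*})=v_i(\boldsymbol{q}_i^{*})\le f_i(x_i(\boldsymbol{q}_i^{*}))\le f_i(x_i^{*})$.

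What each buys: your approach is more elementary and self-contained (no Slater, no multipliers), and the squeeze argument makes the ``support only on minimizers'' claim fall out cleanly from the equality case of a single inequality. The paper's KKT route is more mechanical but has the advantage that the same stationarity and complementary-slackness relations \eqref{eqn:UPMcondition1}--\eqref{eqn:UPMcondition2} are reused verbatim in Appendix~A-1 and Appendix~B-2 to match UPM with the SWO KKT system, so their proof doubles as setup for Theorem~\ref{thm:uniqueNash}. One small point worth making explicit in your write-up: existence of the maximizer of $f_i$ on $[0,\infty)$ relies on $u_i'(x)\to 0$, which the paper treats as part of Assumption~\ref{as:concaveUtility} (it invokes ``$u_i'(\cdot)$ decreases to zero'' in its own proof).
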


Proof is given in Appendix \ref{app:proofUPM}. We remark that, together, the unique $x_{i}^{*}$'s from Lemma \ref{lem:uniqueEffectiveUPM} form a vector $\boldsymbol{x}^{*}$, which is equal to $\boldsymbol{x}^{*}$, the SWO maximizer from Lemma \ref{lem:uniqueRelative}.

\begin{definition}(Preference set)
For any price vector $\boldsymbol{p}$, user $i$'s preference set $\mathcal{J}_{i}(\boldsymbol{p})$ includes each provider $j \in \mathcal{J}$ with $\frac{p_{j}}{c_{ij}}=\min_{k \in \mathcal{J}} \frac{p_{k}}{c_{ik}}$.
\end{definition}
 In light of Lemma \ref{lem:uniqueEffectiveUPM}, $\mathcal{J}_{i}$ is the set of providers from which user $i$ might request a strictly positive amount of resource. Users can again be partitioned to decided and undecided based on the cardinality of their preference sets, analogous to the distinction made in Section \ref{sec:socOpt}. The preference set of a decided user $i$ contains a singleton, and there is a unique vector $\boldsymbol{q}_{i}$ that maximizes his payoff. By contrast, the preference set of an undecided user $i$ contains more than one provider, and any choice of $\boldsymbol{q}_{i}\geq \boldsymbol{0}$ such that $x^{*}_{i}=\sum_{j \in \mathcal{J}_{i}}^{¥}q_{ij}c_{ij}$ maximizes his payoff. 

There is a close relationship between the support sets from Section \ref{sec:socOpt} and preference sets defined here. {Facing prices $\boldsymbol{p}$, a user $i$ \emph{may} request positive resource only from providers who are in his preference set $\mathcal{J}_{i}$. By definition, he \emph{actually} {requests} positive resource from providers who are in his support set $\hat{\mathcal{J}}_{i}$.} So the support set of a user is a subset of his preference set: $\hat{\mathcal{J}}_{i}(\boldsymbol{q(\boldsymbol{p})})\subset \mathcal{J}_{i}(\boldsymbol{p})$. We can construct a BGR based on the preference sets and show that this BGR also has no loops with probability 1, following a similar proof (Appendix \ref{app:diffSupport}).

Suppose that the optimal Lagrange multipliers $\boldsymbol{p}^{*}$ from Section \ref{sec:socOpt} are announced as prices. Since all users have access to complete network information, each of them can calculate all users' preference sets, and can construct the corresponding BGR. Undecided users can now uniquely determine their demand vector by independently running the same BGR decoding algorithm. The demand found through BGR decoding is unique as all demand vectors are considered at one time and equality of supply and demand is taken into account. We note that the demand found in this way is only one of an undecided user's infinitely many best responses under prices $\boldsymbol{p}^{*}$. However, only the demands given by the BGR decoding algorithm will balance the supply and demand for each provider at the optimal price $\boldsymbol{p}^{*}$. We will later show that this is the only subgame perfect equilibrium of the provider competition game.

\subsection{Equilibrium strategy of the providers in Stage I}
The optimal choice of prices for the providers depends on how the users' demand changes with respect to the price, which further depends on the users' utility functions. The quantity that indicates how a user's demand changes with respect to the price is the \emph{coefficient of relative risk aversion} \cite{Mas-Colell:1995} of utility function $u_{i}$, i.e. $k^{i}_{RRA}=-\frac{xu_{i}''(x)}{u_{i}'(x)}$.  We focus on a class of utility functions characterized in Assumption \ref{as:RRA}. 
\begin{assumption}
\label{as:RRA}
For each user $i\in\mathcal{I}$, the coefficient of relative risk aversion of its utility function is less than 1.
\end{assumption}

Assumption \ref{as:RRA} is satisfied by some commonly used utility functions, such as $\log(1+x)$ and the $\alpha-$fair utility functions $\frac{x^{1-\alpha}}{1-\alpha}$, for $\alpha \in (0, 1)$ \cite{Rad:2009sf}. Under Assumption \ref{as:RRA}, a monopolistic provider will sell all of its resource $Q_j$ to maximize its revenue. Intuitively, when a provider lowers its price, the demand of the users increases significantly enough that the change in revenue of the provider is positive. This encourages the provider to lower the price further such that eventually total demand equals total supply. In the case of multiple providers, Assumption \ref{as:RRA} also ensures that all providers are willing to sell all their resources to maximize their revenues.

{We call the prices that achieve equality of demand and supply \emph{market clearing prices}.}

\begin{theorem}
\label{thm:uniqueNash}
Under Assumption \ref{as:RRA}, the unique socially optimal demand vector $\boldsymbol{q}^{*}$ and the associated Lagrangian multiplier vector $\boldsymbol{p}^{*}$ of the SWO problem constitute the unique sub-game perfect equilibrium of the provider competition game. 
\end{theorem}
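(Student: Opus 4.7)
The plan is to establish the theorem in two halves: first showing that the pair $(\boldsymbol{q}^*, \boldsymbol{p}^*)$ really is an SPE (existence), then showing that no other price-demand pair can be an SPE (uniqueness). Throughout, I would exploit the tight correspondence between the KKT system of SWO and the user best-response conditions of Lemma \ref{lem:uniqueEffectiveUPM}.

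For existence, I would argue by backward induction. In Stage II, the user problem UPM and SWO share the same stationarity condition in each variable $q_{ij}$: the Lagrange multiplier $p^*_j$ attached to the capacity constraint in SWO equals $u_i'(x^*_i)c_{ij}$ whenever $q^*_{ij}>0$ and dominates it otherwise, which is precisely the characterization in Lemma \ref{lem:uniqueEffectiveUPM} of a UPM maximizer at prices $\boldsymbol{p}^*$. Hence $\boldsymbol{q}^*_i$ is a user best response for every $i$. Among the (generally many) user best responses at $\boldsymbol{p}^*$, the BGR decoding output is the unique choice that clears every provider's market, so it is the natural selection by rational users. In Stage I, I would fix $p^*_{-j}$ and show that no price $\tilde{p}_j \neq p^*_j$ improves provider $j$'s revenue: raising the price drops $j$ out of the preference set of users who were on the boundary, diverts their demand to other providers, and strictly decreases revenue; lowering the price still caps sales at $Q_j$ (market already clears at $\boldsymbol{p}^*$) and thus also strictly decreases revenue. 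Assumption \ref{as:RRA} is what guarantees that $j$'s ``no lower'' side is really binding at $Q_j$ rather than leaving slack.

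For uniqueness, Assumption \ref{as:RRA} is again the decisive tool. I would first show that every SPE $(\tilde{\boldsymbol{p}}, \tilde{\boldsymbol{q}})$ clears all markets, i.e., $\sum_i \tilde{q}_{ij}=Q_j$ for every $j$: if provider $j$ had slack demand at $\tilde{\boldsymbol{p}}$, the bound on relative risk aversion implies that a marginal decrease of $\tilde{p}_j$ attracts enough extra demand to strictly raise $j$'s revenue, contradicting the SPE condition. Once market clearing is in place, any such $\tilde{\boldsymbol{q}}$ is primal-feasible for SWO, and the users' KKT conditions at $\tilde{\boldsymbol{p}}$ are exactly the SWO stationarity conditions with multiplier vector $\tilde{\boldsymbol{p}}$, so $\tilde{\boldsymbol{q}}$ solves SWO and $\tilde{\boldsymbol{p}}$ is the associated dual vector. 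By Theorem \ref{thm:uniqueQ} the SWO primal maximizer is almost surely unique, hence $\tilde{\boldsymbol{q}}=\boldsymbol{q}^*$; and the Lagrange multipliers associated with a unique primal solution are themselves unique, so $\tilde{\boldsymbol{p}}=\boldsymbol{p}^*$.

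The main obstacle will be the Stage I step and its uniqueness counterpart: converting the pointwise bound on relative risk aversion into a concrete monotonicity statement about aggregate demand $D_j(p_j,p^*_{-j})=\sum_i q_{ij}(p_j,p^*_{-j})$ seen by a single provider. This requires a careful piecewise analysis around $p_j=p^*_j$, tracking how users transition in and out of provider $j$'s preference set as $p_j$ crosses the thresholds $p^*_k c_{ij}/c_{ik}$, and then verifying that the revenue $p_j D_j(p_j,p^*_{-j})$ is unimodal with peak at the market-clearing price $p^*_j$; the derivative of this revenue involves exactly the quantity $1-k^i_{RRA}$, which is why Assumption \ref{as:RRA} is the right sufficient condition.
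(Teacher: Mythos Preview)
Your proposal is correct and takes essentially the same approach as the paper: match the Stage~II UPM optimality conditions to the SWO KKT conditions (\ref{eqn:kkt1})--(\ref{eqn:kkt3}) and (\ref{eqn:kkt5}), use Assumption~\ref{as:RRA} to force market clearing (\ref{eqn:kkt4}) at any SPE, and conclude that the SPE conditions coincide with the full KKT system of SWO, whose solution is unique by Theorem~\ref{thm:uniqueQ}. Your plan is more explicit than the paper's very terse argument---in particular you spell out the Stage~I provider-deviation analysis in both directions, which the paper simply asserts---but the underlying logic is identical.
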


The proof is given in Appendix \ref{app:proofSPE}. {It is interesting to see that the competition of providers does not reduce social efficiency. This is not a simple consequence of the strict concavity of the users' utility functions; it is also related to the elasticity of users' demands. Assumption \ref{as:RRA} ensures that the demands are elastic enough such that a small decrease in price leads to significant increase in demand and thus a net increase in revenue. }

Under the optimal prices $\boldsymbol{p}^{*}$  announced by the providers in the first stage, the users in the second stage will determine the unique demand vector $\boldsymbol{q}^{*}$ using BGR decoding. {On the other hand, if the providers charge prices other than $\boldsymbol{p}^{*}$, no best-response from the users will make the demand equals to the supply, which is a necessary condition for an equilibrium.}

{In light of Theorem \ref{thm:uniqueNash}, we will refer to the unique subgame perfect equilibrium $(\boldsymbol{p}^{*},\boldsymbol{q}^{*})$ of the provider competition game as \textbf{the equlibrium}.} 
\subsection{The number of undecided users} 

Since the presence of undecided users makes the analysis challenging, it is interesting to understand how many undecided users there can be in a given game. It turns out that such number is upperbounded by the number of providers $J$ in the network. 

\begin{lemma}
Under any given price vector $\boldsymbol{p}$ in the first stage, the number of undecided users in the second stage is strictly less than $J$. 
\label{lem:nbrUndecided}
\end{lemma}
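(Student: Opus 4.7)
The plan is to exploit the acyclic (forest) structure of the bipartite graph representation (BGR) built from the undecided users' preference sets. Recall from the discussion preceding the lemma that, exactly as in Section \ref{sec:socOpt}, one can build a BGR in which each undecided user $i$ is a circle node connected by an edge to each provider node $j \in \mathcal{J}_i(\boldsymbol{p})$, and that with probability $1$ this BGR contains no loops (this is the analog of Property \ref{item:noLoops} for preference sets, provable by the argument in Appendix \ref{app:diffSupport}). So almost surely the BGR is a forest.

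First I would fix a price vector $\boldsymbol{p}$ and let $U$ denote the number of undecided users in Stage II and $P$ the number of provider nodes that appear in the BGR (i.e., providers that belong to at least one undecided user's preference set, so trivially $P \leq J$). By definition of ``undecided,'' every undecided user $i$ satisfies $|\mathcal{J}_i(\boldsymbol{p})| \geq 2$, so each circle node has degree at least $2$ in the BGR; in particular, counting edges by summing degrees on the user side gives $E \geq 2U$.

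Next I would apply the forest bound. Since the BGR has $U + P$ vertices and is acyclic (with probability $1$), its number of edges satisfies $E \leq U + P - c$, where $c \geq 1$ is the number of connected components. Combining with $E \geq 2U$ yields
\begin{equation*}
2U \;\leq\; E \;\leq\; U + P - 1 \;\leq\; U + J - 1,
\end{equation*}
so $U \leq J - 1 < J$, which is the desired bound.

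The main (and essentially only) obstacle is justifying that the no-loops property carries over from the support-set BGR of Section \ref{sec:socOpt} to the preference-set BGR considered here; the text already asserts this by pointing to an analog of Appendix \ref{app:diffSupport}, so the rest of the argument is a one-line edge count using the fact that a forest on $n$ vertices has at most $n-1$ edges together with the minimum-degree-$2$ condition on undecided users.
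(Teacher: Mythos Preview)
Your proof is correct and follows essentially the same approach as the paper: both rely on the BGR built from preference sets being a forest (no loops with probability $1$) together with the fact that every undecided user has degree at least $2$. The only difference is stylistic---you use the standard forest edge-count $E \leq U+P-c$ combined with $E \geq 2U$, whereas the paper argues inductively by adding undecided users one at a time and observing that each must introduce at least one new provider node to avoid a loop; both yield $U \leq J-c \leq J-1$.
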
 

The proof is given in Appendix \ref{app:proofNumberUndecided}. {The main idea is that if the number of undecided user nodes in a BGR is not smaller than the number of provider nodes, then there exists a loop in the BGR. This, however, occurs with zero probability, as shown in Section \ref{sec:socOpt}}.  

\section{Primal-Dual Algorithm}
\label{sec:primal-dual}

The previous analysis of the subgame perfect equilibrium has assumed that every player (provider or user) knows the complete information of the system. This may not be true in practice. In this section we present a distributed primal-dual algorithm where providers and users only know local information and make local decisions in an iterative fashion. We show that such algorithm globally converges to the unique equilibrium discussed in Theorem \ref{thm:uniqueNash} under mild conditions on the updating rates. 

The key proof idea is to show that the primal-dual algorithm converges to a set containing the optimal solution of SWO. We can further show that this set contains only the unique optimal solution in most cases, regardless of the values of the updating rates. 

We first present the algorithm, and then the proof of its convergence. 

\subsection{Primal-dual algorithm}
In this section we will consider a continuous-time algorithm, where all the variables are functions of time. For compactness of exposition, we will sometimes write $q_{ij}$ and $p_{j}$ when we mean $q_{ij}(t)$ and $p_{j}(t)$, respectively. Their time derivatives $\frac{\partial q_{ij}}{\partial t}$ and $\frac{\partial p_{j}}{\partial t}$ will often be denoted by $\dot{q}_{ij}$ and $\dot{p}_{j}$. {We denote by $\boldsymbol{q}^{*}$ and $\boldsymbol{p}^{*}$ the unique maximizer of SWO and the corresponding Lagrange multiplier vector, respectively. As shown in Theorem \ref{thm:uniqueNash}, $(\boldsymbol{p}^{*}, \boldsymbol{q}^{*})$ is also the unique subgame perfect equilibrium of the provider competition game. These values are constant.}

To simplify the notation, we denote by $f_{ij}(t)$ or simply $f_{ij}$ the marginal utility of user $i$ with respect to $q_{ij}$ when his demand vector is $\boldsymbol{q_{i}}(t)$:
\begin{align}
f_{ij}=\frac{\partial u_{i}(\boldsymbol{q_{i}})}{\partial q_{ij}}=c_{ij} \frac{\partial u_{i}(x)}{\partial x}\Big\vert_{x=x_{i}=\sum_{j=1}^{J}q_{ij} c_{ij}} \label{eqn:marginalUtility}.
\end{align}
We will use $f_{ij}^{*}$ to denote the value of $f_{ij}(t)$ evaluated at $\boldsymbol{q}_{i}^{*}$, the maximizing demand vector of user $i$. So, $f_{ij}^{*}$ is a constant that is equal to a user's equilibrium marginal utility as opposed to $f_{ij}(t)$ which indicates marginal utility at a particular time $t$. We also define $\nabla u_{i}(\boldsymbol{q_{i}})=[f_{i1} \cdots f_{iJ}]^{T}$ and $\nabla u_{i}(\boldsymbol{q_{i}^{*}})=[f_{i1}^{*} \cdots f_{iJ}^{*}]^{T}$, where all the vectors are column vectors.

We define $(x)^{+}=\max(0,x)$ and 
\begin{align*}
(x)_{y}^{+}=\begin{cases}
x & y >0  \\
(x)^{+} & y\leq0.
\end{cases}
\end{align*}
Another way to think of this notation is $(x)_{y}^{+}=x(1-\mathbbm{1}_{(-\infty,0]}(x)\mathbbm{1}_{(-\infty,0]}(y)$), where $\mathbbm{1}$ is the indicator function{, i.e., $\mathbbm{1}_{A}(x)=1$ if $x\in A$, and $0$ otherwise}. 

Motivated by the work in \cite{Chen:2008la}, we consider the following standard \textbf{primal-dual variable update algorithm}:
\begin{table}[htbp]
\begin{align}
 & \dot{q}_{ij}=k_{ij}^{q}\left( f_{ij}-p_{j} \right)_{q_{ij}}^{+},\; \label{eqn:primal} i \in \mathcal{I},j\in \mathcal{J}\\
 & \dot{p}_{j}=k_{j}^{p}\left( \sum_{i=1}^{I}q_{ij}-Q_{j} \right)_{p_{j}}^{+},\;\label{eqn:dual}j\in \mathcal{J}.
\end{align}
\label{tab:primalDual}
\end{table}

\noindent Here $k_{ij}^{p}$, $k_{j}^{p}$ are the constants representing update rates. The update rule ensures that, when a variables of interest ($q_{ij} $ or $  p_{j}$) is already zero, it will not become negative even when the direction of the update (i.e. quantity in the parenthesis) is negative. The tuple $(\boldsymbol{q}(t),\boldsymbol{p}(t))$ controlled by equations (\ref{eqn:primal}) and (\ref{eqn:dual}) will be referred to as the \emph{solution trajectory} of the differential equations system defined by (\ref{eqn:primal}) and (\ref{eqn:dual}). 

{The motivation for the proposed algorithm is quite natural. A provider increases its price when the demand is higher than its supply and decreases its price when the demand is lower. A user decreases his demand when a price is higher than his marginal utility and increases it when a price is lower.} In essence, the algorithm is following the natural direction of market forces.

One key observation is that these updates can be implemented in a distributed fashion. The users only need to know the prices proposed by the providers. The providers only need to know the demand of the users for their own resource, and not for the resource of other providers (as was the case in the analysis of Section \ref{sec:oligopoly}). In particular, only user $i$ needs to know his own channel offset parameters $c_{ij}$,  $j \in \mathcal{J}$.

The first step to prove the algorithm's convergence is to construct a lower-bounded La Selle function $V(\boldsymbol{q}(t),\boldsymbol{p}(t))$ and show that its value is non-increasing for any solution trajectory $(\boldsymbol{q}(t),\boldsymbol{p}(t))$ that satisfies \eqref{eqn:primal} and \eqref{eqn:dual}. {This will ensure that $(\boldsymbol{q}(t),\boldsymbol{p}(t))$ converge to a set of values that keeps ${V}(\boldsymbol{q}(t),\boldsymbol{p}(t))$ constant.}
 
\subsection{Convergence of the primal-dual to an invariant set}

We consider the following La Salle function:

\begin{align*}
&V(\boldsymbol{q}(t),\boldsymbol{p}(t))=V(t) \\
&=\sum_{i,j}^{¥}\frac{1}{k_{ij}^{q}}\int_{0}^{q_{ij}(t)}(\beta-q_{ij}^{*})d \beta+\sum_{j}^{¥}\frac{1}{k_{j}^{p}}\int_{0}^{p_{j}(t)}(\beta-p_{j}^{*})d \beta,
\end{align*}

{It can be shown} that $V(\boldsymbol{q}(t),\boldsymbol{p}(t))\geq V(\boldsymbol{q}^{*},\boldsymbol{p}^{*})$, i.e., $V$ is bounded from below. {This ensures that if the function $V$ is non-increasing, it will eventually reach a constant value (which may or may not be the global minimum $V(\boldsymbol{q}^{*},\boldsymbol{p}^{*})$).}

The derivative of $V$ \emph{along the solution trajectories of the system}, $\frac{\partial V}{\partial t}$, denoted by $\dot{V}$, is given by:
\begin{align*}
\dot{V}(t)=\sum_{i,j}^{¥}\frac{\partial V}{\partial q_{ij}} \dot{q}_{ij}+\sum_{j}^{¥}\frac{\partial V}{\partial p_{j}} \dot{p}_{j}.
\end{align*}

\begin{lemma}
\label{lem:decreasingLaSalleFunction}
The value of the La Selle function $V$ is non-increasing along the solution trajectory, defined by (\ref{eqn:primal}) and (\ref{eqn:dual}), i.e. $\dot{V}(t) \leq 0$. 
\end{lemma}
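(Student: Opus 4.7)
The plan is to differentiate $V$ along trajectories, bound the resulting expression by its ``unclipped'' counterpart, and then reduce it to a sum of manifestly non-positive terms using the KKT conditions of SWO together with the concavity of the $u_i$'s.

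First I would apply the fundamental theorem of calculus to each integral in $V$ and substitute the primal-dual updates \eqref{eqn:primal} and \eqref{eqn:dual}:
\begin{align*}
\dot{V}(t) &= \sum_{i,j}\frac{1}{k_{ij}^{q}}(q_{ij}-q_{ij}^{*})\dot{q}_{ij}+\sum_{j}\frac{1}{k_{j}^{p}}(p_{j}-p_{j}^{*})\dot{p}_{j}\\
&=\sum_{i,j}(q_{ij}-q_{ij}^{*})(f_{ij}-p_{j})_{q_{ij}}^{+}+\sum_{j}(p_{j}-p_{j}^{*})\Bigl(\sum_{i}q_{ij}-Q_{j}\Bigr)_{p_{j}}^{+}.
\end{align*}
The next step is to remove the $(\cdot)_{y}^{+}$ clipping at the price of an inequality. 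Since $q_{ij},p_{j}\ge 0$ along the trajectory, clipping only matters when $q_{ij}=0$ with $f_{ij}-p_{j}<0$ (or $p_{j}=0$ with $\sum_{i}q_{ij}-Q_{j}<0$). In these cases the clipped term is $0$ while the unclipped term equals $-q_{ij}^{*}(f_{ij}-p_{j})\ge 0$ (resp.\ $-p_{j}^{*}(\sum_{i}q_{ij}-Q_{j})\ge 0$), so the clipping can only decrease the value. Therefore
\begin{align*}
\dot{V}(t)\leq \sum_{i,j}(q_{ij}-q_{ij}^{*})(f_{ij}-p_{j})+\sum_{j}(p_{j}-p_{j}^{*})\Bigl(\sum_{i}q_{ij}-Q_{j}\Bigr).
\end{align*}

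Next I would use complementary slackness $p_{j}^{*}(\sum_{i}q_{ij}^{*}-Q_{j})=0$ together with primal feasibility $\sum_{i}q_{ij}^{*}\le Q_{j}$. Writing $\sum_{i}q_{ij}-Q_{j}=\sum_{i}(q_{ij}-q_{ij}^{*})+(\sum_{i}q_{ij}^{*}-Q_{j})$, the residual $(p_{j}-p_{j}^{*})(\sum_{i}q_{ij}^{*}-Q_{j})$ is non-positive (it vanishes when $p_{j}^{*}>0$ and equals $p_{j}\cdot(\text{nonpositive})\le 0$ when $p_{j}^{*}=0$). Collecting what remains,
\begin{align*}
\dot{V}(t)\leq \sum_{i,j}(q_{ij}-q_{ij}^{*})(f_{ij}-p_{j}^{*}).
\end{align*}
Now I would split this as $\sum_{i,j}(q_{ij}-q_{ij}^{*})(f_{ij}-f_{ij}^{*})+\sum_{i,j}(q_{ij}-q_{ij}^{*})(f_{ij}^{*}-p_{j}^{*})$. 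The stationarity KKT conditions of SWO give $f_{ij}^{*}\le p_{j}^{*}$ with equality whenever $q_{ij}^{*}>0$, so each summand in the second sum reduces to $q_{ij}(f_{ij}^{*}-p_{j}^{*})\le 0$. For the first sum, using $f_{ij}=c_{ij}u_{i}'(x_{i})$ and $x_{i}=\sum_{j}c_{ij}q_{ij}$,
\begin{align*}
\sum_{i,j}(q_{ij}-q_{ij}^{*})c_{ij}\bigl(u_{i}'(x_{i})-u_{i}'(x_{i}^{*})\bigr)=\sum_{i}(x_{i}-x_{i}^{*})\bigl(u_{i}'(x_{i})-u_{i}'(x_{i}^{*})\bigr)\leq 0,
\end{align*}
where the last inequality follows from Assumption \ref{as:concaveUtility} (monotonicity of $u_{i}'$). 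Combining yields $\dot{V}(t)\le 0$.

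The main obstacle I expect is the bookkeeping around the $(\cdot)_{y}^{+}$ operator: one must verify carefully that on every boundary face ($q_{ij}=0$ with negative drift, or $p_{j}=0$ with negative drift) the clipped contribution is dominated by the unclipped one, using the signs of $q_{ij}^{*}$ and $p_{j}^{*}$. Once this clipping inequality is in place, the remainder is a standard Lyapunov-type argument that exploits KKT stationarity together with concavity of the utilities.
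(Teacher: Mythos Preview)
Your proposal is correct and follows essentially the same route as the paper: compute $\dot V$, drop the $(\cdot)^{+}_{y}$ projections at the cost of an inequality, cancel the price terms using the resource constraint, and split the remainder into a concavity term and a KKT complementary-slackness term. The only cosmetic differences are that (i) in this paper the resource constraint \eqref{eqn:clearing} is an equality, so $\sum_i q_{ij}^{*}=Q_j$ exactly and your ``residual'' term vanishes identically rather than merely being $\le 0$; and (ii) for the first sum the paper invokes the first-order concavity inequalities $\nabla u_i(\boldsymbol{q}_1)^{T}(\boldsymbol{q}_2-\boldsymbol{q}_1)\ge u_i(\boldsymbol{q}_2)-u_i(\boldsymbol{q}_1)$ and adds them, whereas you go directly to $(x_i-x_i^{*})(u_i'(x_i)-u_i'(x_i^{*}))\le 0$ via monotonicity of $u_i'$---which is the same thing written more compactly.
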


\begin{proof}
Proof is given in Appendix \ref{app:proofVdotLessThanZero}. The proof manipulates the expression for $\dot{V}$ and shows that it can be reduced to the following form:
\begin{align}
\dot{V}\leq \sum_{i}^{¥}\left(\sum_{j}(q_{ij}(t)-q_{ij}^{*})(f_{ij}(t)-f_{ij}^{*})\right)\nonumber \\+\sum_{i,j}^{¥}(q_{ij}(t)-q_{ij}^{*})(f_{ij}^{*}-p_{j}^{*}).
\label{eqn:VdotLessThenZero}
\end{align}
Using concavity of $u_{i}'s$ and properties of the equilibrium point $\boldsymbol{q}^{*},\boldsymbol{p}^{*}$, we can show that individual elements of the summations in (\ref{eqn:VdotLessThenZero}) are non-positive.
\end{proof} 

Combining Lemma \ref{lem:decreasingLaSalleFunction} and the  La Salle's \emph{invariance principle} (Theorem 4.4 of \cite{Khalil:2002uq}, p. 128) we can prove the following:

\begin{proposition}
The pair $\boldsymbol{q}(t),\boldsymbol{p}(t)$ converges to the invariant set $V_{L}=\{\boldsymbol{q}(t),\boldsymbol{p}(t): \dot{V}(\boldsymbol{q}(t),\boldsymbol{p}(t))=0\}$ as $t \rightarrow \infty$.
\label{prop:covergenceToVL}
\end{proposition}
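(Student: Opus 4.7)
The plan is to apply La Salle's invariance principle directly, as suggested in the text, exploiting the boundedness/non-increase of $V$ already provided by Lemma \ref{lem:decreasingLaSalleFunction}. The argument can be broken into three steps.

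First, I would verify that $V$ has the properties needed for a Lyapunov-type analysis. A direct computation of the integrals in the definition (completing the square inside each) gives
\begin{align*}
V(\boldsymbol{q},\boldsymbol{p}) - V(\boldsymbol{q}^{*},\boldsymbol{p}^{*}) = \sum_{i,j} \frac{(q_{ij}-q_{ij}^{*})^{2}}{2k_{ij}^{q}} + \sum_{j} \frac{(p_{j}-p_{j}^{*})^{2}}{2k_{j}^{p}},
\end{align*}
which is a positive-definite quadratic in $(\boldsymbol{q}-\boldsymbol{q}^{*},\boldsymbol{p}-\boldsymbol{p}^{*})$. Hence $V$ is continuously differentiable, bounded from below by $V(\boldsymbol{q}^{*},\boldsymbol{p}^{*})$ (as already asserted in the text), and radially unbounded, so every sublevel set $\{(\boldsymbol{q},\boldsymbol{p}): V \leq c\}$ is compact.

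Second, I would use Lemma \ref{lem:decreasingLaSalleFunction} to show that the sublevel set of $V$ containing the initial condition is positively invariant under (\ref{eqn:primal})--(\ref{eqn:dual}). Since $\dot{V}\leq 0$, $V(\boldsymbol{q}(t),\boldsymbol{p}(t))$ is non-increasing along any solution trajectory, and the trajectory remains in that sublevel set for all $t\geq 0$. The projection operator $(\cdot)^{+}_{y}$ in (\ref{eqn:primal})--(\ref{eqn:dual}) additionally keeps $\boldsymbol{q}(t)$ and $\boldsymbol{p}(t)$ in the nonnegative orthant. Intersecting these two forward-invariant sets yields a compact positively invariant set $\Omega$ containing the trajectory, on which the standard hypotheses of La Salle's theorem are met. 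Applying Theorem 4.4 of \cite{Khalil:2002uq} on $\Omega$ then guarantees that every solution starting in $\Omega$ approaches the largest invariant subset of $\{(\boldsymbol{q},\boldsymbol{p})\in\Omega : \dot{V}(\boldsymbol{q},\boldsymbol{p})=0\}$, and since this largest invariant subset is a subset of $V_{L}$, the trajectory converges to $V_{L}$ as $t\to\infty$.

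The main obstacle is a technical one: the right-hand sides of (\ref{eqn:primal})--(\ref{eqn:dual}) are not continuous on the boundary of the nonnegative orthant because of the projection $(\cdot)^{+}_{y}$. The classical statement of La Salle's principle assumes a locally Lipschitz vector field, so one has to justify that the result still applies --- for instance by working with Carathéodory solutions or by invoking a version of La Salle tailored to projected dynamical systems. A related preliminary point is existence and uniqueness of the solution trajectory starting from an arbitrary nonnegative initial condition, which is what lets us speak of \emph{the} trajectory in the first place; boundedness (from Step 2) then gives that the trajectory is defined for all $t\geq 0$. Once these technicalities are in place, the application of the invariance principle is essentially immediate.
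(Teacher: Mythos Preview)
Your proposal is correct and follows essentially the same approach as the paper: the paper's entire argument is the one-line remark ``Combining Lemma~\ref{lem:decreasingLaSalleFunction} and La~Salle's invariance principle (Theorem~4.4 of \cite{Khalil:2002uq}),'' so you have in fact supplied more detail than the paper does (the explicit quadratic form of $V$, radial unboundedness, and compactness of sublevel sets). The technical concern you raise about the discontinuity of $(\cdot)^{+}_{y}$ on the boundary is legitimate and is simply not addressed in the paper.
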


{It is clear that the invariant set $V_{L}$ contains the solution trajectory that has the value of the unique maximizer of SWO $(\boldsymbol{q}^{*}(t),\boldsymbol{p}^{*}(t))=(\boldsymbol{q}^{*},\boldsymbol{p}^{*})$ for all $t$, since $\dot{V}(\boldsymbol{q}^{*},\boldsymbol{p}^{*})=0$. However it may contain other points as well. When the trajectory $(\boldsymbol{q}(t),\boldsymbol{p}(t))$ enters the invariant set, it either reaches its minimum (i.e., by converging to the unique equilibrium point $(\boldsymbol{q}^{*},\boldsymbol{p}^{*})$), or it gets stuck permanently in some limit cycle.
In either case, the trajectory will be confined to a subset of $V_{L}=\{(\boldsymbol{q}(t),\boldsymbol{p}(t)): \dot{V}(\boldsymbol{q}(t),\boldsymbol{p}(t))=0\}$.}

{The remainder of this section is to show that the invariant set $V_{L}$ contains only the equilibrium point $(\boldsymbol{q}^{*},\boldsymbol{p}^{*})$. This will be done in two steps. First, we show that the set $V_{L}$ has only one element for the majority of provider competition instances, without any restrictions on the variable update rates. Second, we provide a sufficient condition on the update rates so that the global convergences to the unique equilibrium point is also guaranteed in the remaining instances. } 

\subsection{Convergence when providers have decided customers}
\label{subsec:convergeFixed}
{In the following two sections we consider the properties of the solution trajectory \textbf{on the invariant set} $V_{L}$.}

The proof of Lemma \ref{lem:decreasingLaSalleFunction} shows that individual terms on the right-hand side of (\ref{eqn:VdotLessThenZero}) are non-positive. Combined with Proposition \ref{prop:covergenceToVL}, we get the following result:
\begin{corollary}
On the invariant set $V_{L}$, $\boldsymbol{q}(t), \boldsymbol{p}(t)$ are such that:
\begin{align}
\sum_{j} & (q_{ij}(t)-q_{ij}^{*})(f_{ij}(t)-f_{ij}^{*}) \nonumber \\ = & (\nabla u_{i}(\boldsymbol{q}_{i})(t)-\nabla u_{i}(\boldsymbol{q}_{i}^{*}))^{T}(\boldsymbol{q}_{i}(t)-\boldsymbol{q}_{i}^{*})  = 0, \forall i\in \mathcal{I} \label{eqn:sale1}\\
& (q_{ij}(t)-q_{ij}^{*})(f_{ij}^{*}-p_{j}^{*}) = 0, \text{ for all } i \in \mathcal{I},\; j\in \mathcal{J}, \label{eqn:sale2}
\end{align}
\end{corollary}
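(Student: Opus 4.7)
My plan is to show that on the invariant set $V_L$ every non-positive term in the upper bound for $\dot{V}$ established in Lemma \ref{lem:decreasingLaSalleFunction} must individually vanish, from which the two stated equalities follow immediately.

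The starting point is Proposition \ref{prop:covergenceToVL}, which tells us that on $V_L$ we have $\dot{V}(\boldsymbol{q}(t),\boldsymbol{p}(t))=0$. Lemma \ref{lem:decreasingLaSalleFunction} bounds $\dot{V}$ by the right-hand side of \eqref{eqn:VdotLessThenZero}, namely
\[
\sum_{i}\sum_{j}(q_{ij}(t)-q_{ij}^{*})(f_{ij}(t)-f_{ij}^{*})+\sum_{i,j}(q_{ij}(t)-q_{ij}^{*})(f_{ij}^{*}-p_{j}^{*}).
\]
The key qualitative content invoked in the proof of Lemma \ref{lem:decreasingLaSalleFunction} is that \emph{each} summand in both double sums is non-positive: the first sum's terms are $\leq 0$ by concavity of $u_i$ (applied to the inner product of the gradient difference with the argument difference), and the second sum's terms are $\leq 0$ by the KKT stationarity/complementary slackness conditions satisfied by $(\boldsymbol{q}^{*},\boldsymbol{p}^{*})$ as optimizer of SWO with multipliers $\boldsymbol{p}^{*}$.

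Combining these ingredients: since $\dot{V}=0$ is sandwiched between $0$ and a sum of non-positive terms, every one of those terms must equal zero. In particular, for each user $i$, $\sum_{j}(q_{ij}(t)-q_{ij}^{*})(f_{ij}(t)-f_{ij}^{*})=0$, which gives \eqref{eqn:sale1}; and for each pair $(i,j)$, $(q_{ij}(t)-q_{ij}^{*})(f_{ij}^{*}-p_{j}^{*})=0$, which gives \eqref{eqn:sale2}. The second equality in \eqref{eqn:sale1} is just the vector rewriting using $\nabla u_i(\boldsymbol{q}_i)=[f_{i1}\cdots f_{iJ}]^{T}$ from \eqref{eqn:marginalUtility}, so it requires no additional argument.

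The only subtle step is justifying that every individual summand is non-positive, rather than merely their collective sum. For the first sum this is per-user (by concavity applied user by user), which is exactly how it was grouped in the statement; for the second sum one needs to appeal to KKT, noting that $f_{ij}^{*}-p_{j}^{*}\leq 0$ always (optimality in the UPM of user $i$ from Lemma \ref{lem:uniqueEffectiveUPM}), with equality whenever $q_{ij}^{*}>0$, so the product with $(q_{ij}(t)-q_{ij}^{*})\geq -q_{ij}^{*}$ is indeed $\leq 0$ termwise. I do not foresee this being a serious obstacle since the hard work is already packaged inside Lemma \ref{lem:decreasingLaSalleFunction}; the corollary is essentially a bookkeeping statement that the only way for the chain of inequalities in that lemma to collapse to an equality is for each non-positive contribution to vanish separately.
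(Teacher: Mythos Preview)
Your proposal is correct and follows essentially the same route as the paper: the paper simply notes that the proof of Lemma~\ref{lem:decreasingLaSalleFunction} already establishes termwise non-positivity of the right-hand side of \eqref{eqn:VdotLessThenZero}, and combining this with $\dot{V}=0$ on $V_L$ (Proposition~\ref{prop:covergenceToVL}) forces each term to vanish. Your write-up is a slightly expanded version of that one-line argument; the only minor wobble is the phrase ``the product with $(q_{ij}(t)-q_{ij}^{*})\geq -q_{ij}^{*}$,'' which is not quite the right inequality to cite---what you actually need (and clearly intend) is that when $f_{ij}^{*}<p_{j}^{*}$ complementary slackness gives $q_{ij}^{*}=0$, so $q_{ij}(t)-q_{ij}^{*}=q_{ij}(t)\geq 0$.
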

where we recall that $\nabla u_{i}(\boldsymbol{q_{i}})(t)=[f_{i1} \cdots f_{iJ}]^{T}$ and $\nabla u_{i}(\boldsymbol{q_{i}^{*}})=[f_{i1}^{*} \cdots f_{iJ}^{*}]^{T}$.

{Expressions (\ref{eqn:sale1}) and (\ref{eqn:sale2}) give basic properties of the solution trajectories $\boldsymbol{q}(t),\boldsymbol{p}(t)$ on the invariant set. We next prove two intermediate results that give further characterization of $\boldsymbol{q}(t),\boldsymbol{p}(t)$ on $V_{L}$.}

\begin{lemma}
{For any point in the invariant set $V_L$, we have $f_{ij}(t)=f_{ij}^{*}$ for all $i\in \mathcal{I}, j\in \mathcal{J}$. In other words, any user $i$'s marginal utility with respect to its demand of any provider $j$ equals to the corresponding value at the unique equilibrium. In addition, $ q_{ij}(t)(f_{ij}^{*}-p_{j}^{*})=0$ on the invariant set $V_{L}$.}\label{lem:convergesToEquilMarginUtil}
\end{lemma}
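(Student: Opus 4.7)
The plan is to read off both conclusions directly from equations (\ref{eqn:sale1}) and (\ref{eqn:sale2}) by exploiting the specific one-dimensional structure of $\nabla u_i$ together with the KKT complementary slackness conditions at the equilibrium $(\boldsymbol{q}^*,\boldsymbol{p}^*)$.

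First I would tackle the statement $f_{ij}(t)=f_{ij}^{*}$. The key observation is that, from (\ref{eqn:marginalUtility}), the gradient factors as $\nabla u_i(\boldsymbol{q}_i)=u_i'(x_i)\,\boldsymbol{c}_i$, where $\boldsymbol{c}_i=[c_{i1},\ldots,c_{iJ}]^T$ and $x_i=\sum_j q_{ij}c_{ij}$. Plugging this into the inner product on the left of (\ref{eqn:sale1}) collapses it to a one-dimensional quantity:
\begin{align*}
(\nabla u_i(\boldsymbol{q}_i(t))-\nabla u_i(\boldsymbol{q}_i^{*}))^T(\boldsymbol{q}_i(t)-\boldsymbol{q}_i^{*})=(u_i'(x_i(t))-u_i'(x_i^{*}))(x_i(t)-x_i^{*}).
\end{align*}
By Assumption \ref{as:concaveUtility}, $u_i$ is strictly concave, so $u_i'$ is strictly decreasing, and the above product is strictly negative whenever $x_i(t)\neq x_i^{*}$. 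Since (\ref{eqn:sale1}) forces the product to be zero on $V_L$, I conclude $x_i(t)=x_i^{*}$ for every $i$. Then $f_{ij}(t)=c_{ij}u_i'(x_i(t))=c_{ij}u_i'(x_i^{*})=f_{ij}^{*}$, giving the first claim.

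For the second claim, I would substitute the just-established $f_{ij}(t)=f_{ij}^{*}$ back into (\ref{eqn:sale2}), obtaining $(q_{ij}(t)-q_{ij}^{*})(f_{ij}^{*}-p_j^{*})=0$. The remaining ingredient is the KKT system for SWO at the optimum $\boldsymbol{q}^{*}$ with multipliers $\boldsymbol{p}^{*}$: complementary slackness on the nonnegativity constraint $q_{ij}\ge 0$ yields $q_{ij}^{*}(f_{ij}^{*}-p_j^{*})=0$ for all $i,j$ (the stationarity condition gives $f_{ij}^{*}\le p_j^{*}$ with equality whenever $q_{ij}^{*}>0$). Adding this equation to the one from (\ref{eqn:sale2}) yields $q_{ij}(t)(f_{ij}^{*}-p_j^{*})=0$, completing the proof.

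The argument is short, and I do not see a serious obstacle: the only nontrivial ingredient is recognizing the one-dimensional reduction of the quadratic form in (\ref{eqn:sale1}) via the factorization $\nabla u_i=u_i'(x_i)\boldsymbol{c}_i$, which turns non-strict concavity in $\boldsymbol{q}_i$ into strict concavity in the scalar $x_i$. The rest is a direct substitution using SWO complementary slackness, which is already invoked in the paper (for instance in the paragraph preceding Section \ref{sec:oligopoly}).
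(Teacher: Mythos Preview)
Your proof is correct and essentially identical to the paper's: both use the factorization $\nabla u_i(\boldsymbol{q}_i)=u_i'(x_i)\boldsymbol{c}_i$ to collapse (\ref{eqn:sale1}) to a scalar identity forcing $x_i(t)=x_i^{*}$, and then combine (\ref{eqn:sale2}) with the KKT complementary slackness $q_{ij}^{*}(f_{ij}^{*}-p_j^{*})=0$ to obtain the second claim. One trivial wording slip: (\ref{eqn:sale2}) already reads $(q_{ij}(t)-q_{ij}^{*})(f_{ij}^{*}-p_j^{*})=0$, so no substitution of $f_{ij}(t)=f_{ij}^{*}$ is needed there.
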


The proof is given in Appendix \ref{app:MarginalUtilEqualEquilUtil}. 
{An equivalent way of stating the first part of Lemma \ref{lem:convergesToEquilMarginUtil} is that $x_{i}(t)=x^{*}_{i}$ on $V_{L}$: 
\begin{align}
\sum_{j}^{¥}q_{ij}(t)c_{ij}=\sum_{j}^{¥}q_{ij}^{*}c_{ij}, \text{ for all } i \in \mathcal{I} \label{eqn:sale3}.
\end{align}}
\noindent{The second part of Lemma \ref{lem:convergesToEquilMarginUtil} claims that $ q_{ij}(t)(f_{ij}^{*}-p_{j}^{*})=0$. From the proof of Lemma \ref{lem:uniqueEffectiveUPM}, we know $q^{*}_{ij}(f_{ij}^{*}-p_{j}^{*})=0$. So for $i,j$ such that $f_{ij}^{*}<p_{j}^{*}$, $q_{ij}^{*}=0$ implies that $q_{ij}(t)=0$. This is good news, since from Lemma \ref{lem:nbrUndecided} most users have zero demand to all but one provider at the unique equilibrium. Now we know that the same holds on the invariant set $V_{L}$. Similarly, $q_{ij}(t)>0$ only if $f^{*}_{ij}=p_{j}^{*}$.}

Lemma \ref{lem:convergesToEquilMarginUtil} does not preclude the possibility that a demand $q_{ij}(t)$ for a user with $f^{*}_{ij}=p_{j}^{*}$ may oscillate between being zero and being strictly positive. The following Lemma shows that this is not possible: 

\begin{lemma} 
The set $\hat{\mathcal{J}}_{i}(t)=\{j \in \mathcal{J}: q_{ij}(t)>0\}$ does not change over time on the invariant set. In addition, $p_{j}(t)>0$ on the invariant set for all $j\in \mathcal{J}$. 
\label{lem:borderq}
\end{lemma}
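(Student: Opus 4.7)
The plan is to argue, from the identities on $V_L$ already established in Lemma~\ref{lem:convergesToEquilMarginUtil}, that each $q_{ij}(t)$ and each $p_j(t)$ is pinned tightly enough to prevent crossing the boundary $0$, so that $\hat{\mathcal{J}}_i(t)$ is invariant and $p_j(t)$ remains strictly positive. From Lemma~\ref{lem:convergesToEquilMarginUtil} I have, on $V_L$, $f_{ij}(t) = f_{ij}^*$ (hence $x_i(t) = x_i^*$ constant by strict concavity of $u_i$) and $q_{ij}(t)(f_{ij}^* - p_j^*) = 0$. Combined with the KKT condition $f_{ij}^* \leq p_j^*$, this forces $q_{ij}(t) \equiv 0$ for every $j$ outside user $i$'s preference set $\mathcal{J}_i = \{j : f_{ij}^* = p_j^*\}$, so the support set can only vary within $\mathcal{J}_i$.

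The first step toward positivity of prices is to note that $p_j^* > 0$ for every $j$: if $p_j^*$ were zero, the KKT stationarity $f_{ij}^* \leq p_j^*$ would force $f_{ij}^* \leq 0$, contradicting $f_{ij}^* = c_{ij} u_i'(x_i^*) > 0$. Moreover, for any user $i$ with $j \notin \mathcal{J}_i$, the primal update with $q_{ij}(t) = 0$ identically reads $\dot q_{ij}(t) = k_{ij}^q(f_{ij}^* - p_j(t))^+$; for $q_{ij}$ to remain at $0$ on $V_L$ this must vanish, yielding $p_j(t) \geq f_{ij}^* > 0$. This already handles the typical case where some user has $j$ outside its preference set; the remaining case (every user has $j \in \mathcal{J}_i$) is handled together with the support-invariance argument below.

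For support invariance, I would induct on the bipartite graph representation $\mathcal{G}$ built on preference sets of undecided users, which, as noted in Section~\ref{sec:oligopoly}, is almost surely a forest (the analog of Lemma~\ref{lem:diffSupport}). Decided users provide an external anchor: for a decided $i$ with $\mathcal{J}_i = \{\bar{j}\}$, the invariance $x_i(t) = x_i^*$ forces $q_{i\bar{j}}(t) = x_i^*/c_{i\bar{j}} = q_{i\bar{j}}^*$, a positive constant, so $\dot q_{i\bar{j}}(t) = 0$, and (\ref{eqn:primal}) yields $p_{\bar{j}}(t) = f_{i\bar{j}}^* = p_{\bar{j}}^* > 0$. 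I then leaf-strip $\mathcal{G}$ exactly as in the BGR decoding algorithm: in each step, a leaf is either a provider $j$ with a single remaining undecided neighbor $i$ or an undecided user $i$ with a single remaining provider neighbor $j$. Using the previously pinned values on the stripped side together with either the user-side conservation $\sum_{j' \in \mathcal{J}_i} c_{ij'} q_{ij'}(t) = x_i^*$ or the provider-side supply balance $\sum_{i'} q_{i'j}(t) = Q_j$ (justified from (\ref{eqn:dual}) once $p_j$ has been shown constant and positive by the induction), the demand on the leaf edge is forced to equal $q_{ij}^*$. Constancy of $q_{ij}(t)$ then gives $\dot q_{ij}(t) = 0$; substituting into (\ref{eqn:primal}) yields $p_j(t) = p_j^* > 0$ when $q_{ij}^* > 0$, while the degenerate case $q_{ij}^* = 0$ still gives $p_j(t) \geq p_j^* > 0$ through the truncation $(\cdot)_{q_{ij}}^+$. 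Since every $q_{ij}(t)$ is constant, $\hat{\mathcal{J}}_i(t)$ is invariant.

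The main obstacle is the inductive step on the forest $\mathcal{G}$: one must be careful about the order of leaf removal and verify that, at each step, enough decided or already-pinned values are available on the stripped side to close the relevant conservation equation, and that the supply-balance identity invoked at provider leaves can indeed be justified from the dual update rule rather than assumed. The no-loop property of $\mathcal{G}$ is essential here, as it guarantees that a leaf always exists in the reduced graph and that the decomposition is well defined.
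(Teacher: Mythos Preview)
Your approach is fundamentally different from the paper's and overreaches the statement. The lemma asks only that the support sets $\hat{\mathcal{J}}_i(t)$ are time-invariant on $V_L$ and that $p_j(t) > 0$; it does \emph{not} claim $q_{ij}(t) = q_{ij}^*$ or $p_j(t)=p_j^*$. The paper proves precisely this weaker claim by a short ODE contradiction: if some $q_{ij}$ were to transition across the boundary $0$, the constancy of $x_i(t) = \sum_j c_{ij} q_{ij}(t)$ on $V_L$ (equation~(\ref{eqn:sale3})) would be violated upon differentiation; and if $p_j(t) = 0$ held on $V_L$, then $\dot{q}_{ij} = k_{ij}^q f_{ij}^* > 0$ would drive $q_{ij}$ to infinity, again contradicting $x_i(t) = x_i^*$. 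No structural or BGR argument is used.

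Your BGR leaf-stripping argument instead attempts to pin every $q_{ij}(t)$ to $q_{ij}^*$ and every $p_j(t)$ to $p_j^*$ on $V_L$---that is, full convergence. This is the content of the \emph{subsequent} results (Theorems~\ref{thm:fixedProviders} and~\ref{thm:looseProviders}), and the latter requires extra hypotheses on the update rates. The obstacle you yourself flag is real and unresolved: processing a provider leaf $j$ requires the supply balance $\sum_{i'} q_{i'j}(t) = Q_j$, which in turn requires $p_j$ already constant and positive. In a connected component of the BGR whose providers have no decided customers (exactly the setting of Section~\ref{subsec:convergeLoose}), no anchor is available and the induction never starts: every initial leaf is a provider node, and none can be processed. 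A single undecided user connected to two providers with no decided customers is already a counterexample to your scheme. The paper's architecture deliberately avoids this circularity: Lemma~\ref{lem:borderq} establishes only enough (support invariance and price positivity) to \emph{linearize} the projected dynamics on $V_L$ as in (\ref{eqn:derivMatrix1})--(\ref{eqn:derivMatrix2}), and the rank argument of Theorem~\ref{thm:looseProviders} then handles the anchorless case under the rate conditions. Your proposal effectively tries to absorb Theorems~\ref{thm:fixedProviders} and~\ref{thm:looseProviders} into the lemma, which cannot succeed without those additional hypotheses.
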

{The proof is given in Appendix \ref{app:supportConstantOnInvariant}. Lemma \ref{lem:borderq} implies that if $q_{ij}(t)=0$ on the invariant set $V_{L}$, then $\dot{q}_{ij}(t)=0$ on $V_{L}$. Also, if $q_{ij}(t)>0$ then $\dot{q}_{ij}(t)=k_{i j}^{q}\left(f_{i j}^{*}-p_{j}\right)$, as strictly positive demand stays strictly positive on the invariant set.}

We are now ready to claim the main result of this subsection:
\begin{theorem}
{A demand vector of a decided user $i\in \mathcal{I}$ converges to the equilibrium demand vector, i.e. $\lim_{t\rightarrow \infty}\boldsymbol{q}_{i}(t)=\boldsymbol{q}^{*}_{i}$. The price $p_{j}$ of any provider $j$ who has at least one decided user at the equilibrium $(\boldsymbol{q}^{*},\boldsymbol{p}^{*})$, converges to the equilibrium price, i.e. $\lim_{t\rightarrow \infty} p_j(t) =p_j^\ast$. }
\label{thm:fixedProviders}
\end{theorem}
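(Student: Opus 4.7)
The plan is to combine Proposition~\ref{prop:covergenceToVL} with the structural results on the invariant set $V_{L}$ collected in Lemmas~\ref{lem:convergesToEquilMarginUtil} and~\ref{lem:borderq}: I would show that at every point of $V_{L}$ the coordinates of interest are already pinned to their equilibrium values, after which set-convergence immediately yields pointwise convergence of those coordinates. The first step is to exploit Lemma~\ref{lem:convergesToEquilMarginUtil}. The identity $f_{ij}(t)=f_{ij}^{*}$ on $V_{L}$ reads $c_{ij}u_{i}'(x_{i}(t))=c_{ij}u_{i}'(x_{i}^{*})$, so strict concavity of $u_{i}$ (Assumption~\ref{as:concaveUtility}) gives $x_{i}(t)=x_{i}^{*}$ for every $i$. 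The companion identity $q_{ij}(t)(f_{ij}^{*}-p_{j}^{*})=0$, combined with the KKT relation $f_{ij}^{*}\le p_{j}^{*}$ from Lemma~\ref{lem:uniqueEffectiveUPM}, forces $q_{ij}(t)=0$ whenever $j$ lies outside the equilibrium preference set $\mathcal{J}_{i}^{*}=\{j:f_{ij}^{*}=p_{j}^{*}\}$.

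For the demand statement, fix a decided user $i$ with equilibrium support $\hat{\mathcal{J}}_{i}^{*}=\{\bar{\jmath}\}$. Under Assumption~\ref{as:channelGainsOffsetFunction}, the preference-set BGR has no loops with probability one, and an adaptation of the argument sketched after Lemma~\ref{lem:uniqueEffectiveUPM} (mirroring Appendix~\ref{app:diffSupport}) shows that almost surely $\mathcal{J}_{i}^{*}=\{\bar{\jmath}\}$ as well: any equality $p_{j'}^{*}/c_{ij'}=p_{\bar{\jmath}}^{*}/c_{i\bar{\jmath}}$ for some $j'\ne\bar{\jmath}$ would insert a superfluous edge at a decided-user node, which fails with probability zero. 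The preceding paragraph then yields $q_{ij}(t)=0$ for $j\ne\bar{\jmath}$ on $V_{L}$, so $x_{i}(t)=q_{i\bar{\jmath}}(t)c_{i\bar{\jmath}}=x_{i}^{*}$ forces $q_{i\bar{\jmath}}(t)=x_{i}^{*}/c_{i\bar{\jmath}}=q_{i\bar{\jmath}}^{*}$. Consequently the $\boldsymbol{q}_{i}$-block is identically $\boldsymbol{q}_{i}^{*}$ throughout $V_{L}$, and Proposition~\ref{prop:covergenceToVL} delivers $\lim_{t\to\infty}\boldsymbol{q}_{i}(t)=\boldsymbol{q}_{i}^{*}$.

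For the price statement, let $\bar{\jmath}$ be a provider with at least one decided customer $i$, so $q_{i\bar{\jmath}}^{*}>0$. The preceding paragraph gives $q_{i\bar{\jmath}}(t)=q_{i\bar{\jmath}}^{*}>0$ on $V_{L}$, and Lemma~\ref{lem:borderq} ensures that this strict positivity persists, so the primal update~\eqref{eqn:primal} reduces on $V_{L}$ to $\dot{q}_{i\bar{\jmath}}(t)=k_{i\bar{\jmath}}^{q}(f_{i\bar{\jmath}}^{*}-p_{\bar{\jmath}}(t))$. Because $q_{i\bar{\jmath}}(t)$ is the constant $q_{i\bar{\jmath}}^{*}$ on $V_{L}$, the left-hand side vanishes, forcing $p_{\bar{\jmath}}(t)=f_{i\bar{\jmath}}^{*}$. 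Applying the KKT condition of the UPM with $q_{i\bar{\jmath}}^{*}>0$ identifies $f_{i\bar{\jmath}}^{*}$ with $p_{\bar{\jmath}}^{*}$, so $p_{\bar{\jmath}}$ is pinned to $p_{\bar{\jmath}}^{*}$ throughout $V_{L}$, and Proposition~\ref{prop:covergenceToVL} gives $\lim_{t\to\infty}p_{\bar{\jmath}}(t)=p_{\bar{\jmath}}^{*}$.

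The step I expect to be the main obstacle is the genericity claim $\mathcal{J}_{i}^{*}=\{\bar{\jmath}\}$. Because the equilibrium prices $p_{j}^{*}$ are themselves implicit functions of the random channel offsets $c_{ij}$, one cannot rule out the coincidence $p_{j'}^{*}/c_{ij'}=p_{\bar{\jmath}}^{*}/c_{i\bar{\jmath}}$ by a naive continuity argument; the justification must go through the no-loop property of the preference-set BGR, exactly as in Appendix~\ref{app:diffSupport}. Everything downstream of this step is direct bookkeeping on the invariant set combined with the already-established La Salle convergence.
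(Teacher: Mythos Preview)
Your argument is correct and follows the same route as the paper: pin $x_{i}(t)=x_{i}^{*}$ and $q_{ij}(t)=0$ for $j\notin\mathcal{J}_{i}^{*}$ on $V_{L}$ via Lemma~\ref{lem:convergesToEquilMarginUtil}, deduce $q_{i\bar{\jmath}}(t)=q_{i\bar{\jmath}}^{*}$ for a decided user, and then read off $p_{\bar{\jmath}}(t)=f_{i\bar{\jmath}}^{*}=p_{\bar{\jmath}}^{*}$ from the unprojected primal update using Lemma~\ref{lem:borderq}.

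The obstacle you flag, however, is not actually present. In the paper's usage from Section~\ref{sec:oligopoly} onward, a \emph{decided} user is by definition one whose \emph{preference} set $\mathcal{J}_{i}(\boldsymbol{p}^{*})$ is a singleton, not merely the support set $\hat{\mathcal{J}}_{i}$. Hence the strict inequalities $f_{ij}^{*}<p_{j}^{*}$ for $j\neq\bar{\jmath}$ are part of the hypothesis, and the paper's proof simply opens with ``Consider an arbitrary user $i$ who at the equilibrium has only one preferred provider $\bar{\jmath}$, i.e., $f_{i\bar{\jmath}}^{*}=p_{\bar{\jmath}}^{*}$ \ldots\ and $f_{ij}^{*}<p_{j}^{*}$ for $j\neq\bar{\jmath}$.'' No genericity argument bridging $\hat{\mathcal{J}}_{i}^{*}$ to $\mathcal{J}_{i}^{*}$ is needed; your detour through the preference-set BGR is harmless but superfluous once the definition is read this way.
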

\begin{proof} {Consider an arbitrary user $i$ who at the equilibrium has only one preferred provider $\bar{\jmath}$, i.e., $f_{i\bar{\jmath}}^{*}=p_{\bar{\jmath}}^{*}$ for $\bar{\jmath}$ such that $q^{*}_{i\bar{\jmath}}>0$, and $f_{ij}^{*}<p_{j}^{*}$ for $j \neq \bar{\jmath}$. By Lemma \ref{lem:convergesToEquilMarginUtil}, this implies $q_{i\bar{\jmath}}(t)>0$, $q_{ij}(t)=0$  and $\dot{q}_{ij}(t)=0$ for all $j\neq\bar{\jmath}$. Combined with (\ref{eqn:sale3}), this means that $q_{i\bar{\jmath}}(t)=q_{i\bar{\jmath}}^{*}$ and $\dot{q}_{i\bar{\jmath}}(t)=0$, i.e., user $i$'s demand vector converges to its equilibrium value.} For provider $\bar{\jmath}$, $q_{i\bar{\jmath}}>0$ so by Lemma \ref{lem:borderq}:
\begin{align*}
0=& \dot{q}_{i\bar{\jmath}}(t)=k_{i\bar{\jmath}}^{q}\left(f_{i\bar{\jmath}}^{*}-p_{\bar{\jmath}}(t)\right)^{+}_{q_{i\bar{\jmath}}}=k_{i\bar{\jmath}}^{q}\left(f_{i\bar{\jmath}}^{*}-p_{\bar{\jmath}}(t)\right).
\end{align*}

From this it follows that $p_{\bar{\jmath}}(t)=f_{i\bar{\jmath}}^{*}=p_{\bar{\jmath}}^{*}$. Further differentiation yields $\dot{p}_{\bar{\jmath}}(t)=0$, meaning that the prices have also converged, which completes the proof.
\end{proof}
{\begin{theorem}
If every provider has at least one decided customer in the unique equilibrium of the provider competition game, the primal dual algorithm converges to this equilibrium.
\end{theorem}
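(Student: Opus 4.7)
The plan is to combine the previous theorem with the BGR decoding machinery of Lemma \ref{lem:diffSupport}. Under the hypothesis that every provider has at least one decided customer at the equilibrium, Theorem \ref{thm:fixedProviders} immediately yields $p_j(t)\to p_j^*$ for every $j\in\mathcal{J}$ and $\boldsymbol{q}_i(t)\to\boldsymbol{q}_i^*$ for every decided user $i$. By Proposition \ref{prop:covergenceToVL}, it then suffices to show $V_L=\{(\boldsymbol{q}^*,\boldsymbol{p}^*)\}$, i.e.\ that $\boldsymbol{q}(t)=\boldsymbol{q}^*$ at every point of the invariant set $V_L$.

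On $V_L$ I would first collect the linear equations constraining $\boldsymbol{q}(t)$. Lemma \ref{lem:convergesToEquilMarginUtil} supplies the user-side check-sums $\sum_j q_{ij}(t)c_{ij}=x_i^*$ for every $i$, and tells us that $q_{ij}(t)>0$ implies $j\in\mathcal{J}_i(\boldsymbol{p}^*)$. Lemma \ref{lem:borderq} guarantees $p_j(t)>0$ on $V_L$, so the dual update \eqref{eqn:dual} combined with $\dot{p}_j(t)=0$ (forced by $p_j(t)=p_j^*$ being constant) yields the provider-side check-sums $\sum_i q_{ij}(t)=Q_j$ for every $j$.

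Next I would invoke the BGR structure. Build the bipartite graph whose edges are the pairs $(i,j)$ with $j\in\mathcal{J}_i(\boldsymbol{p}^*)$. As already remarked following Lemma \ref{lem:uniqueEffectiveUPM}, the same measure-zero argument used in Appendix \ref{app:diffSupport} shows that this preference-set BGR contains no loops with probability $1$; hence it is a forest. Both $\boldsymbol{q}(t)\vert_{V_L}$ and $\boldsymbol{q}^*$ can be viewed as non-negative edge-value assignments on this forest that satisfy the two families of check-sum equations collected above, with zero placed on any preferred edge not in the respective support. Because the check-sum system on a forest admits at most one solution---obtained by the same iterative leaf-stripping procedure used in the proof of Lemma \ref{lem:diffSupport}---we conclude $\boldsymbol{q}(t)=\boldsymbol{q}^*$ on $V_L$, and therefore $(\boldsymbol{q}(t),\boldsymbol{p}(t))\to(\boldsymbol{q}^*,\boldsymbol{p}^*)$.

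The main obstacle I anticipate is justifying that the decoding argument applies to the preference-set BGR rather than to the (possibly smaller) equilibrium support-set BGR from Lemma \ref{lem:diffSupport}: a priori, the support of $\boldsymbol{q}(t)$ on $V_L$ need not coincide with the support of $\boldsymbol{q}^*$. The key observation that resolves this is that on a loop-free bipartite graph the check-sum system has at most one solution no matter which of its edges end up carrying zero value, so the two a priori different supports are both forced to match the unique leaf-stripping output, which is $\boldsymbol{q}^*$.
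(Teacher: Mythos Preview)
Your proposal is correct and follows essentially the same approach as the paper: invoke Theorem~\ref{thm:fixedProviders} for the prices and decided users, then use the BGR decoding argument of Lemma~\ref{lem:diffSupport} to pin down the undecided users' demands uniquely. You are in fact more careful than the paper's proof, which simply says ``by an argument similar to the proof of Lemma~\ref{lem:diffSupport}'' without spelling out how the provider-side check-sums $\sum_i q_{ij}(t)=Q_j$ arise on $V_L$ (you derive them from $\dot p_j=0$ and $p_j>0$), nor does it address your anticipated obstacle about the support of $\boldsymbol q(t)$ on $V_L$ versus that of $\boldsymbol q^*$; your resolution via the preference-set BGR and the observation that leaf-stripping on a forest forces a unique solution regardless of which edges happen to be zero is exactly the right way to close that gap.
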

{\begin{proof}
By Theorem~\ref{thm:fixedProviders}, the prices of all providers and the demand vectors of all decided users converge on the invariant set. It remains to be shown that the demand vectors of undecided users also converge. By an argument similar to the proof of Lemma~\ref{lem:diffSupport}, we can draw a BGR for the undecided users. The demand vectors of undecided users that satisfy the constraints on the BGR are unique, and thus also converge.
\end{proof}}}

{In most practical cases, where the number of users is much larger than the number of providers, all providers will have at least one decided user, and hence the convergence of the primal-dual algorithm is guaranteed. Next we study the more complicated case where some providers do not have any associated decided users. In that case, we can still prove the convergence of the primal-dual algorithm under mild conditions of the variable update rates. }

\subsection{Convergence when providers have no decided customers}
\label{subsec:convergeLoose}
Without loss of generality, we now focus on the problem where all the users are undecided at the equilibrium. If we can prove that the algorithm converges in this case, then we can also prove convergence in the more general case where some providers have decided users. Let $I$ be the number of undecided users, and $J$ the number of providers. From Lemma \ref{lem:nbrUndecided}  we know that $I<J$. 

\begin{theorem}
\label{thm:looseProviders}
{Let $I<J$, and suppose that at the equilibrium $|\{j \in \mathcal{J}:q_{ij}^{*}>0\}|>1$ for all $i \in \mathcal{I}$. The primal-dual algorithm converges to the unique equilibrium if the price update rates $k^{p}_{j}$ are not integer multiples of each other and the demand update rates are equal, i.e. $k^{q}_{ij}=k$ $\forall i \in \mathcal{I}, \forall j\in \mathcal{J}$.}
\end{theorem}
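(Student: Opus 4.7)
The plan is to work entirely on the invariant set $V_L$ from Proposition~\ref{prop:covergenceToVL} and to show that on $V_L$ the primal--dual dynamics admit only the constant trajectory $(\boldsymbol{q}(t),\boldsymbol{p}(t))\equiv(\boldsymbol{q}^\ast,\boldsymbol{p}^\ast)$. Lemma~\ref{lem:convergesToEquilMarginUtil} and Lemma~\ref{lem:borderq} collapse the dynamics substantially on $V_L$: each user's effective resource $x_i(t)$ is pinned at $x_i^\ast$; each support $\hat{\mathcal{J}}_i$ is time-invariant; and $p_j(t)>0$, so the projection in the dual update~(\ref{eqn:dual}) is inactive. Combined with the equilibrium optimality condition $f_{ij}^\ast=p_j^\ast$ whenever $j\in\hat{\mathcal{J}}_i^\ast$ (from Lemma~\ref{lem:uniqueEffectiveUPM}), the primal--dual equations reduce on $V_L$ to the linear system $\dot{q}_{ij}(t)=k\bigl(p_j^\ast-p_j(t)\bigr)$ for $j\in\hat{\mathcal{J}}_i^\ast$ (and $\dot{q}_{ij}(t)=0$ otherwise), together with $\dot{p}_j(t)=k_j^p\bigl(\sum_i q_{ij}(t)-Q_j\bigr)$.

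Differentiating the dual equation and substituting the primal one yields
\begin{equation*}
\ddot{p}_j(t)=k_j^p\sum_i \dot{q}_{ij}(t)=-k\,k_j^p\,n_j\bigl(p_j(t)-p_j^\ast\bigr),
\end{equation*}
where $n_j=|\{i:j\in\hat{\mathcal{J}}_i^\ast\}|$ counts the undecided users who buy from provider $j$ at the equilibrium. Under Assumption~\ref{as:RRA} every provider sells its full supply $Q_j>0$ at equilibrium, so $n_j\ge 1$ and each price coordinate satisfies a genuine harmonic oscillator $\ddot{p}_j+\omega_j^2(p_j-p_j^\ast)=0$ with $\omega_j=\sqrt{k\,k_j^p\,n_j}>0$. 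Hence $p_j(t)=p_j^\ast+A_j\cos(\omega_j t+\phi_j)$ on $V_L$, for some constants $A_j,\phi_j$.

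On the other hand, differentiating the constraint $x_i(t)\equiv x_i^\ast$ yields $\sum_{j\in\hat{\mathcal{J}}_i^\ast}c_{ij}\bigl(p_j^\ast-p_j(t)\bigr)=0$ for every $i$ and every $t$, i.e.\
\begin{equation*}
\sum_{j\in\hat{\mathcal{J}}_i^\ast} c_{ij}A_j\cos(\omega_j t+\phi_j)=0 \quad\text{for all } t.
\end{equation*}
The hypothesis that the $k_j^p$'s are not integer multiples of one another, combined with the common $k_{ij}^q=k$, is exactly what is needed to ensure the frequencies $\omega_j=\sqrt{k\,k_j^p\,n_j}$ are pairwise distinct. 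Linear independence of $\{\cos(\omega_j t+\phi_j)\}_{j\in\hat{\mathcal{J}}_i^\ast}$ then forces $c_{ij}A_j=0$ for every $(i,j)$ with $j\in\hat{\mathcal{J}}_i^\ast$; since $c_{ij}>0$, each $A_j=0$, so $p_j(t)\equiv p_j^\ast$ on $V_L$. Plugging back, the primal equation gives $\dot{q}_{ij}(t)=0$ (demands are constant on $V_L$), while $\dot{p}_j=0$ gives the supply equality $\sum_i q_{ij}(t)=Q_j$. Together with $\sum_j c_{ij}q_{ij}(t)=x_i^\ast$ and the fixed support sets, these are exactly the BGR check-sum constraints of Lemma~\ref{lem:diffSupport}, whose loop-free decoding singles out the unique solution $q_{ij}(t)=q_{ij}^\ast$.

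The main obstacle, and the only step that actually consumes the specific conditions on the update rates, is the translation of the hypothesis ``$k_j^p$ not integer multiples'' plus $k_{ij}^q=k$ into the genuine pairwise distinctness of the $\omega_j$'s: if two frequencies coincided, the linear-independence step would fail and nontrivial oscillatory trajectories on $V_L$ could survive. Everything else in the argument is essentially forced by the previously established structure of $V_L$ (constant effective resource, fixed support, strictly positive prices), so once the frequencies are shown distinct, convergence of the full primal--dual trajectory to $(\boldsymbol{q}^\ast,\boldsymbol{p}^\ast)$ follows from Proposition~\ref{prop:covergenceToVL}.
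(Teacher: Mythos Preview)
Your approach is correct and takes a genuinely different route from the paper's. The paper linearizes the primal--dual dynamics on $V_L$ in matrix form, then repeatedly differentiates the constraint $C\boldsymbol{q}(t)=C\boldsymbol{q}^{*}$ to obtain $\mathcal{B}\boldsymbol{p}(t)=E$ with $\mathcal{B}=[B;BD;\ldots;BD^{J-1}]$, $B=C\hat{K}^{q}A$, $D=K^{p}A^{T}\hat{K}^{q}A$; it then invokes an observability-type criterion (such a $\mathcal{B}$ has full column rank iff $[B;D-\lambda\mathbb{I}]$ does for every eigenvalue $\lambda$ of $D$) and verifies the latter by inspection. You instead notice that on $V_L$ the price coordinates decouple into independent harmonic oscillators $\ddot{p}_j+\omega_j^{2}(p_j-p_j^{*})=0$ with $\omega_j^{2}=k\,k_j^{p}\,n_j$, and kill the amplitudes via the constraint $\sum_{j}c_{ij}(p_j^{*}-p_j(t))\equiv 0$ together with the linear independence of sinusoids of distinct frequency. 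Both arguments bottom out in the same algebraic requirement---that the diagonal entries $k\,k_j^{p}\,n_j$ of $D$ be pairwise distinct---which is precisely what the update-rate hypotheses are meant to secure (the paper makes this explicit in the remark immediately after the theorem). Your route is more elementary and makes transparent \emph{why} the rate conditions matter; the paper's matrix route is more systematic and would extend more readily had the second-order price dynamics not decoupled so cleanly.

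One small slip: you write the active index set on $V_L$ as the equilibrium support $\hat{\mathcal{J}}_i^{*}$, but Lemmas~\ref{lem:convergesToEquilMarginUtil} and~\ref{lem:borderq} only give a time-invariant support $\hat{\mathcal{J}}_i\subseteq\mathcal{J}_i(\boldsymbol{p}^{*})$, not a priori equal to $\hat{\mathcal{J}}_i^{*}$. The oscillator step is unaffected (take $n_j=|\{i:q_{ij}(t)>0\}|$, still $\geq 1$ since $n_j=0$ would force $\dot{p}_j=-k_j^{p}Q_j<0$, contradicting $p_j(t)>0$). Once $\boldsymbol{p}(t)\equiv\boldsymbol{p}^{*}$ and $\dot{\boldsymbol{q}}=\dot{\boldsymbol{p}}=0$, the constant $\boldsymbol{q}$ on $V_L$ satisfies all SWO constraints and attains $u(\boldsymbol{x}^{*})$, so Theorem~\ref{thm:uniqueQ} (rather than re-running BGR decoding on a possibly different support) directly forces $\boldsymbol{q}=\boldsymbol{q}^{*}$.
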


\begin{IEEEproof} {We first define some matrices to facilitate the proof. Let $C$ be the $I \times IJ$ matrix of channel offsets $c_{ij}$. Let $\mathbb{I}_{J}$ be the identity matrix of size $J \times J$. Define matrix $A$ to be the $IJ \times J$ matrix of $I$ vertically stacked identity matrices. Let $K^{p}=diag(k^{p}_{j}, j \in \mathcal{J})$ be a $J\times J$ diagonal matrix containing price update rates. Let the $IJ\times IJ$ diagonal matrix $K^{q}$ be the matrix of demand update rates whose $y^{th}$ entry is $K^{q}_{y,y}=k^{q}_{i,j}$, where $i= \lfloor (y-1)/J \rfloor+1 $ and $j=((y-1) \mod J)+1$, and the off-diagonal entries are all zeros. Here $\lfloor x \rfloor$ is the floor function representing the largest integer not larger than $x$. }

By (\ref{eqn:sale3}), we know that the effective resource of all users has converged on the invariant set $V_{L}$. We rewrite this as $\boldsymbol{c}^{T}_{i}(t)\boldsymbol{q}_{i}=\boldsymbol{c}^{T}_{i}\boldsymbol{q}^{*}_{i}$ for all $i \in \mathcal{I}$, or $C \boldsymbol{q}(t)=C\boldsymbol{q}^{*}$.

We want to express the primal-dual update algorithm (\ref{eqn:primal}) and (\ref{eqn:dual}) in matrix form. The final hurdle is getting rid of the $(x)^{+}_{y}$ operation. From Lemma \ref{lem:borderq} we know that $\boldsymbol{p}(t)>0$ and that the support sets of vectors $\boldsymbol{q}(t)$ do not change on the invariant set. Hence, we can write $\dot{q}_{ij}=k_{ij}^{q}\left( f_{ij}-p_{j} \right)_{q_{ij}}^{+}= k_{ij}^{q}\left( f^{*}_{ij}-p_{j} \right)\mathbbm{1}_{(0,\infty)}(q_{ij})$ (recall that $f_{ij}=f_{ij}^{*}$ on the invariant set). This enables us to revise the definition of the update rates matrix to be $\hat{K}^{q}=diag(k^{q}_{ij}\mathbbm{1}_{(0,\infty)}(q_{ij}), i \in \mathcal{I}, j \in \mathcal{J})$. 

Then, (\ref{eqn:primal}) and (\ref{eqn:dual}) can be written as:
\begin{align}
\dot {\boldsymbol{q}}= &\hat{K}^{q} \left(\boldsymbol{f}^{*}-A\boldsymbol{p}(t)\right) \label{eqn:derivMatrix1}\\
\dot {\boldsymbol{p}}= &K^{p}\left(A^{T}\boldsymbol{q}(t)-Q\right).\label{eqn:derivMatrix2}
\end{align}
{Notice that \eqref{eqn:derivMatrix1} and \eqref{eqn:derivMatrix2} form a system of linear equations, so the non-linear primal dual dynamics defined in (\ref{eqn:primal}) and (\ref{eqn:dual}) becomes linear on the invariant set.} {The following result paves the way to showing that $\boldsymbol{p}(t)$ is constant on $V_{L}$.}

\begin{lemma}
\label{lem:BpequalsE}
{Let $E$ and $\mathcal{B}$ be constant matrices, where $\mathcal{B}=\left[B ; BD ; \cdots; BD^{J-1}\right]$ ($\mathcal{B}$ is a tall matrix), $B=C\hat{K}^{q} A$, and $D=K^{p}A^{T}\hat{K}^{q} A$. The dimensions of $\mathcal{B}$, $B$ and $D$ are $IJ\times J$, $I\times J$ and $J\times J$, respectively. On the invariant set, $\mathcal{B}\boldsymbol{p}(t)=E$.}
\end{lemma}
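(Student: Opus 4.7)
The plan is to exploit the invariant $C\boldsymbol{q}(t)=C\boldsymbol{q}^*$ from equation \eqref{eqn:sale3} and to differentiate it with respect to $t$ repeatedly, alternately substituting the now-linear dynamics \eqref{eqn:derivMatrix1} and \eqref{eqn:derivMatrix2}, so that each double differentiation produces one additional power of $D$ acting on $\boldsymbol{p}(t)$. The linearization is legitimate on $V_L$ because Lemma \ref{lem:convergesToEquilMarginUtil} freezes $\boldsymbol{f}(t)=\boldsymbol{f}^*$ and Lemma \ref{lem:borderq} freezes the indicator entries in $\hat{K}^q$, so both sides of the ODEs are smooth in $t$ and repeated differentiation is justified.

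First I would differentiate $C\boldsymbol{q}(t)=C\boldsymbol{q}^*$ once to get $C\dot{\boldsymbol{q}}(t)=0$ and then substitute \eqref{eqn:derivMatrix1} to obtain $C\hat{K}^q(\boldsymbol{f}^*-A\boldsymbol{p}(t))=0$, i.e.\ $B\boldsymbol{p}(t)=C\hat{K}^q\boldsymbol{f}^*$. This settles the first block row of $\mathcal{B}\boldsymbol{p}(t)=E$ and serves as the base case $k=0$ of an induction.

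Next I would run the induction on $k$: assuming $BD^k\boldsymbol{p}(t)$ is constant on $V_L$, I differentiate once and apply \eqref{eqn:derivMatrix2} to obtain $BD^k K^p(A^T\boldsymbol{q}(t)-Q)=0$, which shows that $BD^k K^p A^T\boldsymbol{q}(t)$ is constant. Differentiating again and applying \eqref{eqn:derivMatrix1} yields $BD^k K^p A^T\hat{K}^q(\boldsymbol{f}^*-A\boldsymbol{p}(t))=0$, and since $D=K^p A^T\hat{K}^q A$ the left multiplier collapses to $BD^{k+1}$, giving $BD^{k+1}\boldsymbol{p}(t)=BD^k K^p A^T\hat{K}^q\boldsymbol{f}^*$, which is constant. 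Iterating up to $k=J-1$ and stacking the resulting blocks delivers $\mathcal{B}\boldsymbol{p}(t)=E$, with $E$ a constant vector whose explicit form need not be specified since the lemma only asserts existence.

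The main obstacle I anticipate is purely bookkeeping rather than conceptual: at each induction step one must verify that the right-hand side $BD^k K^p A^T\hat{K}^q\boldsymbol{f}^*$ is indeed a time-independent vector. This holds because $\boldsymbol{f}^*$, $Q$, $K^p$, $A$, $B$, $D$ and, crucially, $\hat{K}^q$ on $V_L$ are all constants, the last of these being exactly the content of Lemma \ref{lem:borderq} which lets us treat the primal-dual system as linear with constant coefficients on the invariant set.
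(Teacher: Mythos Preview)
Your proposal is correct and follows essentially the same route as the paper's own proof: start from the invariant $C\boldsymbol{q}(t)=C\boldsymbol{q}^{*}$, differentiate and alternately substitute \eqref{eqn:derivMatrix1} and \eqref{eqn:derivMatrix2}, so that each pair of differentiations produces an extra factor of $D=K^{p}A^{T}\hat{K}^{q}A$ acting on $\boldsymbol{p}(t)$ with a constant right-hand side, then stack. The only stylistic difference is that you phrase it as an induction on $k$ whereas the paper simply writes out the result of $2n+1$ differentiations directly; your explicit appeal to Lemmas~\ref{lem:convergesToEquilMarginUtil} and~\ref{lem:borderq} to justify constancy of $\hat{K}^{q}$ and $\boldsymbol{f}^{*}$ on $V_{L}$ is exactly what the paper relies on as well.
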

\begin{IEEEproof}
The proof is obtained by repeatedly differentiating Equations (\ref{eqn:derivMatrix1}) and (\ref{eqn:derivMatrix2}) with respect to time. The more detailed calculation is given in Appendix \ref{app:stackMatricesPrice}. 
\end{IEEEproof}

{If we can prove that the rank of $\mathcal{B}$ is $J$, then we could write $\mathcal{B}_{J}\boldsymbol{p}(t)=E$, where $\mathcal{B}_{J}$ is a $J\times J$ matrix constructed by taking $J$ linearly independent rows of $\mathcal{B}$. Then $\boldsymbol{p}(t)=E\mathcal{B}_{J}^{-1}$, which implies that $\boldsymbol{p}(t)$ converges on the invariant set.} To show that the rank of matrix $\mathcal{B}$ is $J$, we use Theorem 6.01 from \cite{Chen:1998uq} (its proof is similar to that of Theorem 6.1 in \cite{Chen:1998uq}):
\begin{theorem}
A matrix $\mathcal{B}=\left[B ; BD ; \cdots; BD^{J-1}\right]$ has full column rank if and only if matrix $G=\left[\begin{array}{c}B \\D-\lambda_{j} \mathbb{I} \end{array}\right]$ has full column rank for all eigenvalues $\lambda_{j}$, $j\in \mathcal{J}$ of $D$.
\label{thm:matrixEquiv}
\end{theorem}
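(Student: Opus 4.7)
The plan is to prove the two directions by contrapositive, since this is a standard Popov--Belevitch--Hautus (PBH) type rank test and both implications are cleanest to argue via a non-trivial vector in the kernel. I would set up the proof around the identity $\mathcal{B}v = [(Bv)^{T}\;(BDv)^{T}\;\cdots\;(BD^{J-1}v)^{T}]^{T}$, so that $\mathcal{B}$ fails to have full column rank exactly when there is a nonzero $v\in\mathbb{R}^{J}$ with $BD^{k}v=0$ for $k=0,1,\ldots,J-1$, and similarly $G$ fails to have full column rank at $\lambda_{j}$ exactly when there is a nonzero $v$ with $Bv=0$ and $Dv=\lambda_{j}v$.

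For the easy direction (``$G$ drops rank for some $\lambda_{j}$ $\Rightarrow$ $\mathcal{B}$ drops rank''), I would take such a nonzero $v$ with $Bv=0$ and $Dv=\lambda_{j}v$. Then by induction $BD^{k}v=\lambda_{j}^{k}Bv=0$ for every $k\geq 0$, so $\mathcal{B}v=0$ and $\mathcal{B}$ cannot have full column rank. This step is essentially one line and will not require any new ideas.

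For the harder direction, suppose $\mathcal{B}$ fails to have full column rank, so there is a nonzero $v$ with $BD^{k}v=0$ for $k=0,\ldots,J-1$. The key observation is that the Cayley--Hamilton theorem writes $D^{J}$ as a polynomial of degree at most $J-1$ in $D$, so in fact $BD^{k}v=0$ for \emph{every} $k\geq 0$. I would then consider the subspace
\begin{equation*}
W=\operatorname{span}\{v,Dv,D^{2}v,\ldots,D^{J-1}v\},
\end{equation*}
which is nonzero and $D$-invariant by construction, and on which $B$ vanishes identically. Since $W$ is a nonzero invariant subspace of $D$, the restriction $D|_{W}$ has at least one eigenvalue $\lambda$, and every such $\lambda$ is also an eigenvalue of $D$ (so $\lambda=\lambda_{j}$ for some $j$). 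Picking an eigenvector $w\in W$ of $D|_{W}$ gives $(D-\lambda_{j}\mathbb{I})w=0$ and $Bw=0$ simultaneously, so $G$ drops rank at this $\lambda_{j}$.

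The main obstacle I anticipate is the hard direction, specifically the step of extracting from the chain $\{v,Dv,\ldots,D^{J-1}v\}$ a genuine eigenvector of $D$ that still lies in $\ker B$; this is what forces the introduction of the $D$-invariant subspace $W$ and the appeal to Cayley--Hamilton. Everything else is bookkeeping about dimensions (so that the ``$J-1$'' truncation in the definition of $\mathcal{B}$ is enough to force all higher powers to annihilate $v$) and the observation that eigenvalues of $D|_{W}$ are a subset of eigenvalues of $D$. Once those two points are in place, the equivalence follows immediately and Lemma~\ref{lem:BpequalsE} can then be used, together with the hypothesis that the $k^{p}_{j}$ are not integer multiples of each other and $k^{q}_{ij}=k$, to ensure the spectral condition on $D$ needed to make $\mathcal{B}$ have rank $J$.
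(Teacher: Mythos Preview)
Your proof is correct and is the standard Popov--Belevitch--Hautus argument for the observability rank test. Note, however, that the paper does not actually prove this statement: it quotes it as Theorem~6.01 from Chen's linear systems textbook \cite{Chen:1998uq} and simply remarks that the proof is analogous to that of Theorem~6.1 there. So there is nothing to compare against on the paper's side; you have supplied a full argument where the paper only gives a citation.

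One small point worth making explicit in your write-up of the hard direction: the step ``$D|_{W}$ has at least one eigenvalue'' is immediate over $\mathbb{C}$ but not over $\mathbb{R}$ in general. In the present setting this is harmless, because $D=K^{p}A^{T}\hat{K}^{q}A$ is diagonal with real entries (see the proof of Lemma~\ref{lem:rankJ}), so all eigenvalues of $D$, and hence of $D|_{W}$, are real. If you want the statement to stand on its own you should either work over $\mathbb{C}$ throughout or add a hypothesis that $D$ has real spectrum.
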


We now provide a sufficient condition to ensure convergence of the primal-dual algorithm. 

\begin{lemma}
{Let matrices $B$ and $D$ be defined as in Lemma \ref{lem:BpequalsE}.} Matrix $G=\left[\begin{array}{c}B \\D-\lambda \mathbb{I} \end{array}\right]$ has full column rank for all eigenvalues $\lambda$ of $D$ if $k^{p}_{j}\neq ak^{p}_{j'}\;  \forall j,j' \in \mathcal{J}$ and all $a \in \mathbb{N}^{+}$ (i.e., as long as the price update rates are not integer multiples of each other) and $k^{q}_{ij}=k$, for all $i \in \mathcal{I}, j\in \mathcal{J}$. 
\label{lem:rankJ}
\end{lemma}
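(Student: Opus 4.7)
The plan is to first make the matrices $D$ and $B$ fully explicit under the hypothesis $k^q_{ij}=k$. On the invariant set, $\hat K^q$ is the $IJ\times IJ$ diagonal matrix with entries $k\,\mathbbm{1}_{(0,\infty)}(q_{ij})$, so the block structure of $A$ (which stacks $I$ copies of $\mathbb{I}_J$) gives $A^{T}\hat K^q A = k\,\mathrm{diag}(n_j)$, where $n_j:=|\{i:q_{ij}(t)>0\}|$. Hence $D= k\,\mathrm{diag}(k^p_j n_j)$ is diagonal with eigenvalues $\lambda_j=k\,k^p_j n_j$ and eigenspaces spanned by coordinate vectors. An analogous computation shows that $B_{ij}=k c_{ij}\,\mathbbm{1}_{(0,\infty)}(q_{ij})$; the $j$th column of $B$ is supported on the $n_j$ users currently active at provider $j$ and is therefore nonzero whenever $n_j\geq 1$.

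Because $D$ is diagonal, $\ker(D-\lambda\mathbb{I})$ equals $V_\lambda:=\mathrm{span}\{e_j:\lambda_j=\lambda\}$, and $G$ has full column rank if and only if the restriction $B|_{V_\lambda}$ is injective for every eigenvalue $\lambda$ of $D$. In the generic case of pairwise distinct $\lambda_j$'s, every $V_\lambda$ is one-dimensional, spanned by some $e_j$, and injectivity reduces to $Be_j\neq 0$. This holds because every provider sells all its resource at equilibrium (Assumption~\ref{as:RRA}), so $n_j\geq 1$ and $c_{ij}>0$ for at least one $i$.

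The crux of the argument is to exploit the hypothesis $k^p_j\neq a k^p_{j'}$ for all positive integers $a$ in order to preclude, or to handle, coincident eigenvalues. A collision $\lambda_j=\lambda_{j'}$ rewrites as $k^p_j/k^p_{j'}=n_{j'}/n_j$, a positive rational whose numerator and denominator are constrained by the BGR on the invariant set: by Property~3 in the proof of Lemma~\ref{lem:diffSupport} that graph is a forest in which each undecided user has degree at least two, which bounds and relates the $n_j$'s. I plan to combine these combinatorial constraints with the integer-multiple hypothesis to rule out every collision whose ratio $n_{j'}/n_j$ reduces to a positive integer; when the structural bounds alone fail to exclude a non-integer rational ratio, I would fall back on Assumption~\ref{as:channelGainsOffsetFunction} to obtain almost-sure linear independence of the relevant columns of $B$, and thereby injectivity of $B|_{V_\lambda}$ even when $\dim V_\lambda>1$.

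The hardest step is exactly this collision analysis. The arithmetic hypothesis on the $k^p_j$'s forbids only integer ratios between price update rates, whereas admissible BGR configurations can produce rational ratios such as $3/2$ (e.g.\ with $I=4$, $J=5$ and a suitable star-like forest one can realize $n_j=3$, $n_{j'}=2$). I therefore expect the final argument to be a hybrid: use the integer-multiple hypothesis to eliminate those collisions whose $n_{j'}/n_j$ reduces to an integer in lowest terms, and invoke the continuous-distribution hypothesis on the $c_{ij}$'s for a general-position argument that handles the remaining non-integer rational cases. Once every $V_\lambda$ is shown to meet $\ker B$ only at the origin, the theorem follows by the equivalence given in Theorem~\ref{thm:matrixEquiv}.
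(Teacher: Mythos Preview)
Your computation of $D=k\,\mathrm{diag}(k_j^p\,n_j)$ and of $B_{ij}=k\,c_{ij}\mathbbm{1}_{(0,\infty)}(q_{ij})$ matches the paper exactly, as does the reduction via Theorem~\ref{thm:matrixEquiv} to the injectivity of $B$ restricted to each eigenspace $V_\lambda$.

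Where you diverge is in the collision analysis. The paper's own proof is much shorter than your plan: it simply asserts that under the integer-multiple hypothesis the diagonal entries $k_j^p|\mathcal{I}_j|$ are pairwise distinct, so for each eigenvalue $\lambda_j$ the matrix $D-\lambda_j\mathbb{I}$ has exactly one zero row. It then selects the remaining $J-1$ (nonzero, diagonal) rows of $D-\lambda_j\mathbb{I}$ together with one row of $B$ having a nonzero entry in column $j$ (such a row exists because $n_j\geq 1$), and observes that this $J\times J$ submatrix is lower triangular with nonzero diagonal, hence of full rank.

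You have correctly noticed that the assertion ``pairwise distinct eigenvalues'' is not fully justified by the stated hypothesis: $k_j^p n_j=k_{j'}^p n_{j'}$ is equivalent to $k_j^p/k_{j'}^p=n_{j'}/n_j$, and the hypothesis $k_j^p\neq a\,k_{j'}^p$ for $a\in\mathbb{N}^+$ excludes only integer ratios, not a ratio like $3/2$. The paper does not address this case; it proceeds as if the eigenvalues were simple. Immediately after the proof the paper even remarks that the ``real'' sufficient condition is that the numbers $k_j^p\sum_{i\in\mathcal{I}_j}k_{ij}^q$ be pairwise distinct, and suggests drawing the rates from continuous distributions to guarantee this almost surely. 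In other words, the gap you flag is present in the paper's argument as well; your proposed hybrid (BGR constraints plus Assumption~\ref{as:channelGainsOffsetFunction}) is more ambitious than what the paper actually carries out, and the paper effectively sidesteps the issue rather than resolving it. If you want to reproduce the paper's proof, take simplicity of the eigenvalues as an (implicit) extra hypothesis and run the triangular-submatrix argument; if you want a rigorous statement, the clean route is to replace the integer-multiple condition by the condition that the diagonal entries of $D$ are distinct.
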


\begin{IEEEproof}
The proof is given in Appendix \ref{app:proofMatrixMinusEVfullRank}.
\end{IEEEproof}

Combining Lemma \ref{thm:matrixEquiv} and Lemma \ref{lem:rankJ}, we see that a unique vector $\boldsymbol{p}(t)$ can be computed from equation $\mathcal{B}\boldsymbol{p}(t)=E$, meaning that $\boldsymbol{p}(t)$ takes a single value on the invariant set and does not change with time. This also means that the demand vector $\boldsymbol{q}(t)$ does not change on the invariant set. {So since $\dot{\boldsymbol{p}}=0$ and $\dot{\boldsymbol{q}}=0$, the primal dual algorithm converges to an equilibrium point, which we call a dynamic equilibrium point}. It can be shown that the dynamic equilibrium point is constrained by the same set of equations as the unique equilibrium of the provider competition game (refer to the proof of Theorem \ref{thm:uniqueNash}). {Hence, there is only one element in the invariant set $V_{L}$, and it corresponds to the equilibrium of the provider competition game, $(\boldsymbol{q}^{*},\boldsymbol{p}^{*})$.} This concludes the proof of Theorem~\ref{thm:looseProviders}.
\end{IEEEproof}

Note that that the condition on the update rates in Theorem \ref{thm:looseProviders} is sufficient but not necessary. In fact, by looking at the form of the $D$ matrix from the proof of Lemma \ref{lem:rankJ}, we can see that a sufficient condition on the price update rates is that $k^{p}_{j}(\sum_{i \in\mathcal{I}_{j}}^{¥}k_{ij})$, where $\mathcal{I}_{j}=\{i\in \mathcal{I}: q_{ij}>0\}$, has a different value for each $j\in \mathcal{J}$. {This condition can be satisfied with probability 1, e.g. by drawing $k^{p}_{j}$'s and $k^{q}_{ij}$'s independently from some continuous distribution.}

\section{Numerical Results and Discussion}
\label{sec:discussion}

\begin{figure}[h]% 
\centering

\begin{minipage}{3.5in}%
   \centering
   \includegraphics[width=3in]{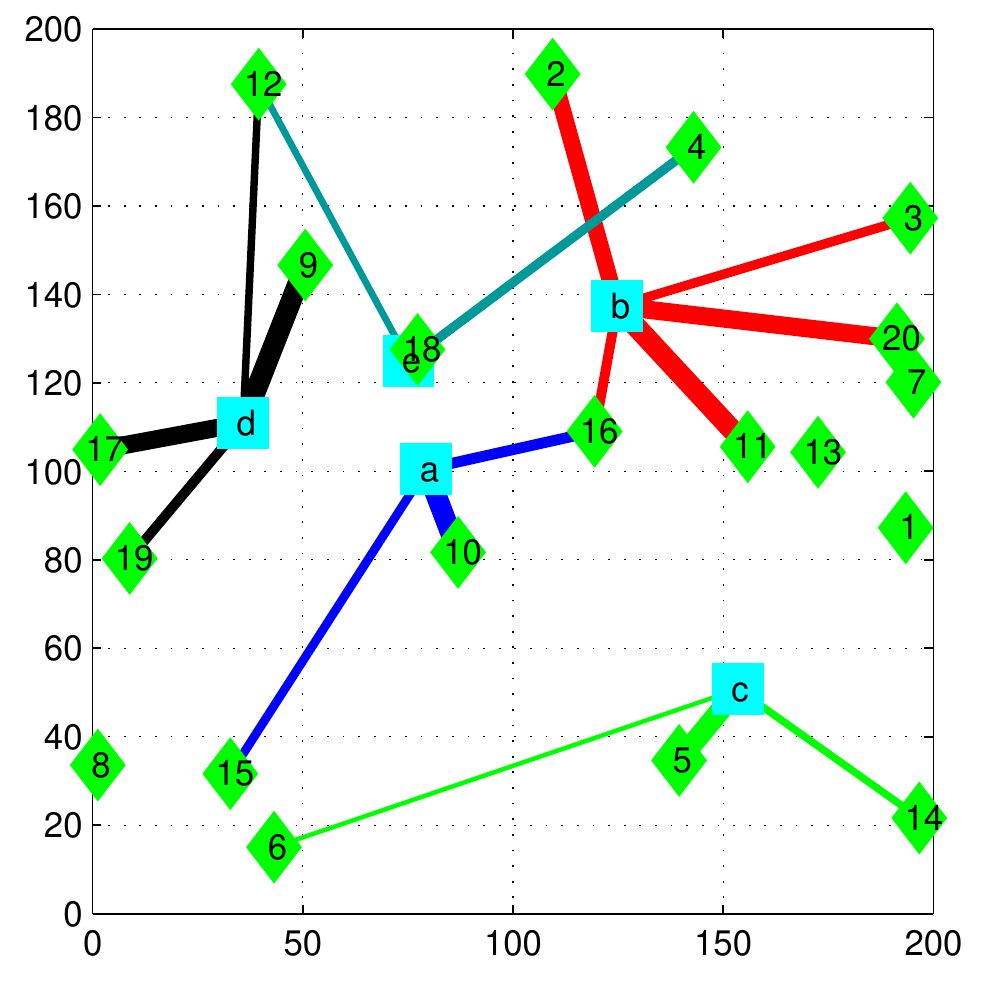} 
   \caption{Example of equilibrium user-provider association}
   \label{fig:equilibriumUserProviderAssociation}
\end{minipage}  
\end{figure}

\begin{figure}[h]%
\centering
\begin{minipage}{3.2in}%
   \centering
   \includegraphics[width=3.2in]{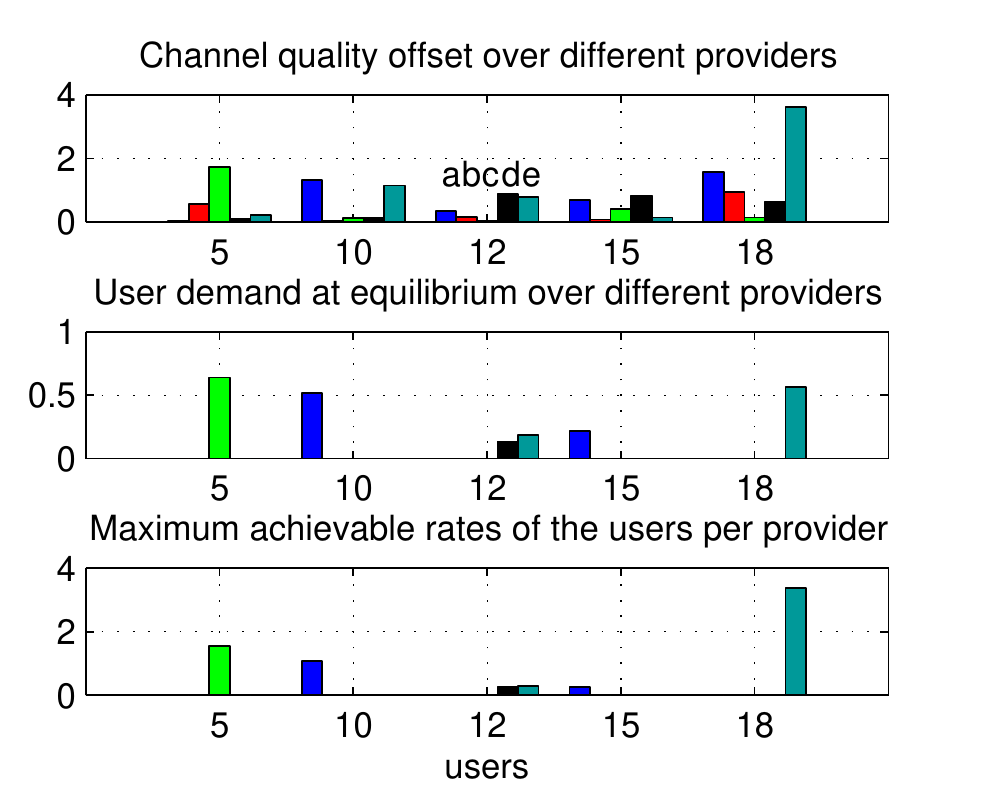} 
    \caption{Channel quality offset, demand and effective resource at equilibrium }
   \label{fig:userCapacitiesDemandRates}
\end{minipage}   
\qquad
\begin{minipage}{3.2in}%
   \centering
   \setlength{\fboxrule}{0.5pt} 
   \setlength{\fboxsep}{-0.2cm} 
{\includegraphics[width=3.2in]{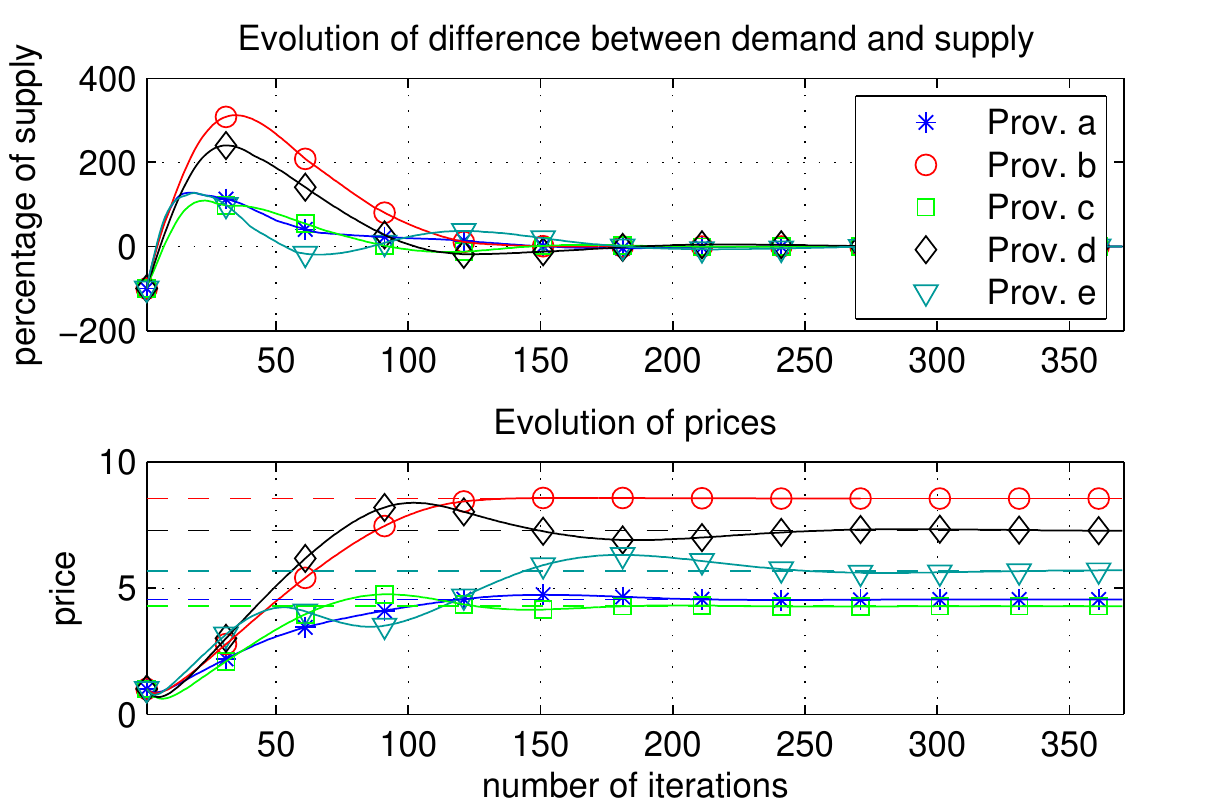}}
   \caption{Evolution of the primal-dual algorithm}
   \label{fig:evolutionPrimalDual}
\end{minipage}%
\end{figure}% 

\begin{figure}
   \centering
\begin{minipage}{3.5in}%
   \centering
\includegraphics[width=3.2in]{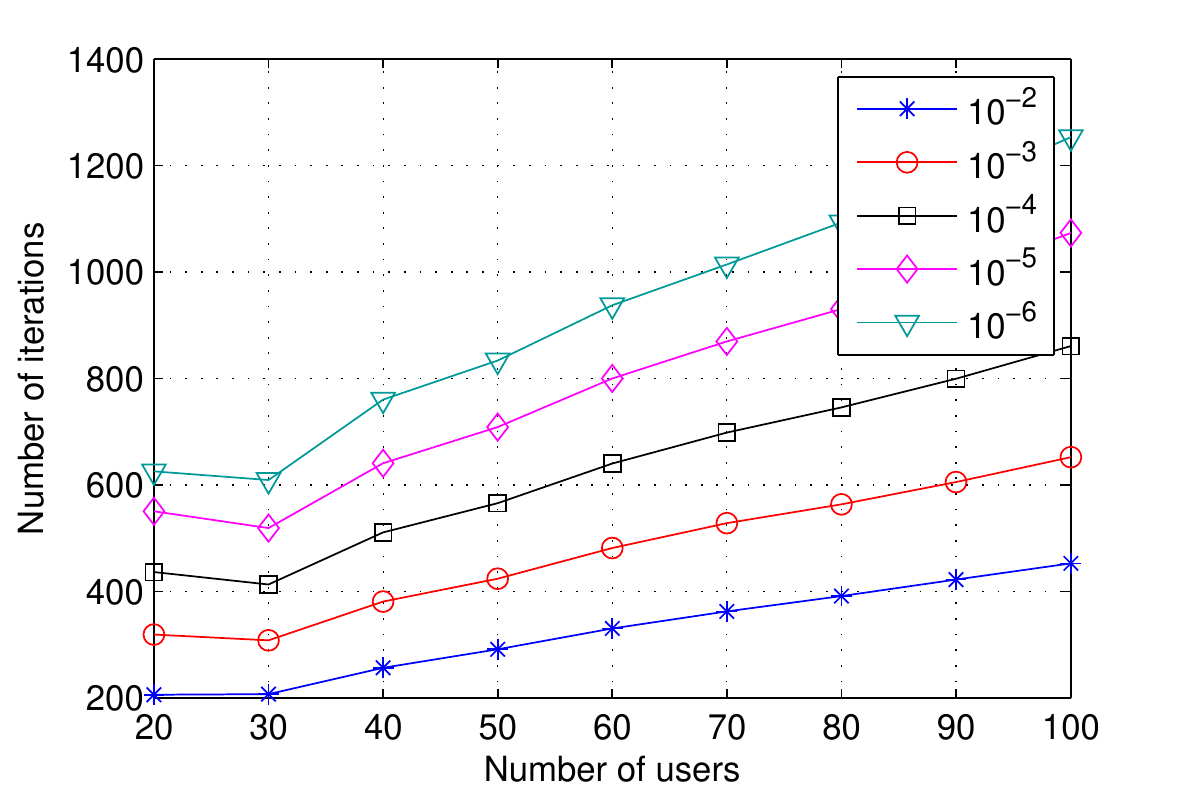}
   \caption{Average time to convergence, varying $\epsilon$}
   \label{fig:convergencePrecision}
\end{minipage}%
\end{figure}%

For numerical results, {we extend the setup from Example \ref{exa:TDMA}, where the resource being sold is the fraction of time allocated to exclusive use of the providers' frequency band, i.e., $Q_{j}=1$ for $j\in \mathcal{J}$.} We take the bandwidth of the providers to be $W_{j}=20MHz$, $j\in \mathcal{J}$. User $i$'s utility function is $a_{i}\log(1+\sum_{j=1}^{J}q_{ij}c_{ij})$, where we compute the spectral efficiency $c_{ij}$ from the Shannon formula $\frac{1}{2}W \log(1+\frac{E_{b}/N_{0} }{W}|h_{ij}|^{2})$, $q_{ij}$ is the allocated time fraction, $E_{b}/N_{0}$ is the ratio of transmit power to thermal noise, and $a_{i}$ is the individual willingness to pay factor taken to be the same across users. The channel gain amplitudes $|h_{ij}|=\frac{\xi_{ij}}{d^{\alpha/2}_{ij}}$ follow Rayleigh fading, where $\xi_{ij}$ is a Rayleigh distributed random variable with parameter $1$, and $\alpha=3$ is the outdoor power distance loss. 
{We choose the parameters so that} the $c_{ij}$ of a user is on average around $3.5 Mbps$ when the distance is $50m$, and around $60 Mbps$ when the distance is $5m$. The average signal-to-noise ratio $E_{b}/(N_{0}d^{\alpha})$  at $5m$ is around $25dB$.  We assume perfect modulation and coding choices such that the communication rates come from a continuum of values. The users are uniformly placed in a $200m$ by $200m$ area. {We want to emphasize that the above parameters are chosen for illustrative purposes only. Our theory applies to any number of providers, any number of users, any type of channel attenuation models, and arbitrary network topologies.}

\begin{figure}[h]%
\centering
\begin{minipage}{3.2in}%
   \centering
   \includegraphics[width=3.2in]{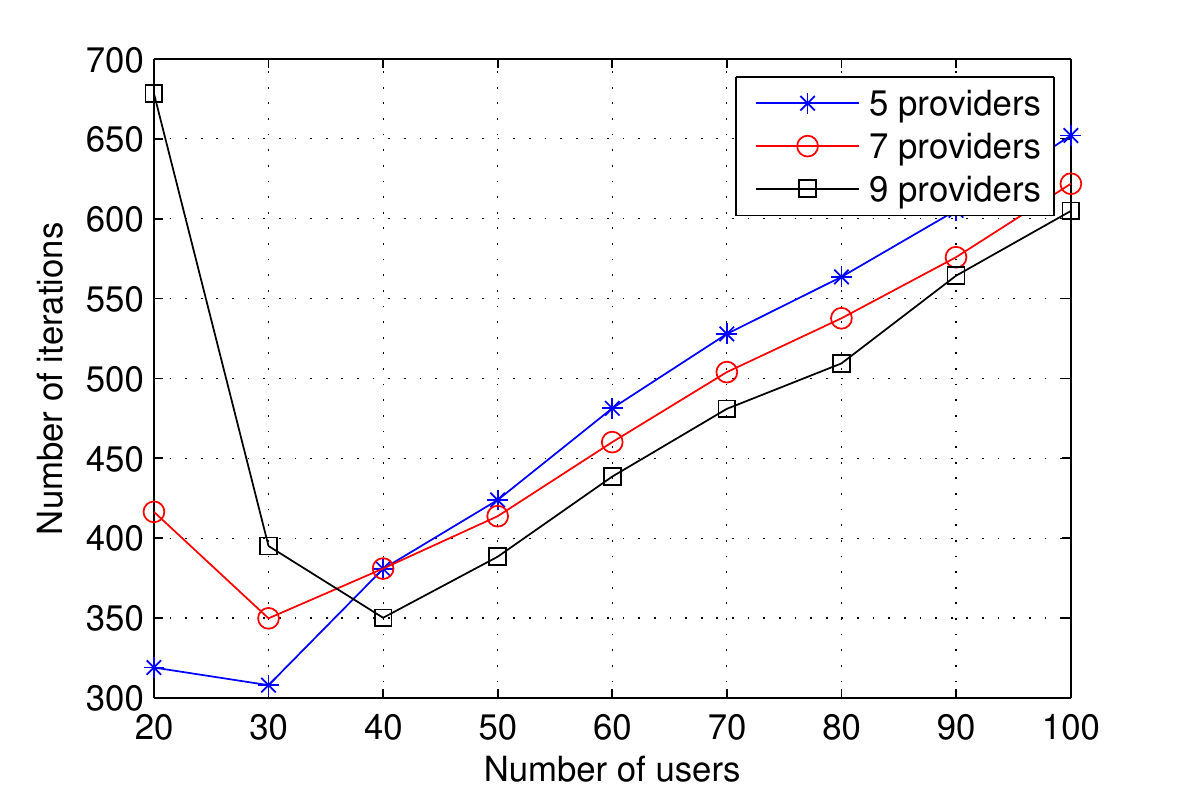} 
   \caption{Average time to convergence for different numbers of providers}
   \label{fig:convergenceDifferentProviderNumber}
\end{minipage}   
   \qquad \qquad
\begin{minipage}{3.5in}%
   \centering
   \setlength{\fboxrule}{0.5pt} 
   \setlength{\fboxsep}{-0.2cm} 
 \includegraphics[width=3.2in]{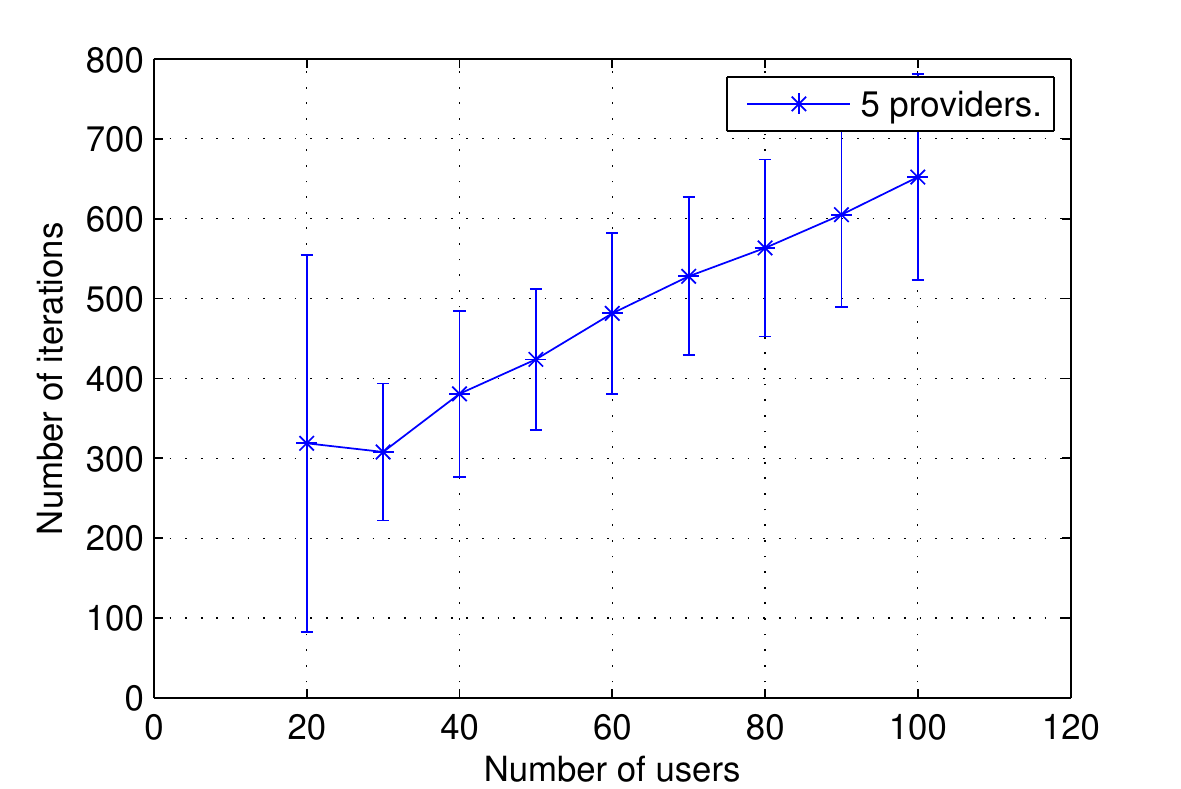}
   \caption{Average time to convergence with standard deviation}
   \label{fig:convergenceStandardDeviation}
\end{minipage}%
\end{figure}%

{We first consider a single instantiation with 20 users and 5 providers. In Fig. \ref{fig:equilibriumUserProviderAssociation}, we show the user-provider association at the equilibrium for a particular realization of channel gains, where the thickness of the link indicates the amount of resource purchased. The users are labeled by numbers (1-20), and the providers are labeled by letters ($a$-$e$). This figure shows two undecided users (12 and 16), and that certain users (1,7,13, and 8) do not purchase any resource at equilibrium. For the same realization of channel parameters, Fig. \ref{fig:userCapacitiesDemandRates} shows the channel quality, user demand, and rate at the equilibrium, for users 5, 10, 12, 15, and 18. We see that user 15 has better channel with provider $d$ than with provider $a$, but in the equilibrium all of his demand is towards provider $a$. This can be explained by looking at the bottom part of Fig. \ref{fig:evolutionPrimalDual}, where the dashed lines indicate equilibrium prices. We see that the provider $a$ announces a smaller price than provider $d$. The equilibrium prices reflect the competition among users: in Fig. \ref{fig:equilibriumUserProviderAssociation} we see that provider $b$ has the most customers, so it is not surprising that its price is the highest, as seen in Fig. \ref{fig:evolutionPrimalDual}.}

We next consider the convergence time of the {discrete time version of the} primal-dual algorithm. We fix the number of providers to be 5, and change the number of users from 20 to 100. For each parameter, we run 200 experiments with randomly generated user and provider locations and plot the average speed of convergence. The convergence is defined as the number of iterations after which the difference between supply and demand is no larger than $\epsilon Q_{j}$. Fig. \ref{fig:convergencePrecision} shows the average convergence time for different values of $\epsilon$. In general, 200 to 400 iterations are needed for convergence with $\epsilon=10^{-2}$, and 100-200 more iterations to get to $10^{-3}$. In Fig. \ref{fig:convergenceDifferentProviderNumber}, we compare the average convergence time for different number of providers. Here we take the stopping criterion to be $\epsilon=10^{-3}$. The convergence time depends on the update rates used for the primal-dual: if the update rates are too high, then the variables will tend to oscillate, so the algorithm will take too long to converge. On the other hand, if the rates are too small, the variables may converge too slowly. According to our theoretical analysis from Section \ref{sec:primal-dual}, we randomly assigned update variables to ensure global convergence of the algorithm. In general a very small or very large number of users per provider means that the algorithm will take longer to converge. Finally, Fig. \ref{fig:convergenceStandardDeviation} shows the average convergence time for 5 providers with the standard deviation. The variance of convergence time does not change with the  number of users, except for the case of 20 users. Empirically, the algorithm is sensitive to the choice of update rates when the ratio of users per provider is smaller than 4. In such cases the prices and demands may oscillate and take a long time to converge, which can also be seen in Fig. \ref{fig:convergenceDifferentProviderNumber}.

\subsection{Discussion}

We first comment on the implication of having undecided users at equilibrium. Given a set of prices, the decided users can calculate the unique demand vector that maximizes their payoffs, while the undecided users have an infinite number of such vectors. In particular, calculating the equilibrium maximizing demand vectors for undecided users may require cooperation between different providers, which may be challenging in practice.

On the other hand, the number of undecided users is small, i.e., not larger than $J$, and it does not grow with the number of users. Future systems may have user action replaced by the actions of software agents in charge of connection and handover between different providers. In this case, splitting over different providers may become feasible. This is similar to soft handoff (soft handover), a feature used by CDMA and WCDMA standards \cite{Wong:1997rw}. In addition, when the number of users is large, the impact of a single user on the price may be small. Hence, operating at a non-equilibrium price as the result of the decisions of a few undecided users may not have a great impact on the experienced quality of service, although the exact loss remains to be quantified.  

In the model we consider, we observe the ``locally monopolistic'' nature of wireless commerce, which does not exist for most other traditional goods. Namely, a user that has a strong channel to some provider, but a weak one to others, is willing to pay a higher price to the provider with the strong channel, and is thus not influenced by moderate price changes during the price competition. On the other hand, users with similar channel gains to all providers will be more sensitive to price competition. This local monopoly is in contrast to some other wireless resource allocation models where users' association is based solely on the price, in which case \emph{all} users go to the provider with the cheapest price (see, e.g. \cite{Maille:2008nh}).  

\section{Related Work}
\label{sec:related}
{In this paper we have considered a linear-usage pricing scheme, which has been widely adopted in the literature (see e.g. \cite{Kelly:1998, Shen:2004qa}).} Analyzing such pricing yields {various insights}: for example the existing TCP protocol can be interpreted as a usage-based pricing scheme that solves a network utility maximization problem \cite{Kelly:1998}. In practice, however, providers charge monthly subscription fees. For both voice and data plans, these subscriptions are sometimes combined with linear pricing beyond a predefined usage threshold. Pure linear pricing based on instantaneous channel conditions is generally not used, although it has received renewed attention due to near-saturation of some mobile networks (see, e.g. \cite{fierce2010:fk}). Recently, AT\&T introduced hybrid price plans, consisting of a flat rate fee for a certain amount of data, and linear pricing beyond that limit \cite{engadget2010:fk}.

There exists a rich body of related literature on using pricing and game theory to study provider resource allocation and interactions of service providers. The related research in the wireless setting can be divided into several categories: optimization-based resource allocation of one provider (e.g., \cite{Saraydar:2002,Marbach:2002fk,Chiang:2004,Acemoglu:2004bv,Musacchio:2006tg,Jiang:2008rm}), game theoretic study of interactions between the users of one provider (e.g., \cite{Adlakha:2007rw,Etkin:2005uq,Huang:2006,Kalathil:2010vn}), competition of different service providers on behalf of the users (e.g., \cite{Zhou:2005ys,Grokop:2008}), and providers' price competition to attract users (e.g., \cite{Zemlianov:2005, Sengupta:2007, Jia:2008, Ileri:2005a, Acharya:2009qd, Inaltekin:2007, Niyato:2009tg, Xi:2008nx, Gao:2008kx}). Our current work falls into the last category. 

{In our work, we have simultaneously considered several factors that reflect diverse wireless network scenarios: an arbitrary number of wireless providers compete for an arbitrary number of atomic users, where the users are heterogenous both in channels gains and in willingness to pay.} In related work where providers price-compete to attract users \cite{Zemlianov:2005, Jia:2008, Ileri:2005a}, purchasing a unit of resource from different providers brings the same amount of utility to a user; in our work a user's utility still depends on the channel gain to the provider. In other work where the Wardrop equilibrium concept is used (e.g., \cite{Maille:2008nh}), users are infinitesimal and non-heterogenous{; in our work users are atomic and have different willingness to pay}. One of the early work that explicitly takes into account the channel differences for different users on a line is \cite{Inaltekin:2007}, {for} infinitesimal users and distance-based channel gains. {Recently, a model similar to ours was used to treat a three-tier system \cite{Acharya:2009qd}, although for specific utility functions. The multiple-seller multiple-buyer dynamics in a cognitive radio setting was studied in \cite{Niyato:2009tg} using evolutionary game theory. Finally, \cite{Xi:2008nx} and \cite{Gao:2008kx} consider price competition in a multi-hop wireless network scenario. }

The design and proof of the {decentralized} algorithm were inspired by Chen et al. \cite{Chen:2008la}, with several key differences. First, their work considers the optimal resource allocation of a single OFDM cell. Second, it studies a system where each user has a total power constraint. Third, there {are often infinitely  many} global optimal solutions in \cite{Chen:2008la} and finally, our convergence results are proved with a set of conditions that are less stringent than those of \cite{Chen:2008la}.

\section{Conclusions and Future Work}
\label{sec:conclusion}

We provide an overview of the relationship between different concepts used throughout this work in Figure \ref{fig:concepts}. 

\colorlet{lightgreen}{green!20!yellow}
\colorlet{lightblue}{blue!20}
\colorlet{lightred}{red!20}
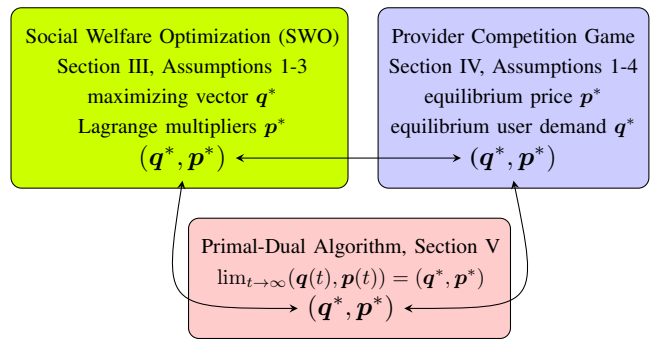
\begin{figure}[ht] 
   \centering
	\begin{tikzpicture}[>=stealth, scale=0.8]
		\draw[fill=lightgreen, rounded corners, transform shape] (-2.15,2) rectangle (3.45,5);
		\draw[fill=lightblue, rounded corners, transform shape] (3.95,2) rectangle (8.45,5);
		\draw[fill=lightred, rounded corners, transform shape] (0.8,-0.5) rectangle (6.15,1.5);		
		\draw (0.7,4.5) node [fill=lightgreen, rounded corners, transform shape] {Social Welfare Optimization (SWO)};
		\draw (0.7,4) node [fill=lightgreen, rounded corners, transform shape] {Section III, Assumptions 1-3};
		\draw (0.7,3.5) node (demand) [fill=lightgreen, rounded corners, transform shape] {maximizing vector $\boldsymbol{q}^{*}$};
		\draw (0.7,3) node (lagrange) [fill=lightgreen, rounded corners, transform shape] {Lagrange multipliers $\boldsymbol{p}^{*}$};	
		\draw (0.7,2.5) node (SWO) {$(\boldsymbol{q}^{*},\boldsymbol{p}^{*})$};	
		\draw (6.2,4.5) node [fill=lightblue, rounded corners, transform shape] {Provider Competition Game};
		\draw (6.2,4) node [fill=lightblue, rounded corners, transform shape] {Section IV, Assumptions 1-4};
		\draw (6.2,3.5) node (price) [fill=lightblue, rounded corners, transform shape] {equilibrium price $\boldsymbol{p}^{*}$}	;
		\draw (6.2,3) node (best response)[fill=lightblue, rounded corners, transform shape] {equilibrium user demand $\boldsymbol{q}^{*}$};
		\draw (6.2,2.5) node (PCG) {($\boldsymbol{q}^{*},\boldsymbol{p}^{*})$};			
		\draw (3.5,1) node [fill=lightred, rounded corners, transform shape] {Primal-Dual Algorithm, Section V};
		\draw (3.5,0.5) node [fill=lightred, transform shape] {$\lim_{t \rightarrow \infty}(\boldsymbol{q}(t),\boldsymbol{p}(t))=(\boldsymbol{q}^{*},\boldsymbol{p}^{*})$ };		
		\draw (3.5,0) node (PDA) {$(\boldsymbol{q}^{*},\boldsymbol{p}^{*})$};			
		\draw[<->] (SWO.east) -- (PCG.west);
		\draw[<->] (PDA.west) .. controls +(left:2.25cm) .. (SWO.south);
		\draw[<->] (PDA.east) .. controls +(right:2.25cm) .. (PCG.south);		
		\end{tikzpicture}
	\caption{Relationship between different concepts}
   	\label{fig:concepts}
\end{figure} 

We have studied the competition of an arbitrary number of wireless service providers, who want to serve a group of atomic users who are heterogenous in both willingness to pay and channel quality. We have modeled this interaction as a two-stage wireless provider game, and have characterized its unique equilibrium. We have shown that the provider competition leads to a unique socially optimal resource allocation for a broad class of utility functions and a generic channel model. Our results show that some users need to purchase their resource from several providers at the equilibrium, although the number of such users is upper bounded by the number of providers. We have also developed a decentralized algorithm which converges to the equilibrium prices as well as the equilibrium demand vectors using only local knowledge. 

Further work may include the study of fractional equilibria with the goal of characterizing the losses that occur when undecided users are unable to split their resource demand in an optimal way. It is also interesting to consider communication models where users cause externalities such as interference to each other.

\appendices
\renewcommand{\thesubsectiondis}{\arabic{subsection}.} 
\renewcommand{\thesubsection}{\Alph{section}-\arabic{subsection}}

\section{Social Optimality Proofs}
\subsection{Proof of BGR Property \ref{item:noLoops}}  
\label{app:diffSupport}

We first examine the properties of the optimal demand vector of the SWO problem. We will express the SWO in terms of the demand vector $\boldsymbol{q}$ only, by substituting directly equation (\ref{eqn:relative}) into equation (\ref{eqn:max}). Let $\boldsymbol{p}=[p_{1}\cdots p_{J}]$ be the vector of Lagrangian multipliers. The Lagrangian for SWO is then
\begin{align}
L(\boldsymbol{q},\boldsymbol{p})=& \sum_{i=1}^{I}u_{i}\left(\sum_{j=1}^{J}q_{ij} c_{ij}\right) + \sum_{j=1}^{J}p_{j}\left( Q_{j}-\sum_{i=1}^{I}q_{ij}\right).
\end{align}
It is easy to check that the SWO problem satisfies the Slater's condition \cite{Boyd:2004}, and thus the sufficient and necessary KKT conditions for an optimal solution $(\boldsymbol{q},\boldsymbol{p})$ are as follows:

\begin{align}
\frac{\partial u_{i}(x_{i})}{\partial x_{i}}c_{ij}-p_{j} \leq &0,\; j \in \mathcal{J};\; i \in \mathcal{I} \label{eqn:kkt1}\\
q_{ij}\left(\frac{\partial u_{i}(x_{i})}{\partial x_{i}}c_{ij}-p_{j} \right)=&0,\; j \in \mathcal{J};\; i \in \mathcal{I} \label{eqn:kkt2}\\
\sum_{j=1}^{J}q_{ij} c_{ij} = & x_{i},\;  i \in \mathcal{I} \label{eqn:kkt3}\\
\sum_{i=1}^{I}q_{ij}=&Q_{j},\; j \in \mathcal{J} \label{eqn:kkt4}\\
p_{j}>0,\; q_{ij} \geq& 0\;j \in \mathcal{J};\; i \in \mathcal{I} \label{eqn:kkt5}
\end{align}
where with some abuse of notation we use $\frac{\partial u_{i}(x_{i})}{\partial x_{i}}$ to denote $\frac{\partial u_{i}(x)}{\partial x}\Big\vert_{x=x_{i}=\sum_{j=1}^{J}q_{ij} c_{ij}}$. 

The following characterizes the relationship between the prices of any two networks from which a user has strictly positive demand.

Recall the support set definition $\hat{\mathcal{J}}_{i}(\boldsymbol{q}_{i})=\{j \in \mathcal{J}: q_{ij}>0\}$. From \eqref{eqn:kkt1} we see that $\frac{\partial u_{i}(x_{i})}{\partial x_{i}} \leq \min_{k\in \mathcal{J}} \frac{p_{k}}{c_{ik}}$. Then, from  (\ref{eqn:kkt2}) we can see that $q_{ij}>0$ only when $\frac{\partial u_{i}(x_{i})}{\partial x_{i}} = \frac{p_{j}}{c_{ij}}$. Hence $\frac{p_{j}}{c_{ij}} = \min_{k \in \mathcal{J}}\frac{p_{k}}{c_{ik}} $ is a necessary condition for $q_{ij}>0$ for all $i \in \mathcal{I}, j\in \mathcal{J}$.
Then, $q_{ij}>0$ and $q_{ij'}>0$ implies $\frac{p_{j}}{c_{ij}}=\frac{p_{j'}}{c_{ij'}}=\min_{k \in \mathcal{J}}\frac{p_{k}}{c_{ik}}$. In particular, $q_{ij}>0$ and $q_{ij'}>0$ implies
\begin{equation}
\label{eqn:indifferent}
\frac{c_{ij}}{c_{ij'}}=\frac{p_{j}}{p_{j'}}.
\end{equation}

We now consider the BGR defined by the support sets $\{\hat{\mathcal{J}}_{i}\}_{i=1}^{I}$ of undecided users. For any two edges $(i,j)$ and $(i,k)$ of BGR, where $i$ is the user index and $j,k$ are the provider indices, $q_{ij}>0$ and $q_{ik}>0$ so by (\ref{eqn:indifferent}) we have $\frac{c_{ij}}{c_{ik}}=\frac{p_{j}}{p_{k}}$.

Suppose that a loop exists in BGR (refer to Fig. \ref{fig:BGRloop} for this part of the proof). Then, a sequence of nodes $i_{1},j_{1},i_{2},j_{2},\ldots,$ $i_{n},j_{n},i_{1}$ exists, where $i_{1},\ldots,i_{n}$ are the user nodes and $j_{1},\ldots,j_{n}$ are the provider nodes, such that $(i_{k},j_{k})$ and $(j_{k-1},i_{k})$ are edges in BGR for $k=1,\ldots n$ (with $i_{0}$ defined as $i_n$). We assume that the members of the sequence are distinct otherwise there is already a smaller loop inside. {Since both $(i_{k},j_{k})$ and $(j_{k-1},i_{k})$ are edges, then $\frac{c_{ik-1}}{c_{ik}}=\frac{p_{k-1}}{p_{k}}$, based on (\ref{eqn:indifferent}). A loop in the BGR implies:}

{\begin{align*} 
\frac{c_{i_{1}n}}{c_{i_{1}1}}\frac{c_{i_{2}1}}{c_{i_{2}2}} \ldots \frac{c_{i_{n-1}n-2}}{c_{i_{n-1}n-1}}\frac{c_{i_{n}n-1}}{c_{i_{n}n}}=\frac{p_{n}}{p_{1}} \frac{p_{1}}{p_{2}} \ldots \frac{p_{n-2}}{p_{n-1}} \frac{p_{n-1}}{p_{n}}=1.
\end{align*}}

\tikzstyle{user}=[circle,draw=black,thin, minimum size=2pt, transform shape] 
\tikzstyle{provider}=[rectangle,draw=black,fill=white, minimum size=5mm, transform shape] 
\begin{figure}[ht] 
	   \centering
	\begin{tikzpicture}[>=stealth, scale=0.8]
		\node (user1) at (0,2) [user] {$i_{1}$};
		\node (user2) at (1.5,2) [user] {$i_{2}$};
		\node (user3) at (3,2) [user] {$i_{3}$};
		\node (userN) at (6,2) [user] {$i_{n}$};
		\node (net1) at (0,0) [provider] {$j_{1}$};
		\node (net2) at (1.5,0) [provider] {$j_{2}$};
		\node (net3) at (3,0) [provider] {$j_{3}$};
		\node (net7) at (5.5,0) [provider] {$j_{n-1}$};
		\node (netN) at (7,0) [provider] {$j_{n}$};
		\draw [-] (user1) -- (net1);
		\draw [-] (user1) -- (netN);
		\draw [-] (user2) -- (net1);
		\draw [-] (user2) -- (net2);
		\draw [-] (user3) -- (net2);
		\draw [-] (user3) -- (net3);	
		\draw  (4.25,1) node {$\cdots$};
		\draw [-] (userN) -- (net7);
		\draw [-] (userN) -- (netN);
		\draw [-] (user3) -- (net2);
		\draw [-] (user3) -- (net3);			
	\end{tikzpicture}
	\caption{A bipartite graph representation loop}
   	\label{fig:BGRloop}
	\end{figure}
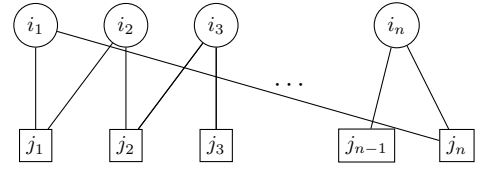
{Since $\frac{c_{i_{1}n}}{c_{i_{1}1}}\frac{c_{i_{2}1}}{c_{i_{2}2}} \ldots \frac{c_{i_{n-1}n-2}}{c_{i_{n-1}n-1}}\frac{c_{i_{n}n-1}}{c_{i_{n}n}}$ is a function of independent continuous random variables, it is also a continuous random variable itself. The probability that the product of independent continuous random variables equals a constant is zero, so we can conclude that a BGR has loops with probability zero. In other words, a BGR has no loop with probability one.}

\subsection{BGR Algorithm}
\label{app:algo}

Let $E$ be the set of edges, and $\hat{\mathcal{I}}$ and $\hat{\mathcal{J}}$ be the set of all user and provider nodes, respectively, present in the BGR. The demand of undecided users can be found using Algorithm 1. 
\begin{algorithm}[h!] 
\begin{algorithmic}[1] 
\STATE For each undecided node $i \in \hat{\mathcal{I}}$, calculate the checksum $P_{i} \leftarrow x_{i}^{*}$
\STATE For each provider $j \in \hat{\mathcal{J}}$ calculate the checksum $S_{j} \leftarrow Q_{j}-\sum_{i:(i,j) \notin \mathcal{G}}^{¥}q_{ij}^{*}$, $\forall j \in \hat{\mathcal{J}}$
\STATE For each $q_{ij}^{*}>0$, add edge (i,j) to the edge set E
\medskip 
\WHILE{E $\neq \emptyset$ } 
\STATE \textbf{find} a leaf node $l$ and associated edge $(i,j)$ \label{alg:leaf}
\IF {\text{the leaf node is a user node}}
\STATE $q_{ij}^{*} \leftarrow \frac{P_{i}}{c_{ij}}$ 
\ELSE 
\STATE $q_{ij}^{*} \leftarrow S_{i}$ 
\ENDIF 
\STATE $P_{i} \leftarrow \left(P_{i}-q_{ij}^{*}c _{ij}\right)$ and $S_{j} \leftarrow \left(S_{j}-q_{ij}^{*}\right)$
\STATE remove edge $(i,j)$
\ENDWHILE
\end{algorithmic} 
\caption{BGR decoding}\label{alg:BGR} 
\end{algorithm} 
 
We now give an informal description of an algorithm that finds the {optimal and unique} values of $\boldsymbol{q}_{i}^{*}$ for undecided users. Since BGR has no loops, it is a (unrooted) tree. Hence, we can run a simple iterative algorithm which removes a leaf node (node with a single incoming edge) and its associated edge at each iteration. We begin by finding a leaf node in the BGR. We then determine the demand of the edge associated to the leaf node either from BGR Property \ref{item:checkUser}) or \ref{item:checkNetwork}). Using this value we update the check-sum value of its parent node. Then we remove the leaf node and the associated edge. This completes one iteration. We repeat the process until there are no more edges in the graph. 

The key for Algorithm \ref{alg:BGR} to work is that the BGR has no loops, so a leaf node can always be found in line \ref{alg:leaf}. Notice that in the last iteration, there will be only one user node $i$ and one provider node $j$ left connected by an edge with value $q_{ij}^{*}$. The checksums for these two nodes are $P_{i}$ and $S_{j}$, which satisfy $P_{i}=S_{j}c_{ij}$ since $P_{i}=q_{ij}^{*}c _{ij}$ and $S_{j}=q_{ij}^{*}$. Upon completion of the algorithm, the demand of undecided users is uniquely defined.

\subsection{Proof of Theorem~\ref{thm:uniqueQ}}
\label{app:proofUniqueDemand}
Assume there exist two optimal demand vectors of SWO $\boldsymbol{q}^{*}$ and $\boldsymbol{q}'$. By Lemma \ref{lem:diffSupport}, $\boldsymbol{q}^{*}$ and $\boldsymbol{q}'$ have different support sets $\{\hat{\mathcal{J}}_{i}^{*}\}_{i=1}^{I}$ and $\{\hat{\mathcal{J}}_{i}'\}_{i=1}^{I}$ almost surely. Next, consider a convex combination demand vector $\boldsymbol{q}^{\lambda}=\lambda \boldsymbol{q}^{*}+\bar{\lambda}\boldsymbol{q}'$ where $\lambda \in (0,1)$, $\bar{\lambda}=1-\lambda$. Since $x_{i}^{*}=\sum_{j=1}^{J}q_{ij}^{*}c _{ij}=\sum_{j=1}^{J}q_{ij}'c _{ij}$, then $\sum_{j=1}^{J}q_{ij}^{\lambda}c_{ij}=\lambda\sum_{j=1}^{J}q_{ij}^{*}c _{ij}+\bar{\lambda}\sum_{j=1}^{J}q_{ij}'c _{ij}=x_{i}^{*}$, so it follows that $\boldsymbol{q}^{\lambda}$ is also a maximizing solution of SWO for any $\lambda \in (0,1)$. Then, the support set $\hat{\mathcal{J}}_{i}^{\lambda}(\boldsymbol{q}^{\lambda})=\{j\in\mathcal{J}: q^{\lambda}_{ij}=\lambda q_{ij}^{*} + \bar{\lambda}q_{ij}'>0\}$ for user $i$ is $\hat{\mathcal{J}}_{i}^{\lambda}=\hat{\mathcal{J}}_{i}^{*} \cup \hat{\mathcal{J}}_{i}' $, for all $\lambda \in (0,1)$. In particular, the support sets $\{\hat{\mathcal{J}}_{i}^{\lambda}\}_{i=1}^{I}$ are the same for all $\lambda \in (0,1)$, which is a contradiction to Lemma \ref{lem:diffSupport}. 

\section{Provider Competition Game Proofs}
\subsection{Proof of Lemma \ref{lem:uniqueEffectiveUPM}}
\label{app:proofUPM}

{It can be verified that UPM satisfies Slater's conditions \cite{Boyd:2004}. The necessary and sufficient KKT conditions for an optimal solution $\boldsymbol{q}_{i}\geq \boldsymbol{0}$ of UPM of user $i$ are as follows:
\begin{align}
u'_{i}(x_{i})c_{ij}\leq &p_{j},\; j \in \mathcal{J}  \label{eqn:UPMcondition1}\\ 
q_{ij}\left(u'_{i}(x_{i})c_{ij}- p_{j}\right)=&0,\; j \in \mathcal{J} \label{eqn:UPMcondition2} \\
\text{where }x_{i}=\sum_{j=1}^{J}q_{ij}c_{ij},\; \boldsymbol{q}_{i}\geq \boldsymbol{0} \label{eqn:UPMcondition3}
\end{align}
Expression (\ref{eqn:UPMcondition1}) implies $u_{i}'(x_{i})\leq \alpha$, where $\alpha=\min_{k\in \mathcal{J}}\frac{p_{k}}{c_{ik}}$. Based on the utility function of user $i$, there are two cases: $u_{i}'(0) < \alpha$ and $u_{i}'(0) \geq \alpha$. 

In the first case $u_{i}'(0)c_{ij}-p_{j}<0$, so $u_{i}'(x_{i})c_{ij}-p_{j}<0$ for all $x_{i}\geq 0$ and for all $j \in \mathcal{J}$. This is because the marginal utility $u'_{i}(\cdot)$ is a strictly decreasing function, by Assumption \ref{as:concaveUtility}. Thus, by \eqref{eqn:UPMcondition2}, $q_{ij}=0$ for all $j \in \mathcal{J}$. Therefore, $\boldsymbol{q_{i}}=\boldsymbol{0}$ and by \eqref{eqn:UPMcondition3} $x_{i}^{*}=0$. So, \eqref{eqn:UPMcondition1}-\eqref{eqn:UPMcondition3} hold for a unique value $x_{i}^{*}=0$.

In the second case, $u_{i}'(0) \geq \alpha$. Then, because $u'_{i}(\cdot)$ decreases to zero (Assumption \ref{as:concaveUtility}), there is a unique $\hat{x}_{i}\geq 0$ such that $u'_{i}(\hat{x}_{i})=\alpha$. We first check that there is a $\boldsymbol{q}_{i}$ such that equations \eqref{eqn:UPMcondition1}-\eqref{eqn:UPMcondition3} hold with $x_{i}=\hat{x}_{i}$. Equation \eqref{eqn:UPMcondition1} holds because $u_{i}'(\hat{x}_{i})=\alpha\leq{p_{j}}/{c_{ij}}$  for all $j \in \mathcal{J}$. Next, by \eqref{eqn:UPMcondition2}, for any $j$ such that ${p_{j}}/{c_{ij}}>\alpha=u_{i}'(\hat{x}_{i})$ we have $q_{ij}=0$. For any other $j$, ${p_{j}}/{c_{ij}}=\alpha=u_{i}'(\hat{x}_{i})$, thus, with respect to \eqref{eqn:UPMcondition2}, $q_{ij}$ can take any non-negative value. In particular, for the set $\{j\in \mathcal{J}:p_{j}/c_{ij}=\alpha\}$ we can choose $q_{ij}$'s so that \eqref{eqn:UPMcondition3} holds.

Note that $q_{ij}$ is positive only when $p_{j}/c_{ij}=\alpha$, which proves the last part of the lemma. It remains to show that $\hat{x}_{i}$ is the only value of $x_{i}$ for which a $\boldsymbol{q}_{i}$ satisfying \eqref{eqn:UPMcondition1}-\eqref{eqn:UPMcondition3} exists. For any $x_{i} < \hat{x}_{i}$, $u_{i}'(x_{i})> \alpha$ so \eqref{eqn:UPMcondition1} is violated for $j \in \argmin p_{k}/c_{ik}$. For any $x_{i}>\hat{x}_{i}$, $u_{i}(x_{i})<\alpha$ so $u_{i}(x_{i})c_{ij}-p_{j}<0$ for all $j \in \mathcal{J}$. Then, \eqref{eqn:UPMcondition2} implies that $q_{ij}=0$ for all $j \in \mathcal{J}$, meaning that $x_{i}=0$, which contradicts  $x_{i}>\hat{x}_{i}>0$. Therefore $\hat{x}_{i}$ is the unique searched value $x_{i}^{*}$. 
}

\subsection{Proof of Theorem \ref{thm:uniqueNash}}
\label{app:proofSPE}
Assume that the providers charge prices $\boldsymbol{p}=[p_{1}\ldots p_{J}]$ to the users. Then, each user faces a local maximization problem UPM$_{i}(\boldsymbol{p})$, as defined in (\ref{eqn:defUPM}).

{Equations (\ref{eqn:UPMcondition1}) - (\ref{eqn:UPMcondition3}), together with $\boldsymbol{q}_{i}\geq\boldsymbol{0}$ for all $ i\in \mathcal{I}$, are equivalent to equations (\ref{eqn:kkt1})-(\ref{eqn:kkt3}) and (\ref{eqn:kkt5}). Furthermore, under assumption \ref{as:RRA}, for $\boldsymbol{q}$ to be an SPE of the provider competition game, the demand to each provider must equal its supply, i.e., $\sum_{i=1}^{I}q_{ij}=Q_{j}$ for all $j \in \mathcal{J}$ (i.e., equations \eqref{eqn:kkt4}). Hence, the SPE is a price vector tuple $(\boldsymbol{p},\boldsymbol{q})$ that satisfies KKT conditions (\ref{eqn:kkt1})-(\ref{eqn:kkt5}). But, the KKT conditions (\ref{eqn:kkt1})-(\ref{eqn:kkt5}) are necessary and sufficient for any vector tuple $(\boldsymbol{p},\boldsymbol{q})$ to be the maximizing solution of SWO. Hence, we have established formal equivalence between the SPE of the provider competition game and the maximizing demand vector and Lagrangian multipliers of the SWO problem $(\boldsymbol{q}^{*},\boldsymbol{p}^{*}$). Hence, $(\boldsymbol{p}^{*}, \boldsymbol{q}^{*})$ forms the unique SPE of the provider competition game.}

\subsection{Proof of Lemma \ref{lem:nbrUndecided}} 
\label{app:proofNumberUndecided}
Given an arbitrary price vector $\boldsymbol{p}$, we can construct the corresponding BGR$(\boldsymbol{p})$ from the users' preference sets. First consider a BGR that is connected (single component). We start drawing the graph with a single undecided user node, and then add the provider nodes that are connected to this user node. There should be at least two such provider nodes. Then we add another undecided user node which shares one common provider node with the existing undecided user node. This new undecided user node will bring at least one new provider node into the graph, otherwise it leads to a loop in the graph. We repeat this process iteratively. Since the number of provider nodes is upper-bounded by $J$, the total number of undecided user nodes is upper-bounded by $J-1$. The ``$-1$'' is due to the fact that the first undecided user node is connected to (at least) two new provider nodes in the graph. If we consider a BGR with $b$ disconnected subgraphs, it can be shown that the total number of undecided users is bounded by $J-b$. 

\section{Primal Dual algorithm supporting proofs}

\subsection{Proof of Lemma~\ref{lem:decreasingLaSalleFunction} --- Proof that $\dot{V}\leq 0$: }

\label{app:proofVdotLessThanZero}

For the optimal demand vector $\boldsymbol{q}^{*}$ of the SWO problem and the associated Lagrange multipliers $\boldsymbol{p}^{*}$, we see that $f_{ij}^{*}=p_{j}^{*}$ whenever $q_{ij}^{*}>0$ from equation (\ref{eqn:kkt2}), and $f_{ij}^{*}\leq p_{j}^{*}$ when $q_{ij}^{*}=0$ from equations (\ref{eqn:kkt1}) and (\ref{eqn:kkt2}). Similarly, equation (\ref{eqn:kkt2}) ensures that $f_{ij}^{*}<p^{*}_{j}$ implies $q_{ij}^{*}=0$. This fact will be used shortly.  
For our La Salle function:
\begin{align*}
&\dot{V}=	\\ \stackrel{(a)}{=}
&\sum_{i,j}^{¥}(q_{ij}-q_{ij}^{*})(f_{ij}-p_{j})_{q_{ij}}^{+}+\sum_{j}^{¥}(p_{j}-p_{j}^{*})(\sum_{i}^{¥}q_{ij}-Q_{j})_{p_{j}}^{+} \nonumber \\
		\stackrel{(b)}{\leq}&\sum_{i,j}^{¥}(q_{ij}-q_{ij}^{*})(f_{ij}-p_{j})+\sum_{j}^{¥}(p_{j}-p_{j}^{*})(\sum_{i}^{¥}q_{ij}-Q_{j}) \nonumber\\
		\stackrel{(c)}{=}& \sum_{i}^{¥}(\sum_{j}(q_{ij}-q_{ij}^{*})(f_{ij}-f_{ij}^{*}))+\sum_{i,j}^{¥}(q_{ij}-q_{ij}^{*})(f_{ij}^{*}-p_{j}^{*}), 
\end{align*}
{where $(a)$ follows from the definition of $V$ and $\dot{V}$, $(b)$ can be readily verified by examining all the cases, and $(c)$ is obtained by some algebraic manipulation. The expression for $\dot{V}$ is now in such a form that we can prove $\dot{V}\leq 0$. 
}

First, the following is true for any two vectors $\boldsymbol{q}_{1}$ and $\boldsymbol{q}_{2}$ due to concavity (see, e.g. Section 3.1.3 of \cite{Boyd:2004}):
\begin{align}
& \nabla u_{i}^{T}(\boldsymbol{q}_{1})(\boldsymbol{q}_{2}-\boldsymbol{q}_{1})\geq u_{i}(\boldsymbol{q}_{2})- u_{i}(\boldsymbol{q}_{1}), \label{eqn:conc1} \\
& \nabla u_{i}^{T}(\boldsymbol{q}_{2})(\boldsymbol{q}_{1}-\boldsymbol{q}_{2})\geq u_{i}(\boldsymbol{q}_{1})- u_{i}(\boldsymbol{q}_{2}). \label{eqn:conc2}
\end{align}

Substituting $\boldsymbol{q}_{1}=\boldsymbol{q}_{i}(t)$ and $\boldsymbol{q}_{2}=\boldsymbol{q}_{i}^{*}$ in \eqref{eqn:conc1} and \eqref{eqn:conc2} gives:
\begin{align*}
\nabla u_{i}^{T}(\boldsymbol{q}_{i})(\boldsymbol{q}_{i}^{*}-\boldsymbol{q}_{i})\geq\nabla u_{i}^{*}(\boldsymbol{q}_{i})(\boldsymbol{q}_{i}^{*}-\boldsymbol{q}_{i}),
\end{align*}
which can be rewritten as 
\begin{align}
\sum_{j}(q_{ij}(t)-q_{ij}^{*})(f_{ij}(t)-f_{ij}^{*})\leq 0, \text{ for all } i\in \mathcal{I}, \label{eqn:invariant1}
\end{align}
which we recognize as one of the components of the first term in (\ref{eqn:VdotLessThenZero}). Considering a component of the second term in (\ref{eqn:VdotLessThenZero}), $(q_{ij}(t)-q_{ij}^{*})(f_{ij}^{*}-p_{j}^{*})$, we first recall from the KKT conditions \eqref{eqn:kkt2} that either $f_{ij}^{*}=p_{j}^{*}$, in which case $(q_{ij}-q_{ij}^{*})(f_{ij}^{*}-p_{j}^{*})=0$, or $f_{ij}^{*}<p_{j}^{*}$, in which case $q_{ij}^{*}=0$ so $(q_{ij}-q_{ij}^{*})(f_{ij}^{*}-p_{j}^{*})\leq 0$. Hence, 
\begin{align}
(q_{ij}(t)-q_{ij}^{*})(f_{ij}^{*}-p_{j}^{*})\leq 0, \text{ for all } i \in \mathcal{I},\; j\in \mathcal{J}. \label{eqn:invariant2}
\end{align}
which completes the proof that $\dot{V}\leq 0 $. 

\subsection{Proof of Lemma~\ref{lem:convergesToEquilMarginUtil} --- Proof that marginal utility on $V_{L}$ is the equilibrium marginal utility.}
\label{app:MarginalUtilEqualEquilUtil}
Recall that for user $i$, $f_{ij}(\boldsymbol{q}_{i})=\frac{\partial u_{i}(x_{i})}{\partial x_{i}}c_{ij}=u_{i}'({x}_{i})c_{ij}$ for all $j \in \mathcal{J}$, where ${x}_{i}(t)=\sum_{j}^{¥}q_{ij}(t)c_{ij}$ and $u_{i}'(x_{i}(t))$ is a \textbf{scalar} function of time. Hence, 
\[\nabla u_{i}^{T}(\boldsymbol{q}_{i})=u_{i}'(x_{i})[c_{i1} \cdots c_{iJ}]=u_{i}'(x_{i}) \boldsymbol{c}_{i}^{T}.\]

We now consider equations (\ref{eqn:sale1}), which can be rewritten as: 
\begin{align*}
\nabla u_{i}^{T}(\boldsymbol{q}_{i})(\boldsymbol{q}_{i}-\boldsymbol{q}_{i}^{*})=&\nabla u_{i}^{T}(\boldsymbol{q}_{i}^{*})(\boldsymbol{q}_{i}-\boldsymbol{q}_{i}^{*}), \text{ or,} \\
u_{i}'(x_{i}) \boldsymbol{c}_{i}^{T}(\boldsymbol{q}_{i}-\boldsymbol{q}_{i}^{*})=&u_{i}'(x_{i}^{*}) \boldsymbol{c}_{i}^{T}(\boldsymbol{q}_{i}-\boldsymbol{q}_{i}^{*}),
\end{align*}
which leads to $u_{i}'(x_{i}) \Delta{x}_{i}(t) =u_{i}'(x_{i}^{*}) \Delta{x}_{i} (t)$, where $\Delta{x}_{i}(t)=\boldsymbol{c}_{i}^{T}(\boldsymbol{q}_{i}(t)-\boldsymbol{q}_{i}^{*})$. Then, either $\Delta x_{i}(t)=0$, or $u_{i}'(x_{i}(t))=u_{i}'(x_{i}^{*})$. The necessary and sufficient condition for both is that that $x_{i}(t)=x_{i}^{*}$, for $\boldsymbol{q}_{i}(t)$, $i \in \mathcal{I}$ on the invariant set $V_{L}$. An immediate corollary is that $f_{ij}(t)=f_{ij}^{*}$ on $V_{L}$ for all $i\in \mathcal{I},j\in \mathcal{J}$.

The second part of the Lemma is simpler to prove. Equation (\ref{eqn:sale2}) gives $q_{ij}(t)(f_{ij}^{*}-p_{j}^{*})=q_{ij}^{*}(f_{ij}^{*}-p_{j}^{*})$, which is equal to zero by equation (\ref{eqn:kkt2}). This concludes the proof. 

\subsection{Proof of Lemma~\ref{lem:borderq} --- Proof that the support set of primal and dual variables does not change on the invariant set}
\label{app:supportConstantOnInvariant}
We prove that the set of positive demands does not change over time on the invariant set by contradiction. Suppose that $q_{ij}(t)>0$ but $q_{ij}(t+\tau)=0$ for all $\tau$ such that $0<\tau<\epsilon$, where $\epsilon$ is a small number. Then $\dot{q}_{ij}(t)<0$ but $\dot{q}_{ij}(t+\tau)=0$, so we have 
\begin{align*}
\lim_{\tau \rightarrow 0} \sum_{j \in  \mathcal{J}_{i}}^{¥}q_{ij}(t)c_{ij}-q_{ij}(t+\tau)c_{ij}<0
\end{align*}
On the other hand differentiating equation (\ref{eqn:sale3}) with respect to time yields $\sum_{j \in  \mathcal{J}_{i}}^{¥}q_{ij}(t)c_{ij}=0$ and $\sum_{j \in  \mathcal{J}_{i}}^{¥}q_{ij}(t+\tau)c_{ij}=0$, which is a contradiction. So a non-zero $q_{ij}$ stays non-zero. Suppose now that $q_{ij}(t)=0$ but $q_{ij}(t+\tau_{ij})>0$ for some $\tau_{ij}>0$. After time $\tau_{ij}$ variable $q_{ij}$ becomes non-zero and stays that way forever (according to the argument we made earlier in the proof). Then no $q_{ij}$ escapes from the boundary after time $t+\tau^{*}$, where $\tau^{*}=\max_{i,j}\tau_{ij}$. 

Now we prove the second part of the Lemma. Similar to the previous argument, once the $p_{j}(t)>0$ on the invariant set, $p_{j}(t+\tau)>0$ for all $\tau>0$. It remains to show that $p_{j}(t)> 0$ on the invariant set. But, if $p_{j}(t)=0$ were true, then $\dot{q}_{ij}>0$ would imply $\lim_{t \rightarrow \infty}q_{ij}(t)=\infty $, which would violate (\ref{eqn:sale3}).

\subsection{Proof of Lemma \ref{lem:BpequalsE} --- Proof that $\mathcal{B}\boldsymbol{p}=E$}
\label{app:stackMatricesPrice}
Since $C\boldsymbol{q}(t)=C\boldsymbol{q}^{*}$ is a constant on $V_{L}$, then taking the time derivative of both sides gives $C\dot{\boldsymbol{q}}(t)=0$. Then:
\begin{align*}
0= & \:C\dot{\boldsymbol{q}}(t)=C\hat{K}^{q} \left(\boldsymbol{f}^{*}-A\boldsymbol{p}(t)\right) \\
\Leftrightarrow\: &C\hat{K}^{q} A\boldsymbol{p}(t)=C\hat{K}^{q}  \boldsymbol{f}^{*}, 
\end{align*}
which means that $C\hat{K}^{q} A\boldsymbol{p}(t) $ equals a constant on the invariant set. Taking the time derivative one more time yields
\begin{align*}
0=C\hat{K}^{q} A\dot{\boldsymbol{p}(t)}=&C\hat{K}^{q} A K^{p}\left(A^{T}\boldsymbol{q}(t)-Q\right)\\
\Leftrightarrow\: C\hat{K}^{q} A K^{p}A^{T}\boldsymbol{q}(t)=&C\hat{K}^{q} A K^{p}Q=\text{constant}
\end{align*}
Repeating the derivative operation $2n+1 $ times will yield 
\begin{align*}
C\hat{K}^{q} A (K^{p}A^{T}\hat{K}^{q} A)^{n}\boldsymbol{p}=&C\hat{K}^{q} A (K^{p}A^{T}\hat{K}^{q} A)^{n-1}K^{p}A^{T}\hat{K}^{q}\boldsymbol{f}^{*}, \\
=C\hat{K}^{q} A D^{n}\boldsymbol{p}=&C\hat{K}^{q} A D^{n-1}K^{p}A^{T}\hat{K}^{q}\boldsymbol{f}^{*}=\text{const}
\end{align*}
where we defined $D=K^{p}A^{T}\hat{K}^{q} A$. Note that $D^{n}$ here stands for ``$D$ to the $n^{th}$ power''. 
Let $B=C\hat{K}^{q} A$, then we can write:
\begin{align*}
\left[\begin{array}{cccc}B &BD &\cdots &BD^{J-1}\end{array}\right]^{T}\boldsymbol{p}(t)=\mathcal{B}\boldsymbol{p}(t)=\text{constant}, 
\end{align*}
which completes the proof.

\subsection{Proof of Lemma~\ref{lem:rankJ} }
\label{app:proofMatrixMinusEVfullRank}
Consider matrix $D=K^{p}A^{T}\hat{K}^{q} A$. It can be verified that
\begin{align*}
D=\left[\begin{array}{ccc}k^{p}_{1}(\sum_{i \in\mathcal{I}_{1}}^{¥}k_{i1})  & \cdots & 0 \\\vdots & \ddots & \vdots \\0  & \cdots & k^{p}_{J}(\sum_{i \in \mathcal{I}_{J}}^{¥}k_{iJ})\end{array}\right] \\ \stackrel{(a)} {=}
k\left[\begin{array}{cccc}k^{p}_{1}|\mathcal{I}_{1}| & 0 & \cdots & 0 \\0 & k^{p}_{2}|\mathcal{I}_{2}| & \cdots & 0 \\\vdots & \vdots & \ddots & \vdots \\0 & 0 & \cdots & k^{p}_{J}|\mathcal{I}_{J}|\end{array}\right],
\end{align*}
where $\mathcal{I}_{j}=\{i \in \mathcal{I}: q_{ij}(t)>0\}$ is the set of users of provider $j$, where (a) follows from $k^{q}_{ij}=k$. Without loss of generality, take $k=1$. Now, assuming that $k^{p}_{j}$'s are not integer multiples of each other, we can see that at most one row of $D-\lambda \mathbb{I}$ can be all-zero vector for any eigenvalue $\lambda$ of $D$ (indeed, exactly one since $D$ is a diagonal matrix). So, we can always choose $J-1$ rows of $D-\lambda \mathbb{I}$ that are non-zero and also linearly independent. An example of $G'$ is:

\begin{align*}
\left[\begin{array}{cccc}
k^{p}_{1}|\mathcal{I}_{1}|-\lambda_{J} &  \cdots & 0& 0 \\\vdots &  \ddots & \vdots& \vdots \\0  & \cdots & k^{p}_{J-1}|\mathcal{I}_{J-1}|-\lambda_{J}& 0 \\ k^{q}_{i1}c_{i1}\mathbbm{1}_{\mathbb{R}^{+}}(q_{i1})  & \cdots & k^{q}_{iJ-1}c_{iJ-1}\mathbbm{1}_{\mathbb{R}^{+}}(q_{iJ-1})& k^{q}_{iJ}c_{iJ} \end{array} \right]
\end{align*}

{Without loss of generality, assume that we choose rows $1$ through $J-1$ (i.e., we consider the $J^{th}$ eigenvalue $\lambda_{J}=k^{p}_{J}|\mathcal{I}_{J}|$). Now, consider matrix $B'=C{K}^{q} A$. It can be verified that $B'_{i,j}=k^{q}_{ij}c_{i,j}$. The entries of matrix $B=C\hat{K}^{q} A$ are then $k^{q}_{ij}c_{ij}\mathbbm{1}_{(0,\infty)}(q_{ij}(t))$. Since $q_{iJ}(t)>0$ for at least one $i=\hat{\imath}$,  $k^{q}_{\hat{\imath}J}c_{\hat{\imath}J}\mathbbm{1}_{(0,\infty)}(q_{\hat{\imath}J}(t))=k^{q}_{\hat{\imath}J}c_{\hat{\imath}J}$, so the $\hat{\imath}$ row has an entry at the $J^{th}$ column. We then append this row to the $J-1$ rows we took from $D-\lambda_{J} \mathbbm{I}$, forming a submatrix of $G$, denoted $G'$. The submatrix $G'$ is then a lower triangular matrix.}
Lower triangular matrices have full rank, in this case $J$. Since the submatrix of $G$ has the full column rank, then $G$ also has the full column rank $J$. The procedure works in the same way for any eigenvalue $\lambda_{j}$ of $D$ we may choose, since we can always relabel the providers. This completes the proof.

\bibliography{IEEEabrv,biblio}

\bibliographystyle{IEEEtran}

\end{document}